\documentclass[12pt]{article}
\usepackage{amsfonts,array}
\usepackage{multicol,subcaption,makecell}
\usepackage{afterpage}
\usepackage{amsmath,bbm,dsfont,mathrsfs,mathtools,appendix}
\usepackage{amssymb}
\usepackage{graphicx}
\usepackage{fullpage}%
\usepackage{enumitem}
\usepackage{framed}
\usepackage[table]{xcolor}
\usepackage{authblk}

\usepackage[colorlinks=true,linkcolor=blue,citecolor=green,plainpages=false,pdfpagelabels]%
{hyperref}%
\hypersetup{
	colorlinks,
	linkcolor={blue!60!green},
	citecolor={green!50!yellow!75!black},
	urlcolor={blue!80!black},
	linktoc=all
}
\providecommand{\U}[1]{\protect\rule{.1in}{.1in}}

\newtheorem{theorem}{Theorem}

\newtheorem{corollary}[theorem]{Corollary}

\newtheorem{definition}[theorem]{Definition}

\newtheorem{lemma}[theorem]{Lemma}

\newtheorem{remark}[theorem]{Remark}

\newenvironment{proof}[1][Proof]{\noindent\textbf{#1.} }{\ \rule{0.5em}{0.5em}}

\newcommand{\reg}{\operatorname{reg}}
\newcommand{\cb}{\operatorname{cb}}
\newcommand{\cp}{\operatorname{cp}}
\newcommand{\im}{\operatorname{im}}
\newcommand{\scA}{\mathscr{A}}
\newcommand{\scB}{\mathscr{B}}

\newcommand{\Hil}{\mathcal{H}}

\newcommand{\M}[1]{\mathbb{M}_{#1}(\mathbb{C})}
\newcommand{\B}[1]{\mathcal{L}({#1})}
\newcommand{\iden}{\mathbbm{1}}

\newcommand{\id}{{\rm{id}}} %identity
\newcommand{\spec}{{\rm{spec}}}

\newcommand{\R}{\mathbbm{R}}
\newcommand{\C}{\mathbb{C}}

\newcommand{\N}{\mathbb{N}}

\newcommand{\cP}{\mathcal{P}}
\newcommand{\cQ}{\mathcal{Q}}

\newcommand{\cR}{\mathcal{R}}

\newcommand{\D}{\mathcal{D}}

\newcommand{\cN}{\mathcal{N}}
\newcommand{\cW}{\mathcal{W}}

\newcommand{\cD}{\mathcal{D}}

\newcommand{\sD}{\widetilde{D}}

\def\>{{\rangle}}
\def\<{{\langle}}
\newcommand{\be}{\begin{equation}}
	\newcommand{\ee}{\end{equation}}
\newcommand{\bea}{\begin{eqnarray}}
	\newcommand{\eea}{\end{eqnarray}}

\newcommand{\floor}[1]{\left\lfloor {#1} \right\rfloor}

\newcommand{\ket}[1]{|#1\rangle} %ket
\newcommand{\bra}[1]{\langle#1|} %bra
\newcommand{\Tr}{\mathrm{Tr}}
\newcommand{\norm}[1]{\left\lVert #1 \right\rVert}
\newcommand{\abs}[1]{\left\lvert #1 \right\rvert}
\newcommand{\comment}[1]{}

\newcommand{\supp}{\operatorname{supp}}

\newcommand{\State}[1]{\mathcal{D}({#1})}

\numberwithin{equation}{section}
\numberwithin{theorem}{section}

\usepackage{color}
\definecolor{colorthree}{rgb}{0.01,0.51,0.93}

\usepackage{setspace}

\usepackage{newunicodechar}
\newunicodechar{ρ}{\rho}
\newunicodechar{β}{\beta}
\newunicodechar{σ}{\sigma}
\newunicodechar{λ}{\lambda}
\newunicodechar{Λ}{\Lambda}
\newunicodechar{μ}{\mu}
\newunicodechar{ν}{\nu}
\newunicodechar{ψ}{\psi}
\newunicodechar{ϕ}{\varphi}
\newunicodechar{Φ}{\Phi}
\newunicodechar{φ}{\phi}
\newunicodechar{π}{\pi}
\newunicodechar{α}{\alpha}
\newunicodechar{ϵ}{\epsilon}
\newunicodechar{ε}{\varepsilon}
\newunicodechar{δ}{\delta}
\newunicodechar{ω}{\omega}
\newunicodechar{Δ}{\Delta}
\newunicodechar{Σ}{\Sigma}

\newcommand{\perr}{p_{\mathrm{err}}}
\newcommand{\n}{^{\otimes n}}

\title{Discriminating idempotent quantum channels}
\author[1]{Satvik Singh}
\author[2]{Bjarne Bergh}

\affil[1]{Department of Mathematics, Technical University of Munich, Garching, Germany}

\affil[1]{Munich Center for Quantum Science and Technology (MCQST), Munich, Germany}

\affil[2]{Department of Applied Mathematics and Theoretical Physics, University of Cambridge, United Kingdom}

\date{}

\begin{document}

\maketitle

\begin{abstract}
We study binary discrimination of idempotent quantum channels. When the two channels share a common full-rank invariant state, we show that a simple image inclusion condition completely determines the asymptotic behavior: when it holds, a broad family of channel divergences collapse to a closed-form, single-letter expression, regularization is unnecessary, and all error exponents (Stein/Chernoff/strong-converse) are explicitly computable with no adaptive advantage. Crucially, this yields the strong converse property for this channel family, which is an important open problem for general channels. When the inclusion fails, asymmetric exponents become infinite, implying perfect asymptotic discrimination. We apply the results to GNS-symmetric channels, showing discrimination rates for large number of self iterations converge exponentially fast to those of the corresponding idempotent peripheral projections. If the two channels do not share a common invariant state, we provide a single-letter converse bound on the regularized sandwiched R\'{e}nyi cb-divergence, which suffices to establish a strong converse upper bound on the Stein exponents. 
\end{abstract}

\tableofcontents

\section{Introduction}

Quantum channel discrimination---the task of identifying an unknown quantum channel as one of two candidates given black-box access---is a central problem in quantum information theory \cite{chiribella_memory_2008,duan_perfect_2009,hayashi_discrimination_2009,harrow_adaptive_2010}, with applications spanning device verification, quantum communication, and benchmarking of quantum hardware. Despite extensive study, fundamental aspects of the problem remain poorly understood. In particular, the regularised channel divergence governing the optimal asymptotic asymmetric error exponent is generally intractable to compute \cite{fang_chain_2020}, and it is unknown whether the strong-converse property holds for arbitrary channel pairs \cite{fawzi_defining_2021}. Additionally, essentially nothing is known in general regarding the optimal asymptotic symmetric error exponents.

In this work, we resolve many of these difficulties for the class of \emph{idempotent} quantum channels with full rank invariant states, i.e. channels $\cP : \B{\Hil} \to \B{\Hil}$ satisfying $\cP \circ \cP = \cP$ and such that $\cP(\iden_{\Hil})$ has full rank. This class includes conditional expectations onto unital $*-$subalgebras $\scA\subseteq \B{\Hil}$, replacer (constant) channels, projections onto decoherence-free subspaces and noiseless subsystems, and crucially, the asymptotic limits of quantum Markov semigroups satisfying suitable reversibility (or detailed balance) with respect to full-rank invariant states \cite{Gao2022sobolev}. Our main results show that if two such channels $\cP$ and $\cQ$ satisfy a natural image inclusion condition $\im (\cQ^*) \subseteq \im (\cP^*)$, and share a common invariant state $\tau = \cP(\tau)=\cQ(\tau)$, then (c.f. Theorems~\ref{theorem:IDvQ}, \ref{theorem:PvsQ:common} and Corollaries~\ref{corollary:IdvQ}, \ref{corollary:PvQ:common-1}, \ref{corollary:PvQ:common-2}):
\begin{enumerate}[noitemsep]
    \item all quantum channel divergences of interest collapse to a single closed-form expression,
    \item regularisation is unnecessary,
    \item all asymptotic error exponents are explicitly computable
    \item the strong converse property holds, and
    \item adaptive discrimination strategies offer no advantage over parallel ones.
\end{enumerate}

Additionally, whenever this inclusion condition is not satisfied, all the asymmetric error exponents are infinite (c.f. Lemma~\ref{lemma:D(PQ)=inf}). When the two channels do not share a common invariant state, such a drastic collapse of channel divergences does not occur. Nevertheless, we still provide a single-letter converse bound on the regularized sandwiched R\'{e}nyi cb-divergence (c.f. Theorem~\ref{theorem:PvsQ}), which suffices to establish a strong converse upper bound on the Stein exponents (c.f. Corollary~\ref{corollary:PvsQ}).  

Our paper is structured as follows.
Section \ref{sec:preliminaries} starts with an introduction of all the entropic quantities we use, as well as a summary of quantum hypothesis testing and previous results on quantum channel discrimination. Section \ref{sec:main_results} contains our main theorems with their proofs. Section \ref{sec:application_gns} presents an application of our results to the class of GNS symmetric channels.

\section{Preliminaries}\label{sec:preliminaries}
We denote quantum systems and their associated complex Hilbert spaces\footnote{All Hilbert space are assumed to be finite-dimensional throughout this work.} by capital letters $A,B,C$, with corresponding dimensions $d_A, d_B$ and $d_C$, respectively. The space of linear operators acting on a Hilbert space $\Hil$ is denoted by $\B{\Hil}$ and the convex set of quantum states (i.e.~positive semi-definite operators in $\B{\Hil}$ with unit trace) is denoted by $\cD (\Hil)$. For a unit vector $\ket{\psi}\in \Hil$, we denote the pure state $\ket{\psi}\bra{\psi}\in \cD (\Hil)$ by $\psi$. The identity operator in $\B{\Hil}$ is denoted by $\iden_{\Hil}$. The trace-norm, Schatten $\alpha-$norm for $\alpha\in (1,\infty)$, and the operator norm on $\B{\Hil}$ are denoted by $\norm{\cdot}_1, \norm{\cdot}_{\alpha},$ and $\norm{\cdot}_{\infty}$, respectively \cite{bhatia_matrix_1997}.

A quantum channel $\Phi:\B{A}\to \B{B}$ is a linear, completely positive, and trace-preserving map. The adjoint $\Phi^*: \B{B}\to \B{A}$ of a quantum channel $\Phi : \B{A}\to \B{B}$ is a linear, unital, and completely positive map defined through the following relation: 
\begin{equation}
    \forall X \in \B{A}, \, \forall Y \in \B{B} : \quad \Tr(Y \Phi(X))= \Tr (\Phi^*(Y) X).
\end{equation}

\subsection{Entropies and Divergences}
\begin{definition}\label{def:divergence}
    Let $\rho\in \State{\Hil}$ be a state and $\sigma\in \B{\Hil}$ be a positive semi-definite operator.
    \begin{itemize}
        \item The \emph{quantum (Umegaki) relative entropy} between $\rho$ and $\sigma$ is defined as \cite{umegaki_conditional_1962}
\begin{align*}
    D(\rho \Vert \sigma) := \begin{cases}
        \Tr \rho (\log \rho -\log \sigma) \quad &\text{if } \supp \rho \subseteq \supp \sigma \\
        + \infty  &\text{otherwise}
    \end{cases}
\end{align*}
        \item The $\alpha$-\emph{Petz R\'enyi relative entropy} between $\rho$ and $\sigma$ with $\alpha\in (0,1)\cup (1,\infty)$ is defined as \cite{petz_quasi-entropies_1986, Petz1985relative}
\begin{align*}
    D_{\alpha}(\rho \Vert \sigma) := \begin{cases}
        \frac{1}{\alpha-1} \log \Tr \left[ \rho^{\alpha} \sigma^{1-\alpha}  \right] \quad &\begin{cases}
            \text{if } \alpha\in (0,1) \text{ and } \rho\sigma\neq 0 \\
            \text{or } \alpha\in (1,\infty) \text{ and } \supp \rho \subseteq \supp \sigma
        \end{cases}  \\
        + \infty &\text{otherwise}
    \end{cases}
\end{align*}
        \item The $\alpha$-\emph{sandwiched R\'enyi relative entropy} between $\rho$ and $\sigma$ with $\alpha\in (0,1)\cup (1,\infty)$ is defined as \cite{muller-lennert_quantum_2013, wilde_strong_2014}
\begin{align*}
    \widetilde{D}_{\alpha}(\rho \Vert \sigma) := \begin{cases}
        \frac{1}{\alpha-1} \log \Tr \left[ \left(\sigma^{\frac{1-\alpha}{2\alpha}} \rho \sigma^{\frac{1-\alpha}{2\alpha}} \right)^{\alpha}  \right] \quad &\begin{cases}
            \text{if } \alpha\in (0,1) \text{ and } \rho\sigma \neq 0 \\
            \text{or } \alpha\in (1,\infty) \text{ and } \supp \rho \subseteq \supp \sigma
        \end{cases} \\
        + \infty &\text{otherwise}
    \end{cases}
\end{align*}
    \item The \emph{max-relative entropy} between $\rho$ and $\sigma$ is defined as \cite{datta_max-_2009}
\begin{align*}
    D_{\max}(\rho \Vert \sigma) := \begin{cases}
        \log \norm{\sigma^{-1/2} \rho \sigma^{-1/2} }_{\infty} \quad &\text{if } \supp \rho \subseteq \supp \sigma \\
        + \infty &\text{otherwise}
    \end{cases}
\end{align*}
    \item The \emph{min-relative entropy} between $\rho$ and $\sigma$ is defined as \cite{datta_min-_2009}
\begin{align*}
    D_{\min}(\rho \Vert \sigma) := \begin{cases}
        -\log  \Tr (\Pi_{\rho} \sigma) \quad &\text{if } \rho\sigma\neq0 \\
        + \infty &\text{otherwise},
    \end{cases}
\end{align*}
    where $\Pi_{\rho}$ denotes the orthogonal projection onto support of $\rho$.
    \end{itemize}
\end{definition}
These divergences are ordered as follows: for $0<\alpha\leq \beta <1$ and $1< \alpha'\leq \beta' < \infty$, 
\begin{equation}
    D_{\min}(\cdot \| \cdot) \leq D_{\alpha}(\cdot \| \cdot) \leq D_{\beta}(\cdot \| \cdot) \leq D (\cdot \| \cdot)\leq \sD_{\alpha'}(\cdot \| \cdot) \leq \sD_{\beta'}(\cdot \| \cdot) \leq D_{\max}(\cdot \| \cdot).
\end{equation}
Moreover, $\lim_{α \to 0} D_α(\cdot\|\cdot) = D_{\min}(\cdot\|\cdot), \lim_{\alpha\to 1} D_{\alpha}(\cdot \| \cdot)=D(\cdot \| \cdot)= \lim_{\alpha\to 1} \sD_{\alpha}(\cdot \| \cdot)$, and \\ $\lim_{α \to \infty} \widetilde{D}_α(\cdot \| \cdot) = D_{\max}(\cdot \| \cdot)$ (see e.g. \cite{khatriPrinciplesQuantumCommunication2024}).

For any divergence $\mathbf{D}$ from Definition~\ref{def:divergence}, and two channels $\Phi, \Psi: \B{A}\to \B{B}$, we also define the associated channel divergence as
\begin{equation}\label{eq:Ddef}
    \mathbf{D}(\Phi \Vert \Psi) \coloneqq \sup_{\nu \in \D(A)} \mathbf{D}(\Phi(\nu)\|\Psi(\nu))\,.
\end{equation}
Using this, we can define the completely bounded (or stabilized) and regularized channel divergences as
\begin{align}
    \mathbf{D}^{\cb}(\Phi\|\Psi) &\coloneqq \mathbf{D}(\id_{A'} \otimes \Phi\|\id_{A'} \otimes \Psi) \label{eq:Dcb-def} \\
    \mathbf{D}^{\cb,\reg}(\Psi\|\Phi) &\coloneqq  \lim_{n \to \infty}  {\frac 1 n} \mathbf{D}^{\cb}(\Phi^{\otimes n} \|\Psi^{\otimes n}) = \lim_{n \to \infty}  {\frac 1 n} \mathbf{D}((\id_{{A'}^n} \otimes \Phi^{\otimes n}) \|(\id_{{A'}^n} \otimes \Psi^{\otimes n})), \label{eq:Dcbreg-def}
\end{align}
where $A'\cong A$ and $\id_{A'}$ denotes the identity channel on $\B{A'}$.

\subsection{Quantum Hypothesis Testing}
Hypothesis testing deals with the question of asserting the truth of one of multiple possible hypotheses given some object or data. In the simplest case, the given object is a quantum state and the task is to identify which of two (fully specified) options it is. Generically, if there are two hypotheses, there are two ways of making an error (mistaking the first for the second, or mistaking the second for the first). Given a decision strategy, we will call the corresponding probabilities of making such an error the type-I and type-II error probabilities. In the simple case where the task is to identify a given state as either $ρ\in \cD(\Hil)$ or $σ\in \cD (\Hil)$, these two probabilities are given by
\begin{align}
α &\coloneqq \mathbb{P}[\text{we claim the state is $σ$ } | \text{ state is actually $ρ$}]  \qquad \qquad (\text{type-I error),}\\
β &\coloneqq \mathbb{P}[\text{we claim the state is $ρ$ } | \text{ state is actually $σ$}]  \qquad \qquad \text{(type-II error).}
\end{align}

The most general way to arrive at such a decision in the case of distinguishing $ρ$ and $σ$ is by performing a binary (i.e., a two-outcome) POVM measurement, which is fully specified by one of its elements $0 \leq M \leq \iden_{\Hil}$, and we use the convention that an outcome corresponding to the measurement $M$ (resp.~$\iden_{\Hil} - M$) leads to the inference that the state is $\rho$ (resp.~$\sigma$). In that case the corresponding error probabilities are given by
\begin{align}
α &= \Tr((\iden_{\Hil} - M) ρ) = 1 - \Tr(Mρ)\\
β &= \Tr(M σ) \,.
\end{align}

We can then optimize over $M$ to find the optimal measurement in a certain sense. In the so-called asymmetric setting of hypothesis testing the aim is to minimize the type-II error probability given the constraint that the type-I error is below a chosen threshold $ε \in [0,1]$:
\begin{equation}
    \min_{M:\, 0 \leq M \leq 1_{\mathcal{H}}}\{\Tr(M σ) \, | \, 1 - \Tr(M ρ) \leq ε\} = \min_{\substack{M:\, 0 \leq M \leq 1_{\mathcal{H}} \\ \Tr(M \rho) \geq 1- \varepsilon}} \Tr(M σ).
\end{equation}

The negative logarithm of this is called the \emph{hypothesis testing relative entropy} \cite{wang_one-shot_2012}:
\begin{equation}\label{eq:hypothesis_testing_relative_entropy}
    D_H^{\varepsilon}(\rho\|\sigma) \coloneqq - \log \left[\min_{\substack{M:\, 0 \leq M \leq \iden_{\mathcal{H}} \\ \Tr(M \rho) \geq 1- \varepsilon}} \Tr(M σ)\right] = \max_{\substack{M:\, 0 \leq M \leq \iden_{\mathcal{H}} \\ \Tr(M \rho) \geq 1- \varepsilon}} - \log(\Tr(M σ))\,.
\end{equation}
It has been given this name, since it shares some properties with the quantum relative entropy, in particular, it satisfies the data-processing inequality \cite{wang_one-shot_2012}. 
Famously, the hypothesis testing relative entropy satisfies Stein's Lemma \cite{hiai_proper_1991,
ogawa_strong_2000}:
\begin{equation}
\forall \epsilon\in (0,1): \qquad     \lim_{n \to \infty} {\frac 1 n} D_H^ε(ρ^{\otimes n}\|σ^{\otimes n}) = D(ρ\|σ) ,
\end{equation}
meaning that for any fixed type-I error constraint $\epsilon$, the type-II error probability decays exponentially in the number of samples $n$, and the optimal error exponent is given by the quantum relative entropy $D(\rho \| \sigma)$. Additionally, the following one-shot bounds hold for all $α \in (0, 1)$, $α' >1$ and $ε \in (0,1)$  (see e.g. \cite{khatriPrinciplesQuantumCommunication2024}):
\begin{equation}\label{eq:hypothesis_testing_to_renyi}
     D_α(ρ\|σ) + {α \over α - 1} \log\left(1 \over ε\right) \leq D_H^ε(ρ\|σ) \leq \widetilde{D}_{α'}(ρ\|σ) + {α' \over α' - 1} \log\left(1 \over 1- ε\right)\,.
\end{equation}

In the symmetric setting of hypothesis testing, we assume that both hypotheses can occur with equal (prior) probability and then try to minimize the expected decision error probability (one can show that the asymptotic decay rate of this error (with $n$) is actually independent of the exact form of the prior as long as the prior involves both hypotheses with non-zero probability and stays constant in $n$). For state discrimination, this corresponds to the following expression:
\begin{align}
    \perr(ρ, σ) \coloneqq \frac{1}{2} \inf_{0\leq M\leq \iden_{\Hil}} (\alpha + \beta) &= {1 \over 2} \inf_{0 \leq M \leq \iden_{\Hil}} \big( \Tr((\iden_{\Hil} - M) \rho) + \Tr( M\sigma ) \big) \\ 
    &= {1 \over 2} \left(1 - {1 \over 2} \norm{ρ - σ}_1 \right)
\end{align}
where the last equality is known as the Holevo-Helström theorem \cite{helstrom_quantum_1969, holevo_analogue_1972}. This error probability also decays exponentially in $n$ when one considers multiple copies of $ρ$ and $σ$, and the optimal decay rate is given by the quantum Chernoff divergence \cite{audenaert_discriminating_2007, nussbaum_chernoff_2009}:
\begin{equation}
    \lim_{n \to \infty} - {1 \over n} \log \perr(ρ, σ) = \sup_{α \in (0,1)} (1 - α) D_α(ρ\|σ) \eqqcolon \xi(ρ\|σ)\,.
\end{equation}

\subsection{Quantum Channel Discrimination}
Quantum Channel Discrimination is the natural generalization of quantum state discrimination to the setting where the objects to be identified are quantum channels. Concretely, the task of \emph{binary quantum channel discrimination} is as follows: Given an unknown quantum channel as a black box and the side information that it is one of two possible channels, determine the channel's identity \cite{chiribella_memory_2008,duan_perfect_2009,hayashi_discrimination_2009,harrow_adaptive_2010}. This is again a fundamental building block in quantum information theory and has various applications in testing and verification of quantum devices and communication links. Nonetheless, many fundamental aspects of quantum channel discrimination are still not fully understood.
\subsubsection{Adaptive and Parallel Strategies}
Quantum channel discrimination is substantially more complex than quantum state discrimination due to the fact that one has the additional freedom of choosing input states, and if multiple uses of the channel are allowed, this choice can involve adaptivity and/or entanglement. Characterizing the optimal discrimination strategy and choice of input state(s) can thus be very tricky. For quantum channel discrimination, one often considers two classes of discrimination strategies: general \emph{adaptive} strategies, or strategies without adaptivity -- the so-called \emph{parallel} strategies. 

\begin{itemize}
    \item In a parallel strategy the input state to all the channels is fixed at the beginning and does not (adaptively) depend on channel outputs, however it can still involve states that are arbitrarily entangled. A parallel strategy is depicted in \autoref{fig:simple_parallel_strategy}, and it is specified by a joint input state $\nu_{R A^n} \in \mathcal{D}(RA^n)$. On the output side, we are then left with the state $(\id_R \otimes \Phi^{\otimes n})(\nu_{RA^n})$ or $(\id_R \otimes \Psi^{\otimes n})(\nu_{RA^n})$ depending on whether the given channel is $\Phi$ or $\Psi$. 
    \item In an adaptive strategy, the inputs are allowed to depend on previous channel inputs and also a (in principle unbounded) reference system which can be though of as a `memory' for the adaptive computation. See \autoref{fig:simple_adaptive_strategy} for a depiction of an adaptive strategy. We call an adaptive strategy $\Theta$ that uses the unknown channel $n$ times an \emph{adaptive $n$-strategy}, and label with $\Theta[\Phi]$ (resp. $\Theta[\Psi]$) the joint state of the output and memory registers after the $n$th use of the channel $\Phi$ (resp. $\Psi$) (i.e.\ the joint state of all registers before the measurement $M$ in \autoref{fig:simple_adaptive_strategy}). It is easy to see from the figure that every adaptive $n$-strategy can be decomposed as $\Theta[\Phi] = (\id_R \otimes \Phi)(\Lambda_n(\Theta_{n - 1}[\Phi]))$, where $\Theta_{n-1}$ is an adaptive $(n-1)$-strategy, and $Λ_n$ being the last input-preparation channel of the strategy.
\end{itemize}

\begin{figure}[htb!]
    \centering
    \includegraphics[width=.4\linewidth]{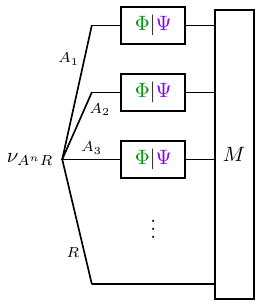}
    \caption{Illustration of a general parallel strategy with $n$ uses of the black-box channel, and a joint binary POVM measurement $\{M, \iden - M\}$ at the end. The strategy is fully specified by a joint input state $\nu$.}
\label{fig:simple_parallel_strategy}
\end{figure}

\begin{figure}[htb!]
    \centering
    \includegraphics[width=\linewidth]{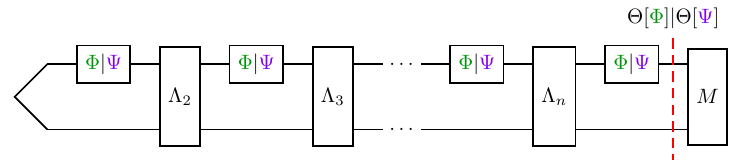}
    \caption{Illustration of a general adaptive strategy $\Theta$ with $n$ uses of the black-box channel, and a joint binary POVM measurement $\{M, \iden - M\}$ at the end. The top row uses the given black-boxes while the bottom row depicts the memory system $R$. The strategy is fully specified by an input state $\nu$ and intermediate input-preparation channels $\Lambda_2, \ldots ,\Lambda_n$. }
    \label{fig:simple_adaptive_strategy}
\end{figure}

It is not hard to see that every parallel strategy can be expressed as an adaptive strategy, however the reverse is not true. 
\subsubsection{Asymmetric Error Exponents}
The asymmetric error exponent is the only asymptotic error exponent for quantum channel discrimination for which one has been able to prove an entropic expression (although still involving a regularization limit). Concretely, for a parallel strategy as depicted in \autoref{fig:simple_parallel_strategy}, the $n$-shot type-II error exponent is characterized by the expression 
\begin{equation}
    \frac{1}{n} D_H^ε((\id_R \otimes \Phi^{\otimes n})(\nu_{RA^n})\|(\id_R \otimes \Psi^{\otimes n})(\nu_{RA^n}))\,,
\end{equation}
and for the optimal type-II parallel error exponent one would optimize over all input states:
\begin{equation}
    e_P(\Phi, \Psi, n , ε) := \sup_{\nu_{RA^n}} \frac{1}{n} D_H^ε((\id_R \otimes \Phi^{\otimes n})(\nu_{RA^n})\|(\id_R \otimes \Psi^{\otimes n})(\nu_{RA^n})). 
\end{equation}

One can show that the optimal asymptotic ($n\to \infty)$ type-II parallel error exponent as the type-I error goes to zero ($\epsilon\to 0$) is precisely the regularized cb-channel divergence \cite{wang_resource_2019}:
\begin{align}
    \lim_{\epsilon\to 0}\lim_{n\to \infty} e_P(\Phi, \Psi, n , ε) &=
    \lim_{ε \to 0} \lim_{n \to \infty} \sup_{\nu_{RA^n}} \frac{1}{n} D_H^ε((\id_R \otimes \Phi^{\otimes n})(\nu_{RA^n})\|(\id_R \otimes \Psi^{\otimes n})(\nu_{RA^n})) \\ &= \lim_{n \to \infty}\sup_{\nu_{RA^n}} {1 \over n} D((\id_R \otimes \Phi^{\otimes n})(\nu_{RA^n})\|(\id_R \otimes \Psi^{\otimes n})(\nu_{RA^n}))  \\ 
    &= D^{\cb, \reg}(\Phi\|\Psi). \label{eq:channel-steins-lemma}
\end{align}
While the right-hand side is an entropic expression only involving the two channels, there is no general way to compute it (in fact it is not even known whether it is theoretically computable), and explicit values are only known for very small classes of channels such as replacer channels, for which one can show that the regularization limit is not necessary \cite{wilde_amortized_2020}. 

Interestingly, even though we have no good way to compute this expression, it is possible to show that it is equal to the corresponding error exponent for adaptive discrimination strategies defined as:
\begin{equation}
    e_A (\Phi, \Psi, n, \varepsilon) := \sup_{\Theta \, \text{adaptive $n$-strategy}} \frac{1}{n} D_H^{\varepsilon} ( \Theta (\Phi) \| \Theta (\Psi) ),
\end{equation}
where for a given adaptive $n$-strategy $\Theta$, $\Theta[\cdot]$ denotes the joint state of all registers before the measurement $M$ in \autoref{fig:simple_adaptive_strategy} (see e.g.\ \cite{bergh_discrimination_2025} for an explicit expression of $e_A$ as an optimization problem involving the preparation maps and the input state that specifies $\Theta$). More precisely, the chain rule for the quantum relative entropy \cite{fang_chain_2020} implies that
\begin{equation}\label{eq:adaptive-channel-steins-lemma}
\lim_{ε \to 0} \lim_{n \to \infty} e_A(\Phi, \Psi, n , ε) = D^{\cb, \reg}(\Phi\|\Psi) = \lim_{\epsilon\to 0}\lim_{n\to \infty} e_P(\Phi, \Psi, n , ε).
\end{equation}
Therefore, adaptive strategies do not offer an asymptotically better asymmetric error exponent than parallel strategies. Besides these asymptotic expressions, the bounds from \eqref{eq:hypothesis_testing_to_renyi} can be turned into the following one-shot channel discrimination bounds that hold for all $α \in (0,1)$, $α' \in (1, \infty)$, and $ε \in (0,1)$ \cite{wang_resource_2019, wilde_amortized_2020, fawzi_defining_2021}:
\begin{align}
   e_P(\Phi, \Psi, n, ε)  & \geq  {1 \over n} D_α^{\cb}(\Phi\n \|\Psi\n) + {α \over n(α - 1)} \log\left(1 \over ε \right), \\
    e_A(\Phi, \Psi, n, ε) &\leq \sD^{\cb, \reg}_{\alpha}(\Phi \| \Psi) + { α' \over n (α' - 1)} \log\left(1 \over 1 - ε\right).
\end{align}
In particular, we get the following bounds:
\begin{equation}\label{eq:channel_discrimination_n_shot_exponents_renyi}
\begin{split}
    D^{\cb}_α(\Phi \|\Psi) + {α \over n(α - 1)} \log \left(1 \over ε\right) &\leq e_P(\Phi, \Psi, n, ε) \\&\leq e_A(\Phi, \Psi, n, ε) \\ &\leq \widetilde{D}^{\cb, \reg}_{α'}(\Phi \| \Psi) + { α' \over n (α' - 1)} \log\left(1 \over 1 - ε\right).
\end{split}
\end{equation}

\subsubsection{Strong Converse Exponents}
While the Channel Stein's Lemma \eqref{eq:adaptive-channel-steins-lemma} gives the optimal achievable error exponent for the channel discrimination task, it has currently only been proven with a weak converse, meaning that there is no discrimination strategy that achieves a better asymptotic type-II error exponent with asymptotically vanishing type-I error. Many other (similar) information theoretic tasks also satisfy the strong-converse property, which (for this discrimination problem) means that the optimal asymptotic type-II error exponent does not increase even if one allows a non-vanishing type-I error probability strictly bounded away from $1$. Concretely, this would mean that one can replace the limit $ε \to 0$ with a limit $ε \to 1$ in \eqref{eq:channel-steins-lemma} and the right-hand side stays unchanged (in fact, this would imply that one need not take a limit in $ε$ at all and the expression is the same for all $ε \in (0,1)$). 

The works \cite{wilde_amortized_2020, fawzi_defining_2021} have established that the optimal asymptotic type-II error exponent (for both adaptive and parallel strategies) with the a type-I error probability just bounded away from $1$ is bounded from above by 
\begin{align}
\lim_{ε \to 1} &\limsup_{n \to \infty} e_{A/P}(\Phi, \Psi, n, ε) \\ &\leq 
\lim_{α \to 1^+} \widetilde{D}^{\cb, \reg}_α(\Phi\|\Psi) = \lim_{α \to 1} \lim_{n \to \infty} \sup_{\nu_{RA^n}} {1 \over n} \widetilde{D}_α((\id_R \otimes \Phi^{\otimes n})(\nu_{RA^n})\|(\id_R \otimes \Psi^{\otimes n})(\nu_{RA^n})).
\end{align}
If one can exchange the limits $α \to 1$ and $n \to \infty$ in the last expression, one would be able to show that the right-hand side is equal to $D^{\cb,\reg}(\Phi\|\Psi)$, thus establishing the strong converse property for quantum channel discrimination. However, actually proving that these limits can be exchanged has not been possible yet.
While this was claimed in \cite{Fang_2025towards}, the argument turned out to be incorrect, and this led to the discovery of errors in the proof of the generalized quantum Stein's lemma \cite{berta_gap_2023}. While we now have correct proofs of the generalized quantum Stein's lemma \cite{hayashiGeneralizedQuantumSteins2025, Lami2025generalized} the new approaches use different techniques that do not seem to be directly applicable for the argument proposed in \cite{Fang_2025towards} and so the question for the strong-converse in channel discrimination remains open. Note that the strong-converse property does hold for channels for which one can show that the regularization is not necessary, i.e.\ for which $\widetilde{D}^{\cb}_α(\Phi^{\otimes n}\|\Psi^{\otimes n}) = n \widetilde{D}^{\cb}_α(\Phi\|\Psi)$.

Another way to state the strong converse property is that for any type-II error exponent larger than the optimal achievable exponent with vanishing type-I error (for channel discrimination this is given by $D^{\cb, \reg}(\Phi\|\Psi)$), the type-I error has to approach one in the asymptotic limit. One can then ask the question how fast this type-I error converges to 1. This convergence is typically exponential (in $n$), and the corresponding rate (or exponent) is called the strong-converse exponent. Formally, for $r > 0$ one writes
\begin{align}
    \overline{e}^{\mathrm{sc}}_{A/P}(\Phi, \Psi, r) &:= \limsup_{n \to \infty} \inf_{\text{A/P strategy}}  \left\{ - {1 \over n} \log(1 - α_n) \, \big|\, β_n \leq 2^{-rn}\right\}, \\
    \underline{e}^{\mathrm{sc}}_{A/P}(\Phi, \Psi, r) &:= \liminf_{n \to \infty} \inf_{\text{A/P strategy}}  \left\{ - {1 \over n} \log(1 - α_n) \, \big| \, β_n \leq 2^{-rn}\right\},
\end{align}
where $A/P$ indicates the use of adaptive or parallel strategies, and $α_n$/$β_n$ are the type-I/II error probabilities of the chosen strategy with $n$ channel uses. In \cite{fawzi_defining_2021}, it was shown that these exponents are equal for adaptive and parallel strategies and are given by 
\begin{equation}\label{eq:fawzi-sc}
    e_A^{\mathrm{sc}}(\Phi, \Psi, r) = e_P^{\mathrm{sc}}(\Phi, \Psi, r) = \sup_{α > 1} {α - 1 \over α}(r - \widetilde{D}_α^{\cb, \reg}\big(\Phi\|\Psi)\big) ,
\end{equation}
where additionally\footnote{The papers \cite{fawzi_defining_2021, wilde_amortized_2020} only consider $\overline{e}^{\mathrm{sc}}_{A/P}$, however it is easy to see that \cite[Prop. 20]{wilde_amortized_2020} also works for $\underline{e}^{\mathrm{sc}}_{A}$, which together with the results from \cite{fawzi_defining_2021} implies that the two limits are equal.} $\overline{e}^{\mathrm{sc}}_{A/P} = \underline{e}^{\mathrm{sc}}_{A/P} \eqqcolon e^{\mathrm{sc}}_{A/P}$. 

Note that the strong converse exponent is non-zero if and only if $r> \inf_{\alpha>1} \sD^{\cb, \reg}_{\alpha}( \Phi \Vert \Psi)$, and it is not known whether 
\begin{equation}
    \inf_{\alpha>1} \sD^{\cb , \reg}_{\alpha} (\Phi \| \Psi) \stackrel{?}{=} D^{\cb, \reg}(\Phi \| \Psi),
\end{equation}
which corresponds precisely to the problem of establishing whether the (exponential) strong converse property holds in quantum channel discrimination \cite{fawzi_defining_2021, Fang_2025towards}.

\subsubsection{Symmetric Error Exponents}

In the symmetric setting with equal (prior) probability, if one is allowed to use the channel only once, we can easily show that the smallest expected error probability is given by

\begin{align}
    \perr(\Phi, \Psi) &\coloneqq \inf_{\nu \in \mathcal{D}(RA)} \inf_{0 \leq M \leq \iden_{\Hil}} {1 \over 2} \big[ \Tr((\iden_{\Hil} - M)\Phi(\nu)) + \Tr(M\Psi(\nu)) \big] \\ 
    &= {1 \over 2}\left(1 - {1 \over 2} \norm{\Phi - \Psi}_{\diamond} \right),
\end{align}
where $\norm{\cdot}_{\diamond}$ is the diamond norm \cite[Chapter 3]{watrous_theory_2018}.
Similarly, for multiple uses of the channel one can again think about adaptive and parallel strategies, and arrive at the following expressions for the error probabilities:
\begin{align}
\perr^P(\Phi, \Psi, n) &\coloneqq \perr(\Phi\n, \Psi\n), \\
\perr^A(\Phi, \Psi, n) &\coloneqq \inf_{\Theta \text{ adaptive $n$-strategy}} {1 \over 2} \left(1 - {1 \over 2}\norm{\Theta[\Phi] - \Theta[\Psi]}_1 \right).
\end{align}

Since every parallel strategy is an adaptive strategy it is easy to see that 
\begin{equation}
    \perr^A(\Phi, \Psi, n) \leq \perr^P(\Phi\n, \Psi\n).
\end{equation}
Moreover, in contrast to the asymmetric setting \eqref{eq:adaptive-channel-steins-lemma}, there can be a strict separation between the optimal asymptotic parallel and adaptive error exponents in the symmetric setting, i.e. there exist channels $\Phi, \Psi$ such that \cite{harrow_adaptive_2010, salek_usefulness_2022}
\begin{equation}
    \limsup_{n\to \infty} -\frac{1}{n} \log (\perr^P (\Phi^{\otimes n}, \Psi^{\otimes n}) ) < \liminf_{n\to \infty} -\frac{1}{n} \log (\perr^A (\Phi, \Psi, n) ).
\end{equation}

There are currently no matching upper and lower bounds for these symmetric error exponents, although using the established results for symmetric state discrimination one can obtain the following two bounds.

\begin{lemma}[{\cite[Proposition 21]{Wilde2020amortized}}] \label{lemma:perr<=}
For two quantum channels $\Phi, \Psi: \B{A} \to \B{B}$:
    \begin{equation}\limsup_{n \to \infty} - {1 \over n}\log (\perr^{A}(\Phi, \Psi, n)) \leq D^{\cb}_{\max}(\Phi\|\Psi)\end{equation}
\end{lemma}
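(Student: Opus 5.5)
The plan is to lower-bound the error probability of an arbitrary adaptive $n$-strategy $\Theta$ by $c\,2^{-n D^{\cb}_{\max}(\Phi\|\Psi)}$, where $c>0$ does not depend on $n$. The limsup statement then follows immediately. If $D^{\cb}_{\max}(\Phi\|\Psi)=+\infty$ there is nothing to prove, so set $\lambda := 2^{D^{\cb}_{\max}(\Phi\|\Psi)}<\infty$.

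\textbf{Step 1: one-use operator inequality.} The first step is to turn finiteness of $D^{\cb}_{\max}$ into an operator inequality on arbitrary inputs, including an unbounded memory $R$:
\[
(\id_R\otimes\Phi)(\omega)\le \lambda\,(\id_R\otimes\Psi)(\omega)\quad\text{for every } \omega\ge 0 .
\]
For $R\cong A'$ this follows from the definition and homogeneity. For a larger $R$, I would purify $\omega$ and use that its reduced support on $A$ has dimension at most $d_A$. This gives $\omega=(K\otimes\iden)\tilde\omega(K^*\otimes\iden)$ with $\tilde\omega$ on $A'A$ and $K$ an isometry or contraction $A'\to R$. Conjugation by $K$ commutes with $\Phi$ and $\Psi$ and preserves operator order. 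This reduction to $R\cong A'$ is the only genuinely delicate point, and it is standard.

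\textbf{Step 2: induct along the adaptive strategy.} Use the decomposition $\Theta[\Phi]=(\id_R\otimes\Phi)(\Lambda_n(\Theta_{n-1}[\Phi]))$. Assume $\Theta_{n-1}[\Phi]\le\lambda^{n-1}\Theta_{n-1}[\Psi]$. Since $\Lambda_n$ is positive,
\[
\Theta[\Phi]\le \lambda\,(\id\otimes\Psi)\bigl(\Lambda_n(\Theta_{n-1}[\Phi])\bigr)\le\lambda^n(\id\otimes\Psi)\bigl(\Lambda_n(\Theta_{n-1}[\Psi])\bigr)=\lambda^n\Theta[\Psi],
\]
where the first inequality is Step 1 and the second uses positivity of $\id\otimes\Psi$. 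The base case $\nu$ is trivial. Hence $\Theta[\Phi]\le\lambda^n\Theta[\Psi]$.

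\textbf{Step 3: bound the error.} Write $\rho=\Theta[\Phi]$ and $\sigma=\Theta[\Psi]$, so that $\rho\le\lambda^n\sigma$, and note $\lambda\ge1$ because both operators have unit trace. For any test $0\le M\le\iden$,
\[
\Tr((\iden-M)\rho)+\Tr(M\sigma)\ge \Tr((\iden-M)\rho)+\lambda^{-n}\Tr(M\rho)\ge\lambda^{-n}.
\]
Therefore $\perr^A(\Phi,\Psi,n)\ge\tfrac12\lambda^{-n}$. Taking $-\tfrac1n\log$ and then the limsup yields the bound $D^{\cb}_{\max}(\Phi\|\Psi)$.
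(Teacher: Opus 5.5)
Your argument is correct and is essentially the standard proof behind the cited Proposition 21. The induction giving $\Theta[\Phi]\le\lambda^n\,\Theta[\Psi]$ is the chain rule (collapse of the amortized max-divergence) for $D_{\max}$ along adaptive strategies, and Step 3 is the usual bound $\perr\ge\frac12\,2^{-D_{\max}}$.

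One small repair is needed in Step 1. A mixed $\omega$ on $RA$ whose $R$-marginal has rank exceeding $d_A$ (e.g.\ $\iden_R/d_R\otimes|0\rangle\langle 0|$ with $d_R>d_A$) cannot itself be written as $(K\otimes\iden)\tilde\omega(K^*\otimes\iden)$. Instead, apply the Schmidt-rank factorization to a purification of $\omega$ on $RR''A$, or to each rank-one term of its spectral decomposition. Then trace out $R''$, or sum the terms; both operations preserve the operator inequality.
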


\begin{lemma}\label{lemma:perr>=}
For two quantum channels $\Phi, \Psi: \B{A} \to \B{B}$:
    \begin{align}
    \liminf_{n \to \infty} &- {1 \over n} \log(\perr^{P}(\Phi\n, \Psi\n)) \geq \sup_{α \in (0, 1)} (1 - α) D_α^{\mathrm{reg},\cb}(\Phi\|\Psi) \\
    &= \sup_{α \in (0, 1)} (1 - α) \lim_{n \to \infty} {1 \over n} \sup_{\nu \in \mathcal{D}(RA^n)}  D_α((\id_R \otimes\Phi\n)(\nu)\|(\id_R \otimes \Psi\n)(\nu))\\
    &= \xi^{\cb, \reg}(\Phi\|\Psi)
    \end{align}
    where $\xi^{\cb, \reg}(\Phi\|\Psi)$ is the regularized cb-channel divergence associated to the Chernoff divergence $\xi(ρ\|σ) \coloneqq \sup_{α \in (0,1)} (1 - α) D_α(ρ\|σ)$.
\end{lemma}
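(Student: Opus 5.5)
The idea is to fix a block length $m$ and an input state, and reduce to the quantum Chernoff bound for i.i.d.\ states. Fix $m\in\N$, $\alpha\in(0,1)$, and any $\nu\in\cD(RA^m)$. Set $\rho_m := (\id_R\otimes\Phi^{\otimes m})(\nu)$ and $\sigma_m := (\id_R\otimes\Psi^{\otimes m})(\nu)$.

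For $n = km + r$ with $0\le r<m$, we use the parallel strategy with input $\nu^{\otimes k}\otimes \omega$, where $\omega$ is an arbitrary state fed into the remaining $r$ channel uses. Discarding the last $r$ outputs is a channel, so by data processing of the trace distance $\perr^P(\Phi\n,\Psi\n)\le \perr(\rho_m^{\otimes k},\sigma_m^{\otimes k})$. This uses the Holevo--Helstr\"om expression: $\perr$ is monotone non-decreasing under channels, and the channel optimum in $\perr(\Phi\n,\Psi\n)$ is at most the value at any particular input.

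Next we invoke the quantum Chernoff theorem for the pair $(\rho_m,\sigma_m)$:
\begin{equation*}
\lim_{k\to\infty} -\frac1k\log \perr(\rho_m^{\otimes k},\sigma_m^{\otimes k}) = \xi(\rho_m\|\sigma_m) \ge (1-\alpha)D_\alpha(\rho_m\|\sigma_m).
\end{equation*}
Since $k = \lfloor n/m\rfloor$ satisfies $k/n\to 1/m$, the bounded remainder $r$ is harmless, and we obtain
\begin{equation*}
\liminf_{n\to\infty}-\frac1n\log\perr^P(\Phi\n,\Psi\n)\ge \frac{1-\alpha}{m}D_\alpha(\rho_m\|\sigma_m).
\end{equation*}
This needs a little care because $\perr^P$ need not be monotone in $n$. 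To handle that, we split along the $m$ residue classes of $n$ and apply the bound on each class separately.

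We then take the supremum over $\nu$. This gives $\frac{1-\alpha}{m}D^{\cb}_\alpha(\Phi^{\otimes m}\|\Psi^{\otimes m})$, where the reference system $R$ may be taken as large as needed; we note that the purification/Schmidt argument shows a copy of $A^m$ suffices, consistent with \eqref{eq:Dcb-def}. Letting $m\to\infty$ yields $(1-\alpha)D_\alpha^{\cb,\reg}(\Phi\|\Psi)$, and taking the supremum over $\alpha\in(0,1)$ gives the claimed inequality. The final equality with $\xi^{\cb,\reg}$ is a matter of definition, interpreting $\xi^{\cb,\reg}$ as the expression displayed in the statement.

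We also need the regularization limit to exist. The sequence $D^{\cb}_\alpha(\Phi^{\otimes m}\|\Psi^{\otimes m})$ is superadditive, since product inputs give additivity of $D_\alpha$ on tensor products. Fekete's lemma then gives $\lim = \sup_m$, which is exactly what the argument above produces. If a divergence is infinite, the bound is trivially handled by the same argument with $\xi=\infty$. The main (mild) obstacle is the bookkeeping of the non-multiple block lengths and making sure that $\xi(\rho_m\|\sigma_m)$ dominates $(1-\alpha)D_\alpha$ for each fixed $\alpha$, which is immediate from the definition of the Chernoff divergence.
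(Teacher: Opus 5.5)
Your proposal is correct and follows essentially the same route as the paper: you group the channel uses into $\lfloor n/m\rfloor$ blocks fed with a fixed input $\nu$, discard the remainder, apply the i.i.d.\ quantum Chernoff bound to $(\rho_m,\sigma_m)$, and then optimize over $\nu$, $m$ and $\alpha$. The paper differs only in fixing near-optimal $(\alpha,m,\nu)$ up front rather than taking suprema at the end, and your superadditivity/Fekete remark makes precise its assertion that the $n$-limit is a supremum.

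Two small remarks. First, the residue-class splitting is unnecessary: the bound $\perr^P(\Phi^{\otimes n},\Psi^{\otimes n})\le\perr(\rho_m^{\otimes k},\sigma_m^{\otimes k})$ holds for every $n$, and in fact $\perr^P$ is non-increasing in $n$ by the same discarding argument, contrary to your aside. Second, the identification with $\xi^{\cb,\reg}$ is not purely definitional; it follows from that same $\lim=\sup$ property, which lets you interchange $\sup_{\alpha}$ with $\lim_n$.
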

\begin{proof}
    It is easy to see that the limit in $n$ on the right-hand side is increasing. Hence, we can take the three suprema in any order, and let $α$, $n$ and $\nu = \nu_{AR^n}$ be chosen such that 
    \begin{equation}
        \sup_{α \in (0, 1)} (1 - α) D_α^{\cb, \reg}(\Phi\|\Psi) = (1 - α) {1 \over n} D_α((\id_R \otimes\Phi\n)(\nu)\|(\id_R \otimes \Psi\n)(\nu)) + δ
    \end{equation}
    for some $δ > 0$ arbitrarily small. From here on in this argument we will make the $\id_R$ implicit, i.e.\ we will write $\Phi\n(\nu)$ for $(\id_R \otimes \Phi\n)(\nu)$.
    Now consider the parallel discrimination strategy that, given $m$ copies of either $\Phi$ or $\Psi$ divides the $m$ copies into $k = k(m) = \floor{m/n}$ groups of $n$ channels, inputs the state $\nu$ into each group and just discards any remaining channels. We are then left with $k$ copies of either $\Phi\n(\nu)$ or $\Psi\n(\nu)$, and by the Chernoff bound for quantum state discrimination \cite{audenaert_discriminating_2007} we can discriminate these with error probability
    \begin{align}
        - \log \perr((\Phi\n(\nu))^{\otimes k}, (\Psi\n(\nu))^{\otimes k}) &\geq k (1 - α) D_α(\Phi\n(\nu)\|\Psi\n(\nu))+ o(k) \\&= n k \sup_{α \in (0, 1)} (1 - α) D_α^{\cb, \reg}(\Phi\|\Psi)  - kn δ + o(k)
    \end{align}
    Since $n$ is fixed, it is easy to see that
    \begin{equation}
    \lim_{m \to \infty} {\floor{m/n} n \over m} = 1
    \end{equation}
    and hence
    \begin{align}
        \liminf_{m \to \infty} {1 \over m} -\log(\perr^{P}(\Phi^{\otimes m}, \Psi^{\otimes m})) 
        &\geq \liminf_{m \to \infty} {1 \over m} - \log\perr((\Phi\n(\nu))^{\otimes k(m)}, (\Psi\n(\nu))^{\otimes k(m)}) \\
        &= \liminf_{m \to \infty} \left\{ {n k(m) \over m} \sup_{α \in (0, 1)} (1 - α) D_α^{\cb, \reg}(\Phi\|\Psi)  - {δ n k(m) \over m} + {o(k) \over m} \right\}\\
        &=  \sup_{α \in (0, 1)} (1 - α) D_α^{\cb, \reg}(\Phi\|\Psi) - δ
    \end{align}
    and $δ$ was arbitrary small.
\end{proof}

\subsection{Multiplicative Domain}
For any unital completely positive (UCP) linear map $\Phi:\B{\Hil}\to \B{\Hil}$, the \emph{multiplicative domain} is defined as 
\begin{align}
    M_{\Phi} := \big\{ X\in \B{\Hil} \,\, \vert \,\, \forall Y\in \B{\Hil}: \Phi(XY) &= \Phi(X)\Phi(Y) \nonumber \\           
    \Phi(YX) &= \Phi(Y)\Phi(X)  
            \big\} .
\end{align}
Recall that any UCP linear map $\Phi$ satisfies the \emph{Schwarz inequality}: $\Phi(X^{\dagger}X)\geq \Phi(X^{\dagger})\Phi(X)$ for all $X\in \B{\Hil}$. Moreover, Choi showed that \cite{Choi1974schwarz} 
\begin{equation}\label{eq:choi-schwarz}
     M_{\Phi} = \{X\in \B{\Hil} \,\, \vert \,\, \Phi(X^{\dagger}X) = \Phi(X^{\dagger})\Phi(X), \,\,\Phi(XX^{\dagger}) = \Phi(X)\Phi(X^{\dagger}) \}.
\end{equation}

Below, we prove the rank non-decreasing property of unital quantum channels in two simple lemmas, which also appeared in \cite{Rahaman2020wielandt}. We provide short proofs for completeness.

\begin{lemma}\label{lemma:rank-Phi1}
    Let $\Phi:\B{\Hil}\to \B{\Hil}$ be a unital quantum channel. Then, 
    \begin{equation}
      \forall X\geq 0: \quad  \operatorname{rank}\Phi(X)\geq \operatorname{rank}X.
    \end{equation}
\end{lemma}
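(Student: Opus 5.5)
Since $\Phi$ is a unital quantum channel, it is trace preserving with $\Phi(\iden_{\Hil})=\iden_{\Hil}$. The plan is to reduce the rank statement to a comparison of spectral projections, using only positivity and trace preservation. Write $X=\sum_{i} \lambda_i P_i$ with $\lambda_i>0$ and $P=\sum_i P_i$ the support projection of $X$, so $\operatorname{rank}X=\Tr P$. Put $\lambda_{\min}>0$ for the smallest nonzero eigenvalue. Then $X\geq \lambda_{\min}P$, and positivity gives $\Phi(X)\geq \lambda_{\min}\Phi(P)\geq 0$. Hence $\supp \Phi(P)\subseteq \supp\Phi(X)$, and it suffices to prove $\operatorname{rank}\Phi(P)\geq \operatorname{rank}P$ for orthogonal projections $P$.

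For a projection $P$, we have $0\leq P\leq \iden_{\Hil}$. Positivity and unitality then give $0\leq \Phi(P)\leq \Phi(\iden_{\Hil})=\iden_{\Hil}$, so every eigenvalue of $\Phi(P)$ lies in $[0,1]$. Consequently
\begin{equation}
\Tr\Phi(P)\leq \operatorname{rank}\Phi(P)\cdot \norm{\Phi(P)}_\infty \leq \operatorname{rank}\Phi(P).
\end{equation}
On the other hand, trace preservation gives $\Tr\Phi(P)=\Tr P=\operatorname{rank}P$. Combining the two yields $\operatorname{rank}\Phi(P)\geq\operatorname{rank}P$, and therefore $\operatorname{rank}\Phi(X)\geq \operatorname{rank}\Phi(P)\geq \operatorname{rank}X$.

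The only point needing care is the support inclusion in the first step. For $0\leq A\leq B$, any vector $v$ with $Bv=0$ satisfies $\<v,Av\>=0$, and since $A\geq 0$ this forces $Av=0$. So $\ker B\subseteq\ker A$, i.e.\ $\supp A\subseteq\supp B$, which is what we used with $A=\lambda_{\min}\Phi(P)$ and $B=\Phi(X)$.

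Both unitality and trace preservation are used; the two-sided bound $\Phi(P)\leq\iden_{\Hil}$ from unitality is the key ingredient. Complete positivity and the multiplicative domain are not needed.
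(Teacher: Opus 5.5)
Your proof is correct, and it follows a genuinely different route from the paper. The paper invokes Uhlmann's theorem to get $\lambda(\Phi(X))\prec\lambda(X)$ and then argues by contradiction on partial sums: if $\operatorname{rank}\Phi(X)=m<n=\operatorname{rank}X$, then $s_1+\dots+s_m=\Tr X=r_1+\dots+r_n>r_1+\dots+r_m\geq s_1+\dots+s_m$. That handles $X$ directly, without passing to its support projection, but it relies on an external majorization theorem, of which only the easier direction (unital channel $\Rightarrow$ majorization) is used.

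You instead reduce to the support projection $P$ via $X\geq\lambda_{\min}P$ and monotonicity of supports in the L\"owner order. You then combine $0\leq\Phi(P)\leq\iden_{\Hil}$ with trace preservation to get $\operatorname{rank}\Phi(P)\geq\Tr\Phi(P)=\operatorname{rank}P$. Your argument is self-contained and shows that only positivity, unitality and trace preservation matter. The same is true of the paper's route, since the easy direction of Uhlmann's theorem follows from Ky Fan's maximum principle together with $0\leq\Phi^*(\Pi)\leq\iden_{\Hil}$ and $\Tr\Phi^*(\Pi)=\Tr\Pi$ for projections $\Pi$. Your argument is also the mechanism the paper itself uses in the converse direction of Lemma~\ref{lemma:rank-Phi2}, where $\Phi(P)$ is dominated by the projection onto its range and traces are compared, so your version lets the two lemmas share one idea.

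In exchange, the majorization route gives more information: it constrains the entire spectrum of $\Phi(X)$, not just the number of nonzero eigenvalues, though that is not needed here. You should also add a one-line remark for the trivial case $X=0$, where $\lambda_{\min}$ is undefined.
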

\begin{proof}
    Recall from Uhlmann's theorem that for Hermitian matrices $X,Y\in \B{\Hil}$, $\Phi(X)=Y$ for some unital quantum channel $\Phi \iff \lambda(X) \succ \lambda(Y)$, where $\lambda(X), \lambda(Y)\in \R^{\dim \Hil}$ are the vectors of eigenvalues (counted with multiplicity) of $X,Y$ respectively, and $\succ$ denotes the majorization preorder \cite[see Chap. 4]{watrous_theory_2018}. Now, let $X,Y$ be Hermitian matrices such that $Y=\Phi(X)$. Let 
    \begin{align}
        \lambda(X) &= (r_1,r_2, \ldots ,r_n, 0, \ldots,0), \\
        \lambda(Y) &= (s_1, s_2 \ldots ,s_m, 0, \ldots ,0),
    \end{align}
    where $n=\operatorname{rank}(X), m=\operatorname{rank}(Y)$, and the entries are arranged in a non-increasing order. We want to prove that $m\geq n$. Suppose this is not the case and $m<n$. Then, since $\Phi$ is trace preserving and $\lambda(X)\succ \lambda(Y)$, we get
    \begin{align}
        r_1 +\ldots + r_m+\ldots+r_n &= s_1 + \ldots +s_m \\
        r_1 + \ldots +r_m &\geq s_1 + \ldots +s_m,
    \end{align}
    which leads to a contradiction. Hence, $m\geq n$.
\end{proof}

\begin{lemma}\label{lemma:rank-Phi2}
    Let $\Phi:\B{\Hil}\to \B{\Hil}$ be a unital quantum channel. Then, for an orthogonal projection $P=P^{\dagger}=P^2$,
    \begin{equation}
        \operatorname{rank} (\Phi(P)) = \operatorname{rank}(P) \iff P\in M_{\Phi}.
    \end{equation}
\end{lemma}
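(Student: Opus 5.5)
We prove the two directions separately. Throughout, write $\Phi(P)=Q$, and recall that a unital channel satisfies $0\le \Phi(P)\le \Phi(\iden)=\iden$ for every projection $P$, together with $\Tr \Phi(P)=\Tr P$.

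\emph{($\Leftarrow$).} Suppose $P\in M_\Phi$. Then
\[
\Phi(P)=\Phi(P^\dagger P)=\Phi(P)^\dagger\Phi(P)=\Phi(P)^2,
\]
so $Q$ is an orthogonal projection (it is Hermitian, since $\Phi$ is Hermiticity preserving). Its rank therefore equals its trace. Because $\Phi$ is trace preserving, $\Tr Q=\Tr P=\operatorname{rank}P$, and hence $\operatorname{rank}\Phi(P)=\operatorname{rank}P$.

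\emph{($\Rightarrow$).} Suppose $\operatorname{rank}Q=\operatorname{rank}P=k$. The plan is to show first that $Q$ is a projection, and then apply Choi's characterization \eqref{eq:choi-schwarz}.

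\emph{Step 1: $Q$ is a projection.} The eigenvalues of $Q$ lie in $[0,1]$, there are exactly $k$ nonzero ones, and they sum to $\Tr Q=\Tr P=k$. This forces all $k$ nonzero eigenvalues to equal $1$, so $Q^2=Q$. (This is the equality case of the majorization argument in Lemma~\ref{lemma:rank-Phi1}.)

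\emph{Step 2: the Schwarz equalities.} Since $P$ is a projection, $P^\dagger P=PP^\dagger=P$. Thus $\Phi(P^\dagger P)=Q=Q^2=\Phi(P)^\dagger\Phi(P)$, and in the same way $\Phi(PP^\dagger)=\Phi(P)\Phi(P)^\dagger$. By Choi's theorem \eqref{eq:choi-schwarz}, $P\in M_\Phi$.

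The only nontrivial input is Choi's characterization, which we take from the cited result. The one step that needs care is Step 1: there we must use both that $\Phi$ is unital, to get $Q\le\iden$, and that $\Phi$ is trace preserving, to get $\Tr Q=k$.
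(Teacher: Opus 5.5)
Your proof is correct and follows essentially the same route as the paper. The reverse direction goes through $\Phi(P)=\Phi(P)^2$ and trace preservation. In the forward direction you show $\Phi(P)$ is a projection and then invoke Choi's characterization of the multiplicative domain. Your eigenvalue count in Step 1 (using $0\le\Phi(P)\le\iden$ and equality of traces) is just a rephrasing of the paper's comparison of $\Phi(P)$ with the projection onto its support.
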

\begin{proof}
    Let $P\in M_{\Phi}$. Then, $\Phi(P)=\Phi(PP)=\Phi(P)\Phi(P)$, so that $\Phi(P)$ is also a projection. Moreover, since $\Phi$ is trace-preserving, we obtain the desired assertion:
    \begin{equation}
     \operatorname{rank} (\Phi(P)) =  \Tr \Phi(P) = \Tr P = \operatorname{rank} (P).
    \end{equation}
    Conversely, suppose $\operatorname{rank}(\Phi(P))=\operatorname{rank}(P)$. Let $Q$ be the orthogonal projection onto $\im \Phi(P)$. Since $\Phi$ is unital, we get $\Phi(P)\leq \norm{\Phi(\iden_{\Hil})}_{\infty}Q = Q$ \cite{Russo1966}. Since $\Phi$ is trace-preserving, we get $\Tr \Phi(P)=\Tr P = \operatorname{rank}P= \operatorname{rank}(\Phi (P)) = \operatorname{rank}Q=\Tr Q$, implying that $\Phi(P)=Q$ is a projection. Hence,
    \begin{equation}
        \Phi(P) = \Phi(P)\Phi(P) \leq \Phi(PP) = \Phi(P),
    \end{equation}
    which, according to Eq.~\eqref{eq:choi-schwarz}, implies that $P\in M_{\Phi}$.
\end{proof}

\section{Main results}\label{sec:main_results}
We start this section with a brief explanation of the (well-known) algebraic structure of idempotent channels that is essential for phrasing our main results.

Let $\cP:\B{\Hil}\to \B{\Hil}$ be an idempotent quantum channel ($\cP=\cP^2$, where $\cP^2 = \cP \circ \cP$) such that $\cP(\iden_{\Hil})$ has full rank\footnote{Note that this is equivalent to saying that $\cP$ admits a full-rank invariant state $\delta=\cP(\delta)$.}. Then, the image of the adjoint ${\cP}^*:\B{\Hil}\to \B{\Hil}$ is a unital $*-$subalgebra\footnote{This means that $\im(\cP^*)\subseteq \B{\Hil}$ is a vector subspace such that $\iden_{\Hil}\in \im(\cP^*)$ and 
\begin{equation}\forall X,Y\in \im(\cP^*): \quad X^{\dagger}, XY\in \im (\cP^*). \end{equation} } $\im (\cP^*)\subseteq \B{\Hil}$ \cite{Lindblad1999fixed}. Since $\Hil$ is finite-dimensional, $\im (\cP^*)\cong \oplus_{k=1}^K \M{d_k}$ in the sense of $*-$algebra isomorphism \cite[Theorem 11.2]{Takesaki1979algebra}, and is canonically represented on $\Hil$. Hence, there exists an orthogonal decomposition $\Hil = \oplus_{k=1}^K D_k\otimes C_k$ with $\dim D_k=d_k$ such that \cite[Theorem 11.9]{Takesaki1979algebra} (see also Appendix~\ref{appen:three-layer}):
\begin{align}\label{eq:phasespace}
       \im (\cP^*) &= \bigoplus_{k} (\B{D_k}\otimes \iden_{C_k}).
\end{align}

In fact, the channel acts like a conditional expectation \cite[Proposition 1.5]{wolf_quantum_2012}:
\begin{align} 
\forall X\in \B{\Hil}: \quad
   {\cP}(X) &= \sum_k V_k \left( \Tr_{C_k} (V_k^{\dagger} X V_k) \otimes \delta_k \right) V_k^{\dagger}, \label{eq:phaseproj-1} \\
   {\cP}^*(X) &= \sum_k V_k \left( \Tr_{C_k} \left( (V^{\dagger}_k X V_k) (1_{D_k} \otimes \delta_{k}) \right) \otimes \iden_{C_k} \right) V_k^{\dagger}, \label{eq:phaseproj-2}
\end{align}
where $V_k: D_k \otimes C_k\to \Hil$ are the canonical inclusion isometries, $\Tr_{C_k}$ is the partial trace over $C_k$, and $\delta_k\in \State{C_k}$ full rank states. Let $\delta := \oplus_k (1_{D_k} \otimes d_{C_k} \delta_{k})$. Clearly, $\delta$ has full rank. Moreover, one can check
\begin{align}\label{eq:P-PTr}
    \cP(X) = \sqrt{\delta} \cP_{\Tr} (X) \sqrt{\delta} 
\end{align}
where $\cP_{\Tr}=\cP_{\Tr}^*$ is the unique trace-preserving conditional expectation onto $\im (\cP^*)$. Note that Eqs.~\eqref{eq:phaseproj-1}, \eqref{eq:phaseproj-2} can alternatively be written in terms of a direct sum \cite{Fukuda2007direct}:
\begin{align}
    \cP &= \bigoplus_k \id_{D_k} \otimes \cR_{\delta_k} \label{eq:P-directsum} \\
    \cP^* &= \bigoplus_k \id_{D_k} \otimes \cR^*_{\delta_k}, \label{eq:P*-directsum}
\end{align}
where $\id_{D_k}$ is the identity channel and $\cR_{\delta_k}(\cdot) = \Tr (\cdot) \delta_{k}$ is a replacer channel. 

\begin{remark}
    In terms of block matrices, the direct sum $\Phi \oplus \Psi : \B{A \oplus C} \to \B{B \oplus D}$ of two channels $\Phi:\B{A}\to \B{B}, \Psi: \B{C}\to \B{D}$ can be defined as
\begin{align}
\forall X\in \B{A}, \forall Y\in \B{C}: \qquad    \Phi \oplus \Psi\left(
    \begin{bmatrix} X & * \\ * & Y \end{bmatrix} \right) := \begin{bmatrix} \Phi(X) & 0 \\ 0 & \Psi(Y) \end{bmatrix}.
\end{align}
\end{remark}

The rigid algebraic structure of idempotent channels makes them amenable to study. The communication capacities of such channels were recently characterized in \cite{Singh2026markovian} (see also \cite{Singh2025thesis}, \cite{fawzi2025capacities}, \cite{Singh2025zero}), and the problem of simulating such channels from one another was recently studied in \cite{Delsol2025emulation}. For detailed proofs of the structure of idempotent channels stated in this section, see \cite[Chapter 6]{wolf_quantum_2012}, \cite[Theorem 2.7.3]{Singh2025thesis} or \cite[Proposition 1]{Delsol2025emulation}.

\subsection{Noiseless vs Noisy: Identity Against an Idempotent Channel}

In this section, we completely characterize the optimal error-exponents for discriminating between the noiseless identity channel and an arbitrary idempotent channel.

\begin{theorem} \label{theorem:IDvQ}
    Let $\mathcal{Q}:\B{\Hil} \to \B{\Hil}$ be an idempotent channel with $\cQ(\iden_{\Hil})$ of full rank, i.e., $\cQ=\oplus_{l=1}^L \id_{A_l}\otimes \mathcal{R}_{\omega_l}$, where $\Hil = \oplus_{l=1}^L A_l\otimes B_l$ and $\omega_l\in  \cD(B_l)$ are states with full rank. Then,
    \begin{equation}
     D_{\min}(\id_{\Hil} \Vert \cQ) =D(\id_{\Hil} \Vert \cQ ) =  D_{\max}(\id_{\Hil} \Vert \cQ) = \log \sum_l \Tr_{\min(d_{A_l}, d_{B_l})} (\omega_l^{-1}),
    \end{equation}
     where for $k\in \mathbb{N}, \Tr_k (X) = \lambda_{1}(X) + \ldots + \lambda_k(X)$ denotes the sum of the largest $k$ eigenvalues of a Hermitian matrix $X$. Consequently,
    \begin{align}
     D_{\min}^{\cb}(\id_{\Hil} \Vert \cQ) = D^{\cb}(\id_{\Hil} \Vert  \cQ) &=  D^{\cb}_{\max}(\id_{\Hil} \Vert \cQ)
     = \log \sum_l \Tr (\omega_l^{-1}),
    \end{align}
    and all the cb-divergences are additive:
    \begin{align}
     \forall n\in \mathbb{N}: \quad    D^{\cb}(\id_{\Hil}^{\otimes n} \Vert  \cQ^{\otimes n}) 
     = n D^{\cb}(\id_{\Hil} \Vert  \cQ). 
    \end{align}
\end{theorem}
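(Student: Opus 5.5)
The strategy is to squeeze every divergence between $D_{\min}$ and $D_{\max}$. First we show $D_{\max}(\id_{\Hil}\Vert\cQ)\le \log c$ and then $D_{\min}(\id_{\Hil}\Vert\cQ)\ge\log c$, where $c:=\sum_l \Tr_{\min(d_{A_l},d_{B_l})}(\omega_l^{-1})$. Since $D_{\min}\le D\le D_{\max}$ pointwise, taking suprema over inputs forces all three channel divergences to equal $\log c$. The cb statement follows by applying the same result to the idempotent channel $\id_{A'}\otimes\cQ=\oplus_l \id_{A'\otimes A_l}\otimes\cR_{\omega_l}$. Here $d_{A'}d_{A_l}\ge d_{A_l}d_{B_l}\ge d_{B_l}$, because $A'\cong\Hil$, so the truncated trace becomes the full trace $\Tr(\omega_l^{-1})$. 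Additivity follows likewise: $\cQ^{\otimes n}$ is again idempotent with blocks $A_{l_1}\cdots A_{l_n}\otimes B_{l_1}\cdots B_{l_n}$ and states $\omega_{l_1}\otimes\cdots\otimes\omega_{l_n}$, and $\sum_{l_1,\dots,l_n}\prod_i\Tr(\omega_{l_i}^{-1})=(\sum_l\Tr\omega_l^{-1})^n$.

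\textbf{Upper bound on $D_{\max}$.} Fix a pure input $\psi$; quasi-convexity of $D_{\max}$ lets us reduce to pure states. Write $\ket\psi=\oplus_l\sqrt{p_l}\ket{\psi_l}$ with $\ket{\psi_l}\in A_l\otimes B_l$. Then $\cQ(\psi)=\oplus_l p_l\,\rho_l\otimes\omega_l$, where $\rho_l=\Tr_{B_l}\psi_l$. For any vector $\ket{v}$ we have $\psi\le \langle v|\sigma^{-1}|v\rangle\,\sigma$-type bounds; concretely, $D_{\max}(\psi\Vert\sigma)=\log\bra\psi\sigma^{-1}\ket\psi$ when $\psi\in\supp\sigma$. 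Hence we must bound $\sum_l \bra{\psi_l}\rho_l^{-1}\otimes\omega_l^{-1}\ket{\psi_l}$, where the $p_l$ cancel. Take a Schmidt decomposition $\ket{\psi_l}=\sum_i\sqrt{\lambda_i}\ket{a_i}\ket{b_i}$, which has at most $\min(d_{A_l},d_{B_l})$ terms, and work on the support of $\rho_l$. A direct computation gives $\bra{\psi_l}\rho_l^{-1}\otimes\omega_l^{-1}\ket{\psi_l}=\sum_i\bra{b_i}\omega_l^{-1}\ket{b_i}$. By Ky Fan's maximum principle this is at most $\Tr_{\min(d_{A_l},d_{B_l})}(\omega_l^{-1})$. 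Summing over $l$ gives the bound. (The support condition holds automatically because $\psi\in\supp(\rho_l\otimes\omega_l)$ blockwise and $\omega_l$ has full rank.)

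\textbf{Lower bound on $D_{\min}$.} We exhibit the input achieving the bound. Let $k_l=\min(d_{A_l},d_{B_l})$, and choose the orthonormal $\ket{b_i}$ to be eigenvectors of the top $k_l$ eigenvalues $\mu_i$ of $\omega_l^{-1}$, paired with orthonormal $\ket{a_i}\in A_l$. Set $\ket{\psi_l}=\sum_i\sqrt{\lambda_i}\ket{a_i b_i}$ and $\ket\psi=\oplus_l\sqrt{p_l}\ket{\psi_l}$. For a pure input, $D_{\min}(\psi\Vert\sigma)=-\log\bra\psi\sigma\ket\psi$. We compute $\bra{\psi}\cQ(\psi)\ket\psi=\sum_l p_l^2\sum_i\lambda_i^2\,\mu_i^{-1}$, using $\bra{b_i}\omega_l\ket{b_i}=\mu_i^{-1}$. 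Now optimize. Taking $\lambda_i\propto\mu_i$ gives $\sum_i\lambda_i^2/\mu_i=1/\sum_i\mu_i=1/\Tr_{k_l}(\omega_l^{-1})=:1/c_l$. Then taking $p_l\propto c_l$ gives $\sum_l p_l^2/c_l=1/\sum_l c_l=1/c$. Hence $D_{\min}\ge\log c$.

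The main obstacle is the $D_{\max}$ upper bound, specifically the reduction to pure inputs and the Schmidt/Ky Fan computation on the possibly rank-deficient $\rho_l$. If the direct computation proves messy, I would instead bound $D_{\max}(\nu\Vert\cQ(\nu))$ via $\psi\le\bra\psi\sigma^{-1}\ket\psi\,\sigma$, checking that this is tight on the support. The matching of the $D_{\min}$ optimizer with the Ky Fan optimizer is then a routine Cauchy–Schwarz equality case.
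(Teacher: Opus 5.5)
Your proof is correct. The one genuine difference from the paper is how you handle several blocks in the $D_{\max}$ converse.

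The paper first proves the single-summand bound $D_{\max}(\psi_{AB}\Vert\psi_A\otimes\omega_B)\le\log\Tr_{\min(d_A,d_B)}(\omega^{-1})$, using the same Schmidt/Ky Fan computation as you. It then lifts this to $\oplus_l$ with a generalized pinching inequality $\rho\le\sum_l\alpha_l P_l\rho P_l$, with weights $\alpha_l=T/t_l$ whose admissibility is checked by Cauchy--Schwarz against the all-ones matrix. Applying the single-block bound to each $P_l\rho P_l$ gives $\rho\le T\,\cQ(\rho)$ for every state.

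You instead note that for a pure input and the block-diagonal $\sigma=\cQ(\psi)=\oplus_l p_l\,\rho_l\otimes\omega_l$, the quantity $\langle\psi|\sigma^{-1}|\psi\rangle$ splits into a sum over blocks. In that sum the weights $p_l$ cancel exactly, so the per-block Ky Fan bounds simply add. Joint quasi-convexity then covers mixed inputs, so you reach the same bound with no pinching lemma.

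Your route is shorter and fully elementary. It also makes visible why the converse value does not depend on how the input weight is spread across blocks. The paper's route yields the Pimsner--Popa-type operator inequality $\rho\le T\,\cQ(\rho)$ for arbitrary mixed states in one step, once the single-block case is known.

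Your $D_{\min}$ achievability ($\lambda_i\propto\mu_i$, $p_l\propto c_l$) is exactly the paper's construction. You correctly observe that the sandwich $D_{\min}\le D\le D_{\max}$ makes the paper's separate Gibbs-variational computation of the Umegaki divergence unnecessary. Your cb step (the truncation disappears because $d_{A'}d_{A_l}\ge d_{B_l}$) and your additivity check via $\cQ^{\otimes n}=\oplus_{\mathbf{l}}\id_{A_{\mathbf{l}}}\otimes\cR_{\omega_{\mathbf{l}}}$ coincide with the paper's.
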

\begin{proof}
Let us first tackle the converse bound. Consider the following calculation for the case of a single summand
    \begin{align}
        D_{\max}(\id_{AB} \Vert \id_A \otimes \mathcal{R}_{\omega_B}) &= \sup_{\rho} D_{\max}(\rho_{AB} \Vert \rho_A \otimes \omega_B)  \\
        &= \sup_{\psi} D_{\max} (\psi_{AB} \Vert \psi_A \otimes \omega_B)  \\
        &= \sup_{\psi} \log \norm{ (\psi^{-1/2}_A \otimes \omega^{-1/2}_B) \ket{\psi}\bra{\psi}_{AB} (\psi_A^{-1/2} \otimes \omega^{-1/2}_B )}_{\infty}  \\
        &\leq \sup_{\substack{\Pi=\Pi^2 = \Pi^{\dagger} \\ \Tr \Pi \leq \min (d_A,d_B) }} \log \Tr (\Pi \omega^{-1})  \\
        &= \log \Tr_{\min (d_A, d_B)} (\omega^{-1}), \label{eq:ky-fan}
    \end{align}
where we restricted the optimization to pure states because of the quasi-convexity of $D_{\max}$. To obtain the inequality, we used the fact that for any $\ket{\psi}_{AB} = \sum_i \sqrt \mu_i \ket{\alpha_i}_A \ket{\beta_i}_B$ (Schmidt decomposition), we have $\psi^{-1/2}_A \ket{\psi}_{AB} = \sum_i \ket{\alpha_i}_A \ket{\beta_i}_B$, so that 
\begin{equation}
    \norm{ (\psi^{-1/2}_A \otimes \omega^{-1/2}_B) \ket{\psi}\bra{\psi}_{AB} (\psi_A^{-1/2} \otimes \omega^{-1/2}_B) }_{\infty} = \sum_i \langle \beta_i | \omega^{-1} | \beta_i \rangle = \Tr (\Pi \omega^{-1}),
\end{equation}
where $\Pi=\sum_i \ket{\beta_i}\bra{\beta_i}$ is an orthogonal projection with $\Tr \Pi \leq \operatorname{SR}(\psi) \leq \min (d_A, d_B)$. The final equality in Eq.~\eqref{eq:ky-fan} follows from Ky Fan's maximum principle \cite[Problem I.6.15 and Exercise II.1.13]{bhatia_matrix_1997}.

For the general case, note that for any projective measurement\footnote{This means that each $P_l=P_l^2=P_l^{\dagger}$ is an orthogonal projection and $\sum_l P_l = \iden_{\Hil}$.} $(P_l)_{l} \subseteq \B{\Hil}$ and a positive semi-definite $X\in \B{\Hil}$, the generalized pinching inequality shows that \cite{winter2025pinching}
\begin{equation}
    X \leq \sum_{l} \alpha_l P_l X P_l
\end{equation}
for any $\vec\alpha\in \R^{L}$ satisfying $\operatorname{diag}(\vec\alpha) \geq J$, where $J\in \mathbb{M}_L({\mathbb{R}})$ is the all-ones matrix. Define $t_l = \Tr_{\min (d_{A_l}, d_{B_l})} (\omega^{-1}_{l,2})$, the sum $T=\sum_l t_l$ and let $\alpha_l = T/t_l$. This $\vec \alpha$ satisfies the required constraint $\operatorname{diag}(\vec\alpha) \geq J$, since for any vector $\ket{v}\in \C^{L}$,
\begin{equation}
    \bra{v} J \ket{v} = \abs{\sum_l v_l}^2 = \abs{ \sum_l \frac{v_l}{\sqrt{t_l}} \sqrt{t_l} }^2 \leq \sum_l \frac{\abs{v_l}^2}{t_l} \sum_l t_l = \sum_l \frac{T \abs{v_l}^2}{t_l} = \bra{v} \operatorname{diag}(\vec \alpha) \ket{v}.
\end{equation}
Hence, choosing $P_l=V_lV_l^{\dagger}$ to be the orthogonal projection onto $A_l \otimes B_l \subseteq \Hil$ with $V_l : A_l\otimes B_l \to \Hil$ being the canonical inclusion isometry, generalized pinching shows
\begin{equation}
\forall \rho\in \cD(\Hil) : \quad    \rho \leq T \sum_l \frac{1}{t_l} P_l \rho P_l \leq T \sum_l V_l \left( \Tr_{B_l}(V^{\dagger}_l \rho V_l) \otimes \omega_l \right) V_l^{\dagger} = T \cQ (\rho),
\end{equation}
where the second inequality follows from the previous calculation (Eq.~\eqref{eq:ky-fan}). Hence, 
\begin{align}
    D_{\max}(\id \Vert \oplus_l \id_{A_l} \otimes \cR_{\omega_l}) \leq \log T = \log \sum_l \Tr_{\min (d_{A_l}, d_{B_l})} (\omega^{-1}_l) .
\end{align}
To show achievability, consider the following calculation for the case of a single summand:
\begin{align}
    \sup_{\psi}D(\psi_{AB} \Vert \psi_A \otimes \omega_B ) &= \sup_{\psi} - \Tr (\psi_{AB} \log (\psi_A \otimes \omega_B)) \\
    &= \sup_{\psi} - \Tr (\psi_{AB} (\log \psi_A \otimes \iden_B + \iden_A\otimes \log \omega_B )) \\ 
    &= \sup_{\psi} -\Tr (\psi_A \log \psi_A) - \Tr (\psi_B \log\omega_B) \\ 
    &\geq\max_{\{\mu_i \}_{i}} H(\{ \mu_i \}) + \sum_i \mu_i \log (\lambda_i (\omega^{-1})) \\
    & \overset{(a)}{=} \log \sum_{i} \lambda_i(\omega^{-1}) \\
    &= \log \Tr_{\min(d_A,d_B)} (\omega^{-1})
\end{align}
where the mi inequality follows by choosing $\ket{\psi}_{AB}=\sum_i \sqrt{\mu_i} \ket{\alpha_i}_A\ket{\beta_i}_B$ to be of full Schmidt rank such that $\ket{\beta_i}_B$ are precisely the eigenvectors of $\omega^{-1}_B$ associated with the largest $\min(d_A,d_B)$ eigenvalues, and the equality in $(a)$ follows from maximizing a function of the form $H(\{\mu_i\}_i) + \sum_i \mu_i c_i$ over the simplex of probability distributions $\{\mu_i\}_i$:
\begin{equation}
    \max_{\mu_i\geq 0, \sum_i \mu_i =1 } \left( H(\{\mu_i \}_i) + \sum_i \mu_i c_i \right) = \log \left( \sum_i 2^{c_i} \right),
\end{equation}
see Appendix~\ref{appen:aux}. This proves that
\begin{equation}
    D(\id_{AB} \Vert \id_A \otimes \cR_{\omega_B}) = D_{\max}(\id_{AB} \Vert \id_A \otimes \cR_{\omega_B}).
\end{equation}
For $D_{\min}(\cdot \Vert \cdot)$, we can do a similar calculation:
\begin{align}
    D_{\min}(\id_{AB} \Vert \id_A \otimes \cR_{\omega_B}) &= \sup_{\rho} D_{\min}(\rho_{AB}\Vert \rho_A \otimes \omega_B) \\
    &\geq \sup_{\psi}D_{\min}(\psi_{AB} \Vert \psi_A \otimes \omega_B) \\ 
    &=\sup_{\psi} -\log \Tr (\psi_{AB} (\psi_A \otimes \omega_B)) \\
    &\geq\max_{\{\mu_i\}_i} -\log \sum_i \mu_i^2 \frac{1}{\lambda_i (\omega^{-1})} \\
    &\overset{(a)}{=} -\log \frac{1}{\sum_i \lambda_i (\omega^{-1}) } \\
    &= \log \Tr_{\min(d_A,d_B)} (\omega^{-1}),
\end{align}
where the middle inequality follows by choosing $\ket{\psi}_{AB}=\sum_i \sqrt{\mu_i} \ket{\alpha_i}_A\ket{\beta_i}_B$ to be of full Schmidt rank such that the Schmidt vectors $\ket{\beta_i}_B$ are precisely the eigenvectors of $\omega^{-1}_B$ associated with the largest $\min(d_A,d_B)$ eigenvalues, and the equality in $(a)$ follows by minimizing a function of the form $\sum_i \mu_i^2 c_i$ for non-negative numbers $\{c_i\}_i$ over the simplex of probability distributions $\{\mu_i \}_i$ (see Appendix~\ref{appen:aux}):
\begin{equation}
    \min_{\{\mu_i\}_i}\sum_i \mu_i^2 c_i = \frac{1}{\sum_i 1/c_i}.
\end{equation}

The full achievability result follows from a similar argument. Consider $\Hil = \oplus_l A_l \otimes B_l$ and let $\ket{\psi} = \oplus_l \sqrt{p_l} \sum_i \sqrt{\mu_{i,l}} \ket{\alpha_{i,l}}_{A_l}\ket{\beta_{i,l}}_{B_l}$, where $p_l\geq 0$, $\sum_lp_l=1$, and within each $l-$block, the vectors $\ket{\alpha_{i,l}}\in A_l$ and $\ket{\beta_{i,l}}\in B_l$ and the Schmidt coefficients $\mu_{i,l}$ are chosen as above. Let $W_l:A_l \otimes B_l \to \Hil$ be the canonical inclusion isometries, and $\cQ_l := \id_{A_l}\otimes \cR_{\omega_l}$. Then, 

\begin{align}
    D_{\min}(\psi \Vert \cQ (\psi)) &= -\log \Tr (\psi \cQ (\psi)) \label{eq:Dmin-start} \\
    &= -\log \sum_l \Tr (W_l^{\dagger}\psi W_l \cQ_l (W^{\dagger}_l \psi W_l)  ) \\
    &= -\log \sum_l p_l^2 \Tr \left( \frac{W_l^{\dagger}\psi W_l}{p_l} \cQ_l \left(\frac{W_l^{\dagger}\psi W_l}{p_l} \right) \right) \\
    &= -\log \sum_l p_l^2 \frac{1}{ \Tr_{\min(d_{A_l}, d_{B_l})}(\omega_{l}^{-1})} \\
    &= -\log \frac{1}{\sum_l \Tr_{\min(d_{A_l}, d_{B_l})}(\omega_l^{-1}) } \\
    &= \log \sum_l \Tr_{\min(d_{A_l}, d_{B_l})}(\omega_{l}^{-1}), \label{eq:Dmin-end}
\end{align}
where the final two equalities follow by choosing $p_l \propto \Tr_{\min(d_{A_l}, d_{B_l})}(\omega_{l}^{-1})$.

The expressions for the $\cb-$divergences follow easily:
\begin{align}
    D^{\cb}(\id_{\Hil} \Vert \cQ) &= D(\id_{\Hil} \otimes \id_{\Hil} \Vert \id_{\Hil} \otimes \cQ) \\
   &= D(\id_{\Hil \otimes\Hil} \Vert \oplus_l \id_{\Hil\otimes A_l} \otimes \cR_{\omega_l}) \\
   &= \log \sum_l \Tr_{\min(d_{\Hil}d_{A_l}, d_{B_l})}(\omega_{l}^{-1}) \\
   &=\log \sum_l \Tr(\omega_{l}^{-1}).
\end{align}
Finally, the additivity claim follows since the $\max-\cb$ divergence is always additive \cite{Wilde2020amortized}. Alternatively, we can directly check that the stated formula is additive.
\end{proof}

\begin{corollary} \label{corollary:IdvQ}
    Let $\mathcal{Q}:\B{\Hil} \to \B{\Hil}$ be an idempotent channel with $\cQ(\iden_{\Hil})$ of full rank. Then, for any $\epsilon\in (0,1)$ and $r>D^{\cb}(\id_{\Hil}\Vert \cQ)$:
    \begin{align}
        \lim_{n\to \infty} e_P(\id_{\Hil},\cQ,n,\epsilon) = \lim_{n\to \infty} e_A(\id_{\Hil},\cQ,n,\epsilon) &= D^{\cb}(\id_{\Hil}\Vert \cQ) \\
        e^{\operatorname{sc}}_P(\id_{\Hil},\cQ, r) = e^{\operatorname{sc}}_A (\id_{\Hil},\cQ,r) &= r - D^{\cb}(\id_{\Hil} \Vert \cQ) \\
        \lim_{n\to \infty} -\frac{1}{n} \log (\perr^{P}(\id_{\Hil}, \cQ, n)) = \lim_{n\to \infty} -\frac{1}{n} \log (\perr^{A}(\id_{\Hil}, \cQ, n)) &= D^{\cb}(\id_{\Hil} \Vert \cQ).
    \end{align}
\end{corollary}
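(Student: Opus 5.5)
The plan is to sandwich every exponent between a lower bound built from a single-copy quantity and an upper bound built from a regularized quantity. Theorem~\ref{theorem:IDvQ} makes both of these equal to $D^{\cb}(\id_{\Hil}\Vert\cQ)=\log\sum_l\Tr(\omega_l^{-1})=:C$. The key input is that, by monotonicity in $\alpha$, every Rényi cb-divergence lies between $D_{\min}^{\cb}$ and $D_{\max}^{\cb}$, and these two coincide. Hence
\begin{equation}
D^{\cb}_\alpha(\id_{\Hil}\Vert\cQ)=\sD^{\cb}_{\alpha'}(\id_{\Hil}\Vert\cQ)=C \quad \text{for all } \alpha\in(0,1),\ \alpha'>1 .
\end{equation}
For regularization I apply Theorem~\ref{theorem:IDvQ} to $\id_{\Hil}^{\otimes n}=\id_{\Hil^{\otimes n}}$ and $\cQ^{\otimes n}$. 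The latter is again idempotent with full-rank $\cQ^{\otimes n}(\iden)$; its decomposition has blocks indexed by $(l_1,\dots,l_n)$ with states $\omega_{l_1}\otimes\cdots\otimes\omega_{l_n}$. Then $D_{\min}^{\cb}=D_{\max}^{\cb}=nC$ for the $n$-fold pair, because $\sum_{l_1,\dots,l_n}\prod_j\Tr(\omega_{l_j}^{-1})=(\sum_l\Tr\omega_l^{-1})^n$. So all Rényi cb-divergences of the $n$-fold pair equal $nC$, and every regularized quantity $D^{\cb,\reg}_\alpha$, $\sD^{\cb,\reg}_{\alpha'}$, $D^{\cb,\reg}$ equals $C$.

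\textbf{Stein exponents.} I use the chain in Eq.~\eqref{eq:channel_discrimination_n_shot_exponents_renyi}, taking the lower bound with $\Phi^{\otimes n}$, i.e.\ $\frac1n D_\alpha^{\cb}(\id^{\otimes n}\Vert\cQ^{\otimes n})=C$. This gives
\begin{equation}
C+\frac{\alpha}{n(\alpha-1)}\log\frac1\epsilon\le e_P\le e_A\le C+\frac{\alpha'}{n(\alpha'-1)}\log\frac1{1-\epsilon}
\end{equation}
for fixed $\alpha,\alpha'$. Letting $n\to\infty$ gives both limits equal to $C$ for every $\epsilon\in(0,1)$.

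\textbf{Strong converse exponent.} I plug $\sD^{\cb,\reg}_\alpha=C$ for all $\alpha>1$ into Eq.~\eqref{eq:fawzi-sc}. This gives $\sup_{\alpha>1}\frac{\alpha-1}{\alpha}(r-C)$, and since $r>C$ the supremum is the limit $\alpha\to\infty$, namely $r-C$.

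\textbf{Symmetric exponents.} Lemma~\ref{lemma:perr<=} gives an upper bound $D^{\cb}_{\max}=C$ on the adaptive limsup. Lemma~\ref{lemma:perr>=} gives a lower bound $\sup_{\alpha\in(0,1)}(1-\alpha)D_\alpha^{\cb,\reg}=\sup_{\alpha}(1-\alpha)C=C$, approached as $\alpha\to0$, on the parallel liminf. Together with $\perr^A\le\perr^P$, all four quantities are squeezed to $C$, so both limits exist and equal $C$.

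The main obstacle I expect is justifying the regularization step rigorously. This means checking that $\cQ^{\otimes n}$ genuinely has the direct-sum form required by Theorem~\ref{theorem:IDvQ}, after regrouping tensor factors via a unitary so that $A=\otimes A_{l_j}$ and $B=\otimes B_{l_j}$. The paper's remark that $D_{\max}^{\cb}$ is additive gives the upper side directly. For the lower side I would alternatively use superadditivity of $D_{\min}^{\cb}$ under product inputs, which suffices since $D_\alpha^{\cb}(\Phi^{\otimes n}\Vert\Psi^{\otimes n})\ge nD_\alpha^{\cb}(\Phi\Vert\Psi)$ by tensoring optimal inputs and additivity of $D_\alpha$ on product states.
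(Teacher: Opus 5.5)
Your proposal is correct and follows the paper's proof. Both arguments squeeze each exponent using the $n$-shot R\'enyi bounds in Eq.~\eqref{eq:channel_discrimination_n_shot_exponents_renyi}, the strong converse formula in Eq.~\eqref{eq:fawzi-sc}, and Lemmas~\ref{lemma:perr<=} and \ref{lemma:perr>=}, all of which collapse to $D^{\cb}(\id_{\Hil}\Vert\cQ)$ by Theorem~\ref{theorem:IDvQ}. Your regularization step, which applies Theorem~\ref{theorem:IDvQ} to $\cQ^{\otimes n}$ with blocks indexed by $\mathbf{l}$ and states $\otimes_j \omega_{l_j}$, is exactly the ``direct check'' the paper offers as an alternative to citing additivity of $D^{\cb}_{\max}$.
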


\begin{proof}Besides just the asymptotic expressions, we actually get the following slightly stronger $n$-shot bounds from \eqref{eq:channel_discrimination_n_shot_exponents_renyi}, for all $\alpha\in (0,1)$, $\alpha'>1$, and $\epsilon\in (0,1)$: 
    \begin{align}
     D^{\cb}(\id_{\Hil} \Vert \cQ) + \frac{\alpha}{n(\alpha-1)}\log \left( \frac{1}{\epsilon} \right) &\leq e_P(\id_{\Hil},\cQ,n,\epsilon) \\
     &\leq e_A(\id_{\Hil}, \cQ,n,\epsilon) \\
     &\leq D^{\cb}(\id_{\Hil} \Vert \cQ) + \frac{\alpha'}{n(\alpha'-1)} \log\left( \frac{1}{1-\epsilon} \right),
\end{align}
where we used Theorem~\ref{theorem:IDvQ} to deduce that $D_{\alpha}^{\cb}(\id_{\Hil} \Vert \cQ)=D^{\cb}(\id_{\Hil} \Vert \cQ) = \sD_{\alpha'}^{\cb}(\id_{\Hil} \Vert \cQ)$ for all $\alpha\in (0,1)$, $\alpha'>1$, as well as $D^{\cb}(\id_{\Hil}^{\otimes n} \Vert  \cQ^{\otimes n}) 
     = n D^{\cb}(\id_{\Hil} \Vert  \cQ)$. The first claim then follows by taking the limit $n\to \infty$.

The second claim follows from Eq.~\eqref{eq:fawzi-sc} and the additivity result from Theorem~\ref{theorem:IDvQ}:
\begin{align}
    e_A^{\mathrm{sc}}(\id_{\Hil}, \cQ, r) = e_P^{\mathrm{sc}}(\id_{\Hil}, \cQ, r) &= \sup_{α > 1} {α - 1 \over α}(r - \widetilde{D}_α^{\cb, \reg}\big(\id_{\Hil}\|\cQ)\big) \\
    &= r - D^{\cb}(\id_{\Hil} \Vert \cQ).
\end{align}

The final claim follows from Lemma~\ref{lemma:perr<=} and Lemma~\ref{lemma:perr>=}:
    \begin{align}
          \limsup_{n \to \infty} - {1 \over n}\log (\perr^{A}(\id_{\Hil}, \cQ, n)) &\leq D^{\cb}_{\max}(\id_{\Hil}\|\cQ) = D^{\cb}(\id_{\Hil}\Vert \cQ),
     \end{align}     
     \begin{align}
          \liminf_{n \to \infty} - {1 \over n} \log(\perr^{P}(\id_{\Hil}, \cQ, n)) &\geq \sup_{α \in (0, 1)} (1 - α) D_α^{\cb, \reg}(\id_{\Hil}\|\cQ) \\
          &= D^{\cb}(\id_{\Hil}\Vert \cQ),
    \end{align}
where $D^{\cb}_{\max}(\id_{\Hil}\|\cQ) = D^{\cb}(\id_{\Hil}\Vert \cQ)=D^{\cb, \reg}_{\alpha}(\id_{\Hil}\|\cQ)$ again follow from Theorem~\ref{theorem:IDvQ}.
\end{proof}

\subsection{Noisy vs Noisy: Two Idempotent Channels}

In this section, we attempt to characterize the optimal error-exponents for discriminating between two idempotent quantum channels $\cP,\cQ:\B{\Hil}\to \B{\Hil}$. Throughout this section, we assume that both $\cP(\iden_{\Hil})$ and $\cQ(\iden_{\Hil})$ have full rank.

We begin by noting a general lemma.

\begin{lemma}\label{lemma:D(PQ)=inf}
    Let $\cP, \cQ : \B{\Hil}\to \B{\Hil} $ be idempotent channels such that both $\cP(\iden_{\Hil}), \cQ(\iden_{\Hil})$ have full rank. If $\im (\cQ^*) \not\subseteq \im (\cP^*)$, then $D(\cP \Vert \cQ)= +\infty$.
\end{lemma}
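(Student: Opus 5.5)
The plan is to construct a single input state $\nu\in\cD(\Hil)$ for which $\supp \cP(\nu)\not\subseteq \supp \cQ(\nu)$. By Definition~\ref{def:divergence} this forces $D(\cP(\nu)\Vert\cQ(\nu))=+\infty$, and hence $D(\cP\Vert\cQ)=+\infty$ by \eqref{eq:Ddef}. The natural tool is projections. Since $\im(\cQ^*)$ is a finite-dimensional unital $*$-subalgebra, it is linearly spanned by its orthogonal projections; this is clear from the block form \eqref{eq:phasespace}. Likewise $\im(\cP^*)$ is a subspace. Hence $\im(\cQ^*)\not\subseteq\im(\cP^*)$ yields a projection $E=E^2=E^\dagger\in\im(\cQ^*)$ with $E\notin\im(\cP^*)$. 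I will take $\nu$ to be a state supported in $E\Hil$.

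First I control the support of $\cQ(\nu)$. Since $\iden-E\in\im(\cQ^*)$ and $\cQ^*$ is idempotent, we have $\cQ^*(\iden-E)=\iden-E$. Therefore
\begin{equation}
\Tr\big(\cQ(\nu)(\iden-E)\big)=\Tr\big(\nu(\iden-E)\big)=0 ,
\end{equation}
and positivity of $\cQ(\nu)$ gives $\supp\cQ(\nu)\subseteq E\Hil$. It then suffices to find such a $\nu$ with $\Tr(\cP(\nu)(\iden-E))=\Tr(\nu\,\cP^*(\iden-E))>0$. Such a $\nu$ exists precisely when $E\,\cP^*(\iden-E)\,E\neq 0$.

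The main step is to show that $E\,\cP^*(\iden-E)\,E= 0$ would force $E\in\im(\cP^*)$, contradicting the choice of $E$. Suppose $E X E=0$ with $X:=\cP^*(\iden-E)$. Since $X\ge 0$, this gives $XE=EX=0$, so $X$ is supported in $(\iden-E)\Hil$. Because $\cP^*$ is unital and positive, $X\le\iden$, and compressing gives $X\le \iden-E$.

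To upgrade this to equality I use the full-rank invariant state $\delta=\cP(\delta)$ (see the footnote in Section~\ref{sec:main_results}):
\begin{equation}
\Tr\big(\delta X\big)=\Tr\big(\cP(\delta)(\iden-E)\big)=\Tr\big(\delta(\iden-E)\big).
\end{equation}
Thus $\Tr\big(\delta\,[(\iden-E)-X]\big)=0$ with $(\iden-E)-X\ge0$. Full rank of $\delta$ then gives $X=\iden-E$. Hence $\iden-E\in\im(\cP^*)$, and by unitality $E\in\im(\cP^*)$, which is the desired contradiction.

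I expect this equality step to be the only delicate point. It is exactly where the full-rank hypothesis on $\cP(\iden_{\Hil})$ is used, through the existence of the full-rank invariant $\delta$. The full-rank hypothesis on $\cQ$ enters only through the algebra structure of $\im(\cQ^*)$, which supplies the projection $E$.
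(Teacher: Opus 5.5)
Your proof is correct, and after the common first step (both you and the paper pick a projection $E\in\im(\cQ^*)$ with $E\notin\im(\cP^*)$) it takes a genuinely different route.

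\textbf{The paper's route.} It writes $\cP=\sqrt{\delta}\,\cP_{\Tr}(\cdot)\sqrt{\delta}$ and $\cQ=\sqrt{\omega}\,\cQ_{\Tr}(\cdot)\sqrt{\omega}$ and argues by ranks. The map $\cQ_{\Tr}$ fixes $E$. Uhlmann's majorization characterization of unital channels (Lemma~\ref{lemma:rank-Phi1}) and Choi's multiplicative-domain criterion (Lemma~\ref{lemma:rank-Phi2}) give $\operatorname{rank}\cP_{\Tr}(E)>\operatorname{rank}E$. Hence $E/\Tr E$ has strictly larger output rank under $\cP$ than under $\cQ$.

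\textbf{Your route.} You argue directly with supports and positivity. Since $\cQ^*$ fixes $\iden-E$, the channel $\cQ$ maps states supported in $E\Hil$ to states supported in $E\Hil$. If $\cP$ did the same, you would get $\cP^*(\iden-E)\le\iden-E$. The faithful invariant state then upgrades this to equality, forcing $E\in\im(\cP^*)$.

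\textbf{What your approach buys.} It is more elementary and self-contained. It needs neither the conditional-expectation factorization nor majorization. It also avoids identifying the multiplicative domain of $\cP_{\Tr}$ with $\im(\cP^*)$, a fact the paper uses implicitly when it passes from $E\notin\im(\cP^*)$ to a strict rank increase via Lemma~\ref{lemma:rank-Phi2}. From $\cP$ you only use:
\begin{itemize}
\item positivity and unitality of $\cP^*$;
\item a full-rank invariant state;
\item idempotence, solely to identify $\im(\cP^*)$ with the fixed points of $\cP^*$.
\end{itemize}

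\textbf{What the paper's approach buys.} It gives the slightly more quantitative statement that $\cP$ strictly raises the rank of $E/\Tr E$ while $\cQ$ preserves it. It also reuses lemmas the paper has already set up.

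Finally, your witness can simply be taken to be $\nu=E/\Tr E$, since $\Tr(\nu X)=\Tr(EXE)/\Tr E>0$ whenever $EXE\neq 0$. This is exactly the paper's input state.
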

\begin{proof}
    As noted already, there exist positive definite operators $\delta\in \im (\cP)$ and $\omega\in \im (\cQ)$ such that (see Eq.~\eqref{eq:P-PTr})
    \begin{equation}
        \cP(X) = \sqrt{\delta} \cP_{\Tr} (X) \sqrt{\delta}, \quad \cQ(X) = \sqrt{\omega} \cQ_{\Tr} (X) \sqrt{\omega},
    \end{equation}
    where $\cP_{\Tr}, \cQ_{\Tr}$ are the unique trace-preserving conditional expectations onto $\im (\cP^*), \im (\cQ^*)$ respectively, see Eq.~\eqref{eq:P-PTr}. Assume that $\im (\cQ^*) \not\subseteq \im (\cP^*)$. Since any $*-$subalgebra in $\B{\Hil}$ is generated by its projections, there exists a projection $Q\in \im (\cQ^*)$ such that $Q \not\in\im (\cP^*)$. Moreover, since $\cP_{\Tr}, \cQ_{\Tr}$ are unital quantum channels, Lemmas~\ref{lemma:rank-Phi1}, \ref{lemma:rank-Phi2} show that 
    \begin{equation}
        \operatorname{rank} (\cP(Q)) =\operatorname{rank} (\cP_{\Tr}(Q) )>\operatorname{rank}Q = \operatorname{rank} (\cQ_{\Tr}(Q))=\operatorname{rank} (\cQ(Q)).
    \end{equation}
    Hence, by choosing $\rho = Q/\Tr Q$, we see $D(\cP\Vert \cQ) \geq D(\cP (\rho)\Vert \cQ(\rho)) = + \infty$.
\end{proof}

\begin{remark}
    Lemma~\ref{lemma:D(PQ)=inf} also gives a sufficient condition for $\sD_α(\cP\|\cQ) = \infty$ when $α > 1$, since $\sD_{\alpha}(\cP \Vert \cQ)\geq D(\cP \Vert \cQ)$. However, a similar statement is not true for $α < 1$. For $α < 1$, $D_α(\cP\|\cQ) = \infty$ if and only if there exists a state $ρ$ such that $\cP(ρ)$ and $\cP(σ)$ are orthogonal (to see this note that the set of states is compact, and for $α < 1$, $D_α$ is continuous). However, for idempotent channels $\cP, \cQ : \B{\Hil}\to \B{\Hil} $ with $\cP(\iden_{\Hil}), \cQ(\iden_{\Hil})$ of full rank, this never happens. To see this, note that \autoref{theorem:IDvQ} implies that for any state $\rho\in \cD (\Hil)$ the support projections satisfy $\Pi_{\rho}\leq \Pi_{\cP(\rho)}$ and $\Pi_{\rho} \leq \Pi_{\cQ(\rho)}$, and so $\cP(\rho)$ and $\cQ(\rho)$ cannot be orthogonal. 
\end{remark}

In accordance with Lemma~\ref{lemma:D(PQ)=inf}, we now restrict ourselves to discriminating between idempotent channels $\cP, \cQ : \B{\Hil}\to \B{\Hil} $ with $\im (\cQ^*) \subseteq \im (\cP^*) \subseteq \B{\Hil}$. Since $\Hil$ is finite-dimensional, these image algebras are $*$-isomorphic to direct sums of matrix blocks \cite[Theorem 11.2]{Takesaki1979algebra}: $\im (\cP^*) \cong \oplus_{k=1}^K \M{d_k}$ and $\im (\cQ^*) \cong \oplus_{l=1}^L \M{a_l}$, and there exist orthogonal direct sum decompositions of the underlying Hilbert space (see Appendix~\ref{appen:three-layer}):
\begin{align}
    \Hil &= \oplus_{k=1}^K \oplus_{l=1}^L A_l \otimes B_{k,l} \otimes C_k \label{eq:H=ABC1} \\
    &= \oplus_k \underbrace{(\oplus_l A_l \otimes B_{k,l})}_{:=D_k} \otimes C_k \label{eq:H=ABC2} \\
    &= \oplus_l A_l \otimes \underbrace{(\oplus_k B_{k,l} \otimes C_k)}_{:=E_l} \label{eq:H=ABC3}
\end{align}
such that $\dim{D_k}= d_k, \dim A_l = a_l, \dim B_{k,l} \geq 0$, $\dim C_k \geq 1$ and
\begin{align}
    \im (\cP^*) &= \bigoplus_k \B{D_k} \otimes \iden_{C_k}  \\
    \im (\cQ^*) &= \bigoplus_l \B{A_l} \otimes \iden_{E_l}.
\end{align}
Denote the orthogonal projection onto a single $k,l$ summand $A_l \otimes B_{k,l} \otimes C_k\subseteq \Hil$ by $\Pi_{k,l}$, so that the minimal central projections of $\im (\cP^*)$ and $\im (\cQ^*)$ become 
\begin{align}
    P_k = \sum_l \Pi_{k,l} \quad\text{and}\quad Q_l = \sum_k \Pi_{k,l},
\end{align}
which project onto $D_k\otimes C_k\subseteq \Hil$ and $A_l \otimes E_l\subseteq \Hil$, respectively. Note that $P_k Q_l = Q_l P_k = \Pi_{k,l}$ and for some $k,l$, the projection $\Pi_{k,l}$ might be zero, which corresponds to the fact that $\dim B_{k,l}=0$ for those $k,l$. We can now write the actions of $\cP, \cQ$ as direct sums with respect to the appropriate orthogonal decompositions (see Eq.~\eqref{eq:P-directsum}):
\begin{align}
    \cP = \bigoplus_k \id_{D_k} \otimes \cR_{\delta_k}, \qquad
    \cQ = \bigoplus_l \id_{A_l} \otimes \cR_{\omega_l}, \label{eq:Q-l}
\end{align}
where $\delta_k \in \cD(C_k)$ and $\omega_l \in \cD (E_l)$ are full-rank states. Alternatively, 
\begin{align}
    \cP (X) &= \sum_k V_k \cP_k (V_k^{\dagger} XV_k) V_k^{\dagger}, \\
    \cQ (X) &= \sum_l W_l \cQ_l (W_l^{\dagger} XW_l) W_l^{\dagger},
\end{align}
where $V_k:D_k \otimes C_k\to \Hil$ and $W_l:A_l\otimes E_l \to \Hil$ are the canonical isometries, and $\cP_k : \B{D_k \otimes C_k}\to \B{D_k \otimes C_k}$ and $\cQ_l : \B{A_l \otimes E_l}\to \B{A_l \otimes E_l}$ act as tensor products of identity and replacer: $\cP_k = \id_{D_k} \otimes \cR_{\delta_k}$ and $\cQ_l = \id_{A_l} \otimes \cR_{\omega_l}$. Since $\im(\cQ^*)\subseteq \im (\cP^*)$, 
\begin{equation}\label{eq:QP=Q}
   \cP^*\circ \cQ^* = \cQ^* \quad\text{and} \quad   \cQ \circ \cP = \cQ.
\end{equation}

In general, $\omega_l\in \cD (E_l)=\cD(\oplus_k B_{k,l}\otimes C_k)$ can have coherences across different $k$-blocks, and within each $k$-block, it can be entangled across the $B_{k,l}-C_k$ cut, which makes it trickier to compute the corresponding channel divergences $D(\cP \Vert \cQ)$ (see Section~\ref{sec:full-general}). Hence, in the next section, we first work with the assumption that $\cP$ and $\cQ$ share a common invariant state, which forces $\omega_l$ to be $k-$block diagonal and product across $B_{k,l}-C_k$.

\subsubsection{Common Invariant State}

With the setup of the previous section, we further assume that $\cP, \cQ$ share a common invariant state $\tau=\cP(\tau)=\cQ(\tau)$. Since any such $\tau$ has to be block diagonal with respect to both the $k$ and $l$ block decompositions (Eq.~\eqref{eq:Q-l}), the states $\omega_l\in \cD(E_l)= \cD (\oplus_k B_{k,l}\otimes C_k)$ must be $k-$block diagonal and product across the $B_{k,l}-C_k$ cut:  
\begin{equation}\label{eq:omega-block-prod}
  \omega_l = \bigoplus_k  p_{kl} \, \tau_{k,l} \otimes \delta_k, 
\end{equation}
where for each $l$, $p_{k,l}\geq 0, \sum_k p_{k,l}=1$, and $\tau_{k,l}\in \cD(B_{k,l})$ are full-rank states. Note that since $\omega_l$ has full rank, $p_{kl}>0$ whenever $\dim B_{k,l}>0$. 

We can now state the main result of this section.

\begin{theorem} \label{theorem:PvsQ:common}
    Let $\cP, \cQ : \B{\Hil}\to \B{\Hil} $ be idempotent channels with $\im (\cQ^*)\subseteq \im (\cP^*)$ that share a common full-rank invariant state. Then, $\Hil$ decomposes as 
    \begin{align}
    \Hil &= \oplus_{k=1}^K \oplus_{l=1}^L A_l \otimes B_{k,l} \otimes C_k \\
    &= \oplus_k \underbrace{(\oplus_l A_l \otimes B_{k,l})}_{:=D_k} \otimes C_k  \\
    &= \oplus_l A_l \otimes \underbrace{(\oplus_k B_{k,l} \otimes C_k)}_{:=E_l}, 
    \end{align}
    with 
    \begin{equation}
        \cP = \bigoplus_k \id_{D_k} \otimes \cR_{\delta_k}, \qquad \cQ=\bigoplus_l \id_{A_l}\otimes \cR_{\omega_l},
    \end{equation}
    where $\delta_k\in \cD(C_k)$ and $\omega_l\in \cD(E_l)$ are full-rank states such that 
    \begin{equation}
        \omega_l = \bigoplus_k p_{k,l} \tau_{k,l} \otimes \delta_k
    \end{equation}
    for some probability distributions $(p_{k,l})_k$ and full-rank states $\tau_{k,l}\in \cD(B_{k,l})$. Furthermore,
    \begin{align}
      D_{\min}(\cP \Vert \cQ) =  D(\cP\Vert \cQ) = D_{\max}(\cP\Vert \cQ) &= \max_k \left( \log \sum_l \frac{\Tr_{\min(d_{A_l}, d_{B_{k,l}})} (\tau_{k,l}^{-1})}{p_{k,l}} \right) \\
      D^{\cb}_{\min}(\cP \Vert \cQ) =  D^{\cb}(\cP\Vert \cQ) = D^{\cb}_{\max}(\cP\Vert \cQ) &= \max_k \left( \log \sum_l \frac{\Tr (\tau_{k,l}^{-1})}{p_{k,l}} \right).
    \end{align}
    Consequently, the $\cb$-divergences are all additive:
    \begin{align}
     \forall n\in \mathbb{N}: \quad   D^{\cb}(\cP^{\otimes n}\Vert \cQ^{\otimes n}) = n D^{\cb}(\cP \Vert \cQ).
    \end{align}
\end{theorem}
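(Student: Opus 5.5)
The decomposition part of the statement is already established in the discussion preceding the theorem (Eqs.~\eqref{eq:H=ABC1}--\eqref{eq:Q-l} and \eqref{eq:omega-block-prod}). What remains is to compute the channel divergences, and the plan is to reduce everything to Theorem~\ref{theorem:IDvQ}. The key observation is that $\cQ\circ\cP=\cQ$ (Eq.~\eqref{eq:QP=Q}), together with the data-processing inequality for $\cP$ and the idempotence of $\cP$. Together these show that for every divergence $\mathbf{D}$ in Definition~\ref{def:divergence},
\begin{equation*}
\sup_\rho \mathbf{D}(\cP(\rho)\Vert\cQ(\rho))=\sup_{\rho}\mathbf{D}(\cP(\rho)\Vert \cQ(\cP(\rho))).
\end{equation*}
So it suffices to compare the identity and $\cQ$ on the image of $\cP$, i.e.\ on states of the form $\sigma=\oplus_k \sigma_{D_k}\otimes\delta_k$, weighted by probabilities $q_k$.

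Write $\sigma_{D_k}$ on $D_k=\oplus_l A_l\otimes B_{k,l}$. Using $\omega_l=\oplus_k p_{k,l}\tau_{k,l}\otimes\delta_k$, one computes
\begin{equation*}
\cQ(\sigma)=\bigoplus_{k,l} \frac{1}{q_k}\Big(\sum_{k'} q_{k'}\Tr_{B_{k',l}}(\Pi_{k',l}\sigma_{D_{k'}}\Pi_{k',l})\Big)\otimes p_{k,l}\,\tau_{k,l}\otimes \delta_k ,
\end{equation*}
with the $q_k$ bookkeeping made precise. The $\delta_k$ factor is common to both arguments and drops out, up to the weights. The key structural point is that on block $k$, the pair reduces to $\sigma_{D_k}$ against an idempotent channel on $D_k$ of the form $\oplus_l \id_{A_l}\otimes\cR_{\tau_{k,l}}$, with the $l$-th summand rescaled by $p_{k,l}$ times a factor that mixes the marginals from other $k'$.

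\textbf{Upper bound on $D_{\max}$.} First I use $\cQ(\sigma)\ge$ the contribution of $k'=k$ only. This term is $q_k\,\oplus_l (\Tr_{B_{k,l}}\sigma_{kl})\otimes p_{k,l}\tau_{k,l}\otimes\delta_k$, i.e.\ a blockwise copy of $\cQ^{(k)}:=\oplus_l\id_{A_l}\otimes\cR_{p_{k,l}\tau_{k,l}}$ applied to $\sigma_{D_k}$. I then apply the pinching argument from the proof of Theorem~\ref{theorem:IDvQ} with $\omega_l$ replaced by $p_{k,l}\tau_{k,l}$. Since $(p\tau)^{-1}=\tau^{-1}/p$, this gives
\begin{equation*}
\sigma_{D_k}\le T_k\,\cQ^{(k)}(\sigma_{D_k}), \qquad T_k=\sum_l \frac{\Tr_{\min(d_{A_l},d_{B_{k,l}})}(\tau_{k,l}^{-1})}{p_{k,l}} .
\end{equation*}
It follows that $\sigma\le (\max_k T_k)\,\cQ(\sigma)$, which is the desired bound on $D_{\max}$.

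\textbf{Lower bound on $D_{\min}$.} Choose $k^*$ maximizing $T_k$ and take $\rho$ supported in the single block $k^*$: $\rho=\psi\otimes\delta_{k^*}$, with $\psi$ the optimal pure state on $D_{k^*}$ from the achievability part of Theorem~\ref{theorem:IDvQ}. Since $\rho$ lives in one $k$-block, the cross terms $k'\neq k$ vanish. Then $D_{\min}(\cP(\rho)\Vert\cQ(\rho))$ reduces to the computation \eqref{eq:Dmin-start}--\eqref{eq:Dmin-end}, with $\omega_l\to p_{k^*,l}\tau_{k^*,l}$. One point needs a check: $\cP(\rho)=\rho$ is not pure, because of the $\delta_{k^*}$ factor. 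However, $\Pi_{\cP(\rho)}=\psi\otimes\iden$ and $\Tr[(\psi\otimes\iden)(X\otimes\delta)]=\Tr[\psi X]$, so the computation goes through. Combined with the ordering $D_{\min}\le D\le D_{\max}$, this pins down all three single-shot values.

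\textbf{cb-versions and additivity.} For the cb-versions, note that $\id_{A'}\otimes\cP$ and $\id_{A'}\otimes\cQ$ are again of the same form with $A_l\to A'\otimes A_l$, the same $\tau_{k,l},p_{k,l},\delta_k$, and still sharing a common invariant state. Applying the single-shot formula then lets the truncated trace saturate to $\Tr(\tau_{k,l}^{-1})$. Additivity follows from additivity of $D^{\cb}_{\max}$ \cite{Wilde2020amortized}, or by checking directly that $\cP^{\otimes n},\cQ^{\otimes n}$ have this form with $\max_k$ of a product-type expression multiplying out.

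I expect the main obstacle to be the upper bound. The operator inequality must handle the mixing of different $k'$ marginals inside $\cQ(\sigma)$, and dropping those terms must be legitimate (it is, since they are positive). The pinching weights must also be arranged across both indices $k$ and $l$ simultaneously; doing the pinching in $l$ within each fixed $P_k$-block, and using that $\sigma$ is already $k$-block diagonal, should resolve this.
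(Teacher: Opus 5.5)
Your proposal is correct and follows essentially the same route as the paper. You use $\cQ\circ\cP=\cQ$ to reduce to states in $\im(\cP)$, and identify each $k$-block with the problem of Theorem~\ref{theorem:IDvQ} for $\oplus_l \id_{A_l}\otimes\cR_{p_{k,l}\tau_{k,l}}$. You then achieve the bound with a single-block state $\psi^*\otimes\delta_{k^*}$, enlarge $A_l$ by the reference system for the cb-version, and conclude additivity from additivity of $D^{\cb}_{\max}$. The only difference is cosmetic: you handle states spread over several $k$-blocks by discarding the positive cross-block contributions to $\cQ(\sigma)$, whereas the paper uses joint quasi-convexity of $D_{\max}$ to restrict to pure, hence single-block, states in $\im(\cP)$; both are valid, and the stray factor $1/q_k$ in your displayed formula for $\cQ(\sigma)$ should simply be dropped.
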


\begin{proof} 
The proof of the first claim is described in the discussion that precedes the theorem (see also Appendix~\ref{appen:three-layer}). For the second claim, note that 
\begin{align}
    D_{\max}(\cP \Vert \cQ) &:= \sup_{\rho\in \cD(\Hil)} D_{\max}(\cP(\rho) \Vert \cQ(\rho) )  \\
    &= \sup_{\rho\in \cD(\Hil)} D_{\max}(\cP (\rho) \Vert \cQ \circ \cP (\rho) ) \\
    &= \sup_{\rho\in \im (\cP)} D_{\max}( \rho \Vert \cQ (\rho) ) \\
    &= \sup_{ \rho\in \im (\cP) \text{ pure } } D_{\max} \left( \rho \big\Vert \cQ(\rho) \right),
\end{align}
where the equality $\cQ \circ \cP = \cQ$ follows from $\im (\cQ^*)\subseteq \im (\cP^*)$ and the final equality follows from the joint quasi-convexity of $D_{\max}$. Note that pure states in $\im (\cP)=\oplus_k \B{D_k}\otimes \delta_k$ are of the form $\rho = 0\oplus \ldots\oplus (\psi \otimes \delta_k) \oplus \ldots \oplus 0$, supported on $D_k \otimes C_k$ for some $k$ and pure state $\psi\in \B{D_k}$. Recall that $D_k = \oplus_l A_l \otimes B_{k,l}$. Moreover, 
\begin{align}
    \cQ(\rho) = \bigoplus_l q_l \Tr_{B_{k,l}}(\psi_l) \otimes \omega_l = \bigoplus_{k'} \bigoplus_l p_{k'l} q_l \Tr_{B_{k,l}}(\psi_l)   \otimes \tau_{k',l} \otimes \delta_{k'}
\end{align}
where $\psi_l:= \frac{1}{q_l}V_{l|k}^{\dagger} \psi V_{l|k}\in \cD (A_l \otimes B_{k,l})$ and $q_l=\Tr (V_{l|k}^{\dagger} \psi V_{l|k})$, with $V_{l|k}:A_l\otimes B_{k,l}\to D_k$ being the canonical inclusion isometries. Hence,
\begin{align}
   \sup_{ \rho\in \im (\cP) \text{ pure } } D_{\max} ( \rho &\Vert \cQ(\rho) ) \nonumber \\
    &= \max_k \sup_{\psi\in \cD(D_k)}  D_{\max}(\psi\Vert \oplus_l p_{k,l} q_l \, \Tr_{B_{k,l}}(\psi_l)\otimes \tau_{k,l}) + D_{\max}(\delta_k\Vert \delta_k) \\
    &=  \max_k D_{\max} (\id_{D_k} \Vert \oplus_l p_{k,l} \, \id_{A_l} \otimes \cR_{\tau_{k,l}}) \\
    &= \max_k \left( \log \sum_l \frac{\Tr_{\min(d_{A_l}, d_{B_{k,l}})} (\tau_{k,l}^{-1})}{p_{k,l}} \right), \label{eq:max-k}
\end{align}
where the last equality follows from Theorem~\ref{theorem:IDvQ}. 

For the achievability claim, we can choose the block $k^*$ that achieves the maximum in Eq.~\eqref{eq:max-k} and let $\rho^*=0\oplus \ldots\oplus (\psi^* \otimes \delta_{k^*}) \oplus \ldots \oplus 0$ be supported on $D_{k*} \otimes C_{k^*}$ such that $\psi^*\in \cD(D_{k^*})$ is constructed as in the proof of Theorem~\ref{theorem:IDvQ} (see Eq.~\eqref{eq:Dmin-start}-\eqref{eq:Dmin-end}), so that
\begin{align}
    D_{\min} ( \rho^* \Vert \cQ(\rho^*) ) &=  \log \sum_l \frac{\Tr_{\min(d_{A_l}, d_{B_{k^*,l}})} (\tau_{k^*,l}^{-1})}{p_{k^*,l}}  \\
    &=\max_k \left( \log \sum_l \frac{\Tr_{\min(d_{A_l}, d_{B_{k,l}})} (\tau_{k,l}^{-1})}{p_{k,l}} \right).
\end{align}
The expression for the $\cb-$divergences follow similarly.

Finally, the additivity claim follows since the $\max-\cb$ divergence is always additive \cite{Wilde2020amortized}. Alternatively, we can directly check that the stated formula is additive. 
\end{proof}

\begin{corollary}\label{corollary:PvQ:common-1}
    Let $\cP, \cQ : \B{\Hil}\to \B{\Hil} $ be idempotent channels that share a common full-rank invariant state. Then, for any $\epsilon\in (0,1)$ and $r>D^{\cb}(\cP\Vert \cQ)$:
    \begin{align}
        \lim_{n\to \infty} e_P(\cP,\cQ,n,\epsilon) = \lim_{n\to \infty} e_A(\cP,\cQ,n,\epsilon) &= D^{\cb}(\cP\Vert \cQ) \label{eq:D(P||Q=e)} \\
        e^{\operatorname{sc}}_P(\cP,\cQ, r) = e^{\operatorname{sc}}_A (\cP,\cQ,r) &= r - D^{\cb}(\cP \Vert \cQ) \label{eq:D(P||Q=esc)}.
    \end{align}
\end{corollary}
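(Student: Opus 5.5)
Note first that the corollary does not assume $\im(\cQ^*)\subseteq\im(\cP^*)$, so the proof splits into two cases.

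\textbf{Case 1: the inclusion holds.} I will copy the argument of Corollary~\ref{corollary:IdvQ}, with Theorem~\ref{theorem:PvsQ:common} in place of Theorem~\ref{theorem:IDvQ}.

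The key input is the sandwich
\begin{equation*}
D^{\cb}_{\min}(\cP\Vert\cQ)=D^{\cb}(\cP\Vert\cQ)=D^{\cb}_{\max}(\cP\Vert\cQ).
\end{equation*}
Combined with the monotone ordering $D_{\min}\le D_\alpha\le D\le \sD_{\alpha'}\le D_{\max}$, this gives $D^{\cb}_\alpha(\cP\Vert\cQ)=D^{\cb}(\cP\Vert\cQ)=\sD^{\cb}_{\alpha'}(\cP\Vert\cQ)$ for all $\alpha\in(0,1)$ and $\alpha'>1$.

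The same sandwich holds with $\cP,\cQ$ replaced by $\cP^{\otimes n},\cQ^{\otimes n}$. Indeed, additivity of $D^{\cb}_{\max}$ gives $D^{\cb}_{\max}(\cP^{\otimes n}\Vert\cQ^{\otimes n})=nD^{\cb}(\cP\Vert\cQ)$. For the lower end, restrict the input to $n$-fold products of the optimal single-copy input $\rho^*$ (with reference system). Additivity of $D_{\min}$ on tensor products then gives $D^{\cb}_{\min}(\cP^{\otimes n}\Vert\cQ^{\otimes n})\ge nD^{\cb}_{\min}(\cP\Vert\cQ)=nD^{\cb}(\cP\Vert\cQ)$. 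So every divergence in the chain equals $nD^{\cb}(\cP\Vert\cQ)$ at level $n$, and in particular $\sD^{\cb,\reg}_{\alpha'}(\cP\Vert\cQ)=D^{\cb}(\cP\Vert\cQ)$.

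Apply the $n$-shot bounds \eqref{eq:channel_discrimination_n_shot_exponents_renyi}, using the $n$-copy lower bound $\frac1n D^{\cb}_\alpha(\cP^{\otimes n}\Vert\cQ^{\otimes n})=D^{\cb}(\cP\Vert\cQ)$. This yields
\begin{equation*}
D^{\cb}(\cP\Vert\cQ)+\tfrac{\alpha}{n(\alpha-1)}\log\tfrac1\epsilon\le e_P\le e_A\le D^{\cb}(\cP\Vert\cQ)+\tfrac{\alpha'}{n(\alpha'-1)}\log\tfrac1{1-\epsilon}.
\end{equation*}
Letting $n\to\infty$ gives \eqref{eq:D(P||Q=e)} for every fixed $\epsilon\in(0,1)$.

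For the strong converse, plug $\sD^{\cb,\reg}_{\alpha}(\cP\Vert\cQ)=D^{\cb}(\cP\Vert\cQ)$, valid for all $\alpha>1$, into \eqref{eq:fawzi-sc}. This gives $\sup_{\alpha>1}\frac{\alpha-1}{\alpha}(r-D^{\cb}(\cP\Vert\cQ))$. Since $r>D^{\cb}(\cP\Vert\cQ)$, the factor $\frac{\alpha-1}{\alpha}$ increases to $1$ as $\alpha\to\infty$, so the supremum equals $r-D^{\cb}(\cP\Vert\cQ)$.

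\textbf{Case 2: the inclusion fails.} Lemma~\ref{lemma:D(PQ)=inf} gives $D(\cP\Vert\cQ)=+\infty$, hence $D^{\cb}(\cP\Vert\cQ)=+\infty$. The condition $r>D^{\cb}(\cP\Vert\cQ)$ is then vacuous, so the strong-converse statement is empty. What remains is to show that both Stein exponents are infinite.

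For this, take the projection $Q\in\im(\cQ^*)\setminus\im(\cP^*)$ and $\rho=Q/\Tr Q$ from the proof of Lemma~\ref{lemma:D(PQ)=inf}. Set $\Pi$ to be the support projection of $\cQ(\rho)$. Since $\cP(\rho)$ has strictly larger rank, $\cP(\rho)\not\le\Pi$ in support, so $c:=\Tr(\Pi\,\cP(\rho))<1$.

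Feed $\rho^{\otimes n}$ into $n$ parallel uses and use the test $M_n=\iden-\Pi^{\otimes n}$. Under $\cQ$ the outcome $M_n$ never occurs, so the type-II error is exactly $0$. Under $\cP$ the type-I error is $c^n\to0$. Hence $e_P(\cP,\cQ,n,\epsilon)=+\infty$ for all large $n$, and therefore $e_A=+\infty$ as well.

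The only delicate step is the $n$-copy lower end in Case 1, i.e.\ that $D^{\cb}_\alpha$ (not merely $D^{\cb}_{\max}$) is additive. I resolve it via product inputs and additivity of $D_{\min}$ as above; everything else is a direct transcription of Corollary~\ref{corollary:IdvQ}.
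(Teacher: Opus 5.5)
Your proposal is correct and follows essentially the same route as the paper. You split on whether $\im(\cQ^*)\subseteq\im(\cP^*)$ and invoke Lemma~\ref{lemma:D(PQ)=inf} when it fails; otherwise you combine the divergence collapse and additivity from Theorem~\ref{theorem:PvsQ:common} with the $n$-shot bounds \eqref{eq:channel_discrimination_n_shot_exponents_renyi} and the formula \eqref{eq:fawzi-sc}. Your explicit support test $M_n=\iden-\Pi^{\otimes n}$ in the non-inclusion case is a worthwhile addition: it shows directly, for each fixed $\epsilon$, that $D^{\cb}(\cP\Vert\cQ)=+\infty$ forces infinite Stein exponents, a step the paper asserts without elaboration.
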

\begin{proof}
    If $\im (\cQ^*) \not\subseteq \im (\cP^*)$, then Lemma~\ref{lemma:D(PQ)=inf} shows that $D^{\cb}(\cP \Vert \cQ)= +\infty$. Hence, $\liminf_{n\to \infty}e_P(\cP, \cQ,n,\epsilon)\geq +\infty$. Otherwise, if $\im (\cQ^*) \subseteq \im (\cP^*)$, again as before, we prove not just the asymptotic expressions, but slightly stronger $n$-shot bounds. We get the following from \eqref{eq:channel_discrimination_n_shot_exponents_renyi} for all $\alpha\in (0,1)$, $\alpha'>1$, and $\epsilon\in (0,1)$:
    \begin{align}
     D^{\cb}(\cP \Vert \cQ) + \frac{\alpha}{n(\alpha-1)}\log \left( \frac{1}{\epsilon} \right) &\leq e_P(\cP,\cQ,n,\epsilon) \\
     &\leq e_A(\cP, \cQ,n,\epsilon) \\
     &\leq D^{\cb}(\cP \Vert \cQ) + \frac{\alpha'}{n(\alpha'-1)} \log\left( \frac{1}{1-\epsilon} \right),
\end{align}
and Eq.~\eqref{eq:D(P||Q=e)} follows by taking the limit $n\to \infty$. Note that here, we used Theorem~\ref{theorem:PvsQ:common} to infer that $D_{\alpha}^{\cb}(\cP \Vert \cQ)=D^{\cb}(\cP \Vert \cQ) = \sD_{\alpha'}^{\cb}(\cP \Vert \cQ)$ for all $\alpha\in (0,1)$, $\alpha'>1$. Finally, Eq.~\eqref{eq:D(P||Q=esc)} follows from Eq.~\eqref{eq:fawzi-sc} and the additivity result from Theorem~\ref{theorem:PvsQ:common}:
\begin{align}
    e_A^{\mathrm{sc}}(\cP, \cQ, r) = e_P^{\mathrm{sc}}(\cP, \cQ, r) &= \sup_{α > 1} {α - 1 \over α}(r - \widetilde{D}_α^{\mathrm{reg},\cb}\big(\cP\|\cQ)\big) \\
    &= r - D^{\cb}(\cP \Vert \cQ).
\end{align}
\end{proof}

\begin{corollary}\label{corollary:PvQ:common-2}
Let $\cP, \cQ : \B{\Hil}\to \B{\Hil} $ be idempotent channels that share a common full-rank invariant state. If $\im (\cQ^*)\subseteq \im (\cP^*)$, then
    \begin{align}
        \lim_{n\to \infty} -\frac{1}{n} \log (\perr^{P}(\cP, \cQ, n)) = \lim_{n\to \infty} -\frac{1}{n} \log (\perr^{A}(\cP, \cQ, n)) &= D^{\cb}(\cP \Vert \cQ).
    \end{align}
Similarly, if $\im (\cP^*)\subseteq \im (\cQ^*)$, then
\begin{align}
        \lim_{n\to \infty} -\frac{1}{n} \log (\perr^{P}(\cP, \cQ, n)) = \lim_{n\to \infty} -\frac{1}{n} \log (\perr^{A}(\cP, \cQ, n)) &= D^{\cb}(\cQ \Vert \cP).
    \end{align}
\end{corollary}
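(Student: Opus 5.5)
The plan is to sandwich the symmetric error exponent between Lemma~\ref{lemma:perr<=} (upper bound, adaptive) and Lemma~\ref{lemma:perr>=} (lower bound, parallel). The trivial chain $\perr^A(\cP,\cQ,n)\leq \perr^P(\cP,\cQ,n)$ then forces the parallel and adaptive exponents to coincide, with both the $\liminf$ and $\limsup$ equal to a common value.

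First assume $\im(\cQ^*)\subseteq\im(\cP^*)$. For the converse, Lemma~\ref{lemma:perr<=} gives
\begin{equation*}
\limsup_{n\to\infty} -\tfrac1n\log\perr^A(\cP,\cQ,n)\leq D^{\cb}_{\max}(\cP\Vert\cQ),
\end{equation*}
and Theorem~\ref{theorem:PvsQ:common} identifies $D^{\cb}_{\max}(\cP\Vert\cQ)=D^{\cb}(\cP\Vert\cQ)$. For achievability, Lemma~\ref{lemma:perr>=} gives the lower bound $\sup_{\alpha\in(0,1)}(1-\alpha)D_\alpha^{\cb,\reg}(\cP\Vert\cQ)$ on the parallel exponent, so I need the Petz divergences to collapse as well. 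By the ordering $D_{\min}\leq D_\alpha\leq D$, Theorem~\ref{theorem:PvsQ:common} gives $D^{\cb}_\alpha(\cP^{\otimes n}\Vert\cQ^{\otimes n})=nD^{\cb}(\cP\Vert\cQ)$. I would justify this by applying the theorem to the pair $\cP^{\otimes n},\cQ^{\otimes n}$, which are again idempotent, satisfy the image inclusion, and share the invariant state $\tau^{\otimes n}$; alternatively, the single-letter sandwich $D_{\min}^{\cb}=D^{\cb}_{\max}$ together with additivity of $D^{\cb}_{\max}$ and superadditivity of $D^{\cb}_{\min}$ under tensor-product inputs gives the same conclusion. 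Hence $D_\alpha^{\cb,\reg}(\cP\Vert\cQ)=D^{\cb}(\cP\Vert\cQ)$ for all $\alpha\in(0,1)$.

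This is the step I expect to be the main obstacle. The supremum $\sup_{\alpha\in(0,1)}(1-\alpha)D^{\cb}(\cP\Vert\cQ)$ only approaches $D^{\cb}$ as $\alpha\to0$, and reaching it requires that the relevant quantity be $D_{\min}$-like rather than merely $D$-like. Lemma~\ref{lemma:perr>=} as stated yields only the Chernoff-type quantity, not $D^{\cb}$. So as $\alpha\to0$, $(1-\alpha)D^{\cb}\to D^{\cb}$ from below, and the supremum equals $D^{\cb}(\cP\Vert\cQ)$ exactly, because the Petz family is constant in $\alpha$. The chain closes as
\begin{equation*}
D^{\cb}\leq\liminf_{n\to\infty} -\tfrac1n\log\perr^P\leq\liminf_{n\to\infty} -\tfrac1n\log\perr^A\leq\limsup_{n\to\infty} -\tfrac1n\log\perr^A\leq D^{\cb},
\end{equation*}
and the parallel $\limsup$ is likewise sandwiched.

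For the second case, $\im(\cP^*)\subseteq\im(\cQ^*)$, I would use that $\perr$ is symmetric in its two arguments: swapping the hypotheses exchanges $M$ and $\iden-M$ and leaves the equal-prior error unchanged, for parallel and adaptive strategies alike. Applying the first case with the roles of $\cP$ and $\cQ$ interchanged then gives the exponent $D^{\cb}(\cQ\Vert\cP)$.
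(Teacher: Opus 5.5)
Your proposal is correct and follows essentially the same route as the paper's proof. The converse comes from Lemma~\ref{lemma:perr<=} together with $D^{\cb}_{\max}(\cP\Vert\cQ)=D^{\cb}(\cP\Vert\cQ)$. Achievability comes from Lemma~\ref{lemma:perr>=}, using that the $D_{\min}$--$D_{\max}$ collapse of Theorem~\ref{theorem:PvsQ:common} forces $D^{\cb,\reg}_\alpha(\cP\Vert\cQ)=D^{\cb}(\cP\Vert\cQ)$ for all $\alpha\in(0,1)$. The second case follows from the symmetry of $\perr$. Your explicit justification of the regularized Petz collapse is a detail the paper leaves implicit: either apply the theorem to $\cP^{\otimes n},\cQ^{\otimes n}$ with invariant state $\tau^{\otimes n}$, or combine additivity of $D^{\cb}_{\max}$ with superadditivity of $D^{\cb}_{\min}$.
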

\begin{proof}
    It suffices to prove the case $\im (\cQ^*)\subseteq \im (\cP^*)$. From Lemma~\ref{lemma:perr<=}, we can write
    \begin{align}
          \limsup_{n \to \infty} - {1 \over n}\log (\perr^{A}(\cP, \cQ, n)) &\leq D^{\cb}_{\max}(\cP\|\cQ) = D^{\cb}(\cP\Vert \cQ),
     \end{align}     
     where the equality follows from Theorem~\ref{theorem:PvsQ:common}. Similarly, using Lemma~\ref{lemma:perr>=}, we get
     \begin{align}
          \liminf_{n \to \infty} - {1 \over n} \log(\perr^{P}(\cP, \cQ, n)) &\geq \sup_{α \in (0, 1)} (1 - α) D_α^{\mathrm{reg},\cb}(\cP\|\cQ) \\
          &= D^{\cb}(\cP\Vert \cQ),
    \end{align}
    where the equality again follows from Theorem~\ref{theorem:PvsQ:common}.
\end{proof}

\subsubsection{Fully general case}\label{sec:full-general}

Recall that for idempotent channels $\cP,\cQ:\B{\Hil}\to \B{\Hil}$ with $\cP(\iden_{\Hil}), \cQ(\iden_{\Hil})$ of full rank, the inclusion $\im(\cQ^*)\subseteq \im (\cP^*)$ induces a three-layer decomposition of $\Hil$:
\begin{align}
    \Hil &= \oplus_k \oplus_l A_l \otimes B_{k,l} \otimes C_k \label{eq:H=ABC11} \\
    &= \oplus_k \underbrace{(\oplus_l A_l \otimes B_{k,l})}_{:=D_k} \otimes C_k \label{eq:H=ABC22} \\
    &= \oplus_l A_l \otimes \underbrace{(\oplus_k B_{k,l} \otimes C_k)}_{:=E_l} \label{eq:H=ABC33},
\end{align}
such that
\begin{align}
    \cP &= \bigoplus_k \id_{D_k} \otimes \cR_{\delta_k}  \\
    \cQ &= \bigoplus_l \id_{A_l}\otimes \cR_{\omega_l} \label{eq:Q:l-block}.
\end{align}
When $\cP$ and $\cQ$ do not share a common invariant state, the states $\omega_l$ may carry coherences across $k-$blocks. Moreover, within each $k-$block, the state need not factor across the $B_{k,l}-C_k$ cut, preventing the full collapse of divergences seen in the previous section. Nevertheless, in this section, we derive a general single-letter converse bound on the regularized sandwiched R\'{e}nyi cb-divergence, which suffices to establish a strong converse bound on the Stein exponents. 

In order to tackle the general case, we define the following divergences for each $k,l$:
\begin{align}\label{eq:Dkl-def}
    \sD_{\alpha}(k,l) &:= \sD_{\alpha}(\cP |_{\Pi_{k,l}} \Vert \cQ|_{\Pi_{k,l}}) \\
    &= \sup_{\nu=\Pi_{k,l}\nu \Pi_{k,l}} \sD_{\alpha}( \cP (\nu) \Vert \cQ(\nu)),
\end{align}
where $\cP|_{\Pi_{k,l}}$ and $\cQ|_{\Pi_{k,l}}$ are the restrictions\footnote{For any channel $\Phi:\B{A}\to \B{B}$ and subspace $C\subseteq A$, the restriction $\Phi|_C : \B{C}\to \B{B}$ is defined as $\Phi_C (\cdot) := \Phi (V_C (\cdot) V_C^{\dagger}) $, where $V_C:C\to A$ is the canonical inclusion isometry.} of $\cP$ and $\cQ$ to the subspace $\Pi_{k,l}\Hil = A_{l}\otimes B_{k,l} \otimes C_k$, and the sup is over all states $\nu\in \cD (\Hil)$ satisfying $\nu = \Pi_{k,l}\nu \Pi_{k,l}$. Define canonical inclusion isometries $W_{k|l}:B_{k,l}\otimes C_k \to E_l$ and $W_l:A_l \otimes E_l \to \Hil$, and the corresponding embedding and compression maps
\begin{equation}
\iota_{k|l}(\cdot):=W_{k|l} (\cdot) W_{k|l}^*, \qquad \cW_l^*(\cdot)=W_l^{\dagger}(\cdot)W_l,
\end{equation}
so that we can restrict the outputs of $\cP|_{\Pi_{k,l}}, \cQ|_{\Pi_{k,l}}$ to the same codomain 
$\B{A_l\otimes E_l}$: 
\begin{align}
\cW_l^* \circ  \cP|_{\Pi_{k,l}} &=
(\id_{A_l}\otimes \iota_{k|l})\circ
(\id_{A_l}\otimes \id_{B_{k,l}}\otimes \cR_{\delta_k}), \label{eq:Pkl}
\\
\cW_l^* \circ \cQ|_{\Pi_{k,l}} &=
\id_{A_l}\otimes \cR_{\omega_l}. \label{eq:Qkl}
\end{align}
Hence,
\begin{equation}\label{eq:Dkl-def2}
\widetilde D_\alpha(k,l) =
\widetilde D_\alpha\!\Bigl(
(\id_{A_l}\otimes \iota_{k|l})\circ
(\id_{A_l}\otimes \id_{B_{k,l}}\otimes R_{\delta_k})
\,\Big\|\,
\id_{A_l}\otimes R_{\omega_l}
\Bigr).
\end{equation}

Similarly, we define the stabilized divergences
\begin{align}\label{eq:Dklcb-def}
    \sD_{\alpha}^{\cb} (k,l) &:= \sD_{\alpha}^{\cb}(\cP |_{\Pi_{k,l}} \Vert \cQ|_{\Pi_{k,l}}) .
\end{align}

We note the following lemma, which follows from the multiplicativity of the completely bounded $\alpha-$norms \cite{Devetak2006cbnorm} (see also \cite{Gupta2014multi} and \cite[Proposition 41]{Wilde2020amortized}).

\begin{lemma}\label{lemma:replacer}
    Let $\cR_i$ be replacer channels and $\cN_i$ be arbitrary channels for $i=1,\ldots N$. Then,
    \begin{equation}
     \sD_{\alpha}^{\cb}( \otimes_{i=1}^N \cN_i \, \Vert \otimes_{i=1}^N \cR_i) = \sum_{i=1}^N \sD_{\alpha}^{\cb}(\cN_i \Vert \cR_i).
    \end{equation}
\end{lemma}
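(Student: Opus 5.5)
The plan is to prove the two inequalities separately. Superadditivity ($\geq$) is elementary. Subadditivity ($\leq$) will follow from a closed-form expression for the sandwiched Rényi divergence against a replacer, combined with multiplicativity of completely bounded norms.

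\emph{Lower bound.} Let $\cN_i:\B{A_i}\to\B{B_i}$ and $\cR_i(\cdot)=\Tr(\cdot)\,\sigma_i$. For each $i$, fix a near-optimal input $\nu_i\in\cD(R_iA_i)$ for $\sD_{\alpha}^{\cb}(\cN_i\Vert\cR_i)$, and feed the product input $\otimes_i\nu_i$ into $\otimes_i\cN_i$ and $\otimes_i\cR_i$. Additivity of $\sD_\alpha$ on product states then gives $\geq$. The reference system of the product input has dimension $\prod_i d_{A_i}$, which equals the input dimension of the tensor-product channel, so this input is admissible in $\sD_\alpha^{\cb}$.

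\emph{Upper bound.} First rewrite the single-channel quantity. For a pure input $\psi_{RA}$ with reduced state $\rho_R=\Tr_A\psi_{RA}$, we have $(\id_R\otimes\cR)(\psi)=\rho_R\otimes\sigma$. Restrict to pure inputs, which is allowed for sandwiched divergences against such outputs by data processing or purification. Write $\psi_{RA}$ via $\rho_R^{1/2}$ and a vectorised maximally entangled state. Then, for $\alpha>1$,
\[
\exp\Bigl(\tfrac{\alpha-1}{\alpha}\,\sD_\alpha^{\cb}(\cN\Vert\cR)\Bigr)
=\sup_{\rho}\bigl\|(\rho^{\frac{1}{2\alpha}}\otimes\sigma^{\frac{1-\alpha}{2\alpha}})\,(\id\otimes\cN)(\Omega)\,(\rho^{\frac{1}{2\alpha}}\otimes\sigma^{\frac{1-\alpha}{2\alpha}})\bigr\|_\alpha ,
\]
up to transposition, where $\Omega$ is the unnormalised maximally entangled operator. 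This is the Devetak–Junge–King–Ruskai form. It identifies the quantity as the completely bounded $1\to\alpha$ norm of the map $\Gamma_\sigma^{\frac{1-\alpha}{2\alpha}}\circ\cN$, where $\Gamma_\sigma^{s}(X)=\sigma^{s}X\sigma^{s}$; this is also the form used in \cite[Proposition 41]{Wilde2020amortized}.

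For a tensor product, $\Gamma_{\otimes\sigma_i}^{s}\circ(\otimes\cN_i)=\otimes_i(\Gamma_{\sigma_i}^{s}\circ\cN_i)$. The cb $1\to\alpha$ norm of a tensor product of completely positive maps is multiplicative \cite{Devetak2006cbnorm}. Taking $\frac{\alpha}{\alpha-1}\log$ of both sides turns multiplicativity into additivity, which gives the upper bound for $\alpha>1$.

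For $\alpha\in(1/2,1)$, the same computation expresses the quantity as an infimum (a reverse-type norm), and I would use the corresponding multiplicativity statement of \cite{Gupta2014multi}. If that is not available, I would restrict the lemma to $\alpha>1$, which is all that is needed downstream. $\alpha=1$ follows by taking limits, since both sides are monotone in $\alpha$ and additivity is preserved in the limit.

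\emph{Main obstacle.} The delicate step is the exact identification of $\sD_\alpha^{\cb}(\cN\Vert\cR_\sigma)$ with a cb-norm. In particular, the supremum over $\rho_R$ must match the Pisier/DJKR norm characterisation $\|X\|_{(1,\alpha)}=\inf\bigl\|(A\otimes\iden)X(B\otimes\iden)\bigr\|_\alpha$ with the correct exponents, and the multiplicativity theorem requires complete positivity, which holds here since $\Gamma_\sigma^{s}\circ\cN$ is CP. I would first carefully verify the case where every $\cN_i$ is the identity channel, where both sides are computable, to check the exponent bookkeeping.
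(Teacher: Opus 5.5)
Your proposal is correct for $\alpha>1$, which is the only range in which the paper uses the lemma, and it follows the route the paper points to. Superadditivity comes from product inputs. Subadditivity comes from identifying $2^{\frac{\alpha-1}{\alpha}\sD_\alpha^{\cb}(\cN\Vert\cR_\sigma)}$ with the cb $1\to\alpha$ norm of the CP map $\Gamma_\sigma^{(1-\alpha)/(2\alpha)}\circ\cN$, as in \cite[Proposition 41]{Wilde2020amortized}, and then using the multiplicativity result of \cite{Devetak2006cbnorm}. Your side remark on $\alpha=1$ needs more than monotonicity: you also need the limit $\alpha\to1^+$ to commute with the supremum over inputs (cf.\ \cite[Lemma 33]{Ding2023forward}), though this lies outside what the lemma is used for.
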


With this background, we can prove the main result of this section.

\begin{theorem}\label{theorem:PvsQ}
Let $\cP, \cQ : \B{\Hil}\to \B{\Hil} $ be idempotent channels such that both $\cP(\iden_{\Hil})$ and $\cQ(\iden_{\Hil})$ have full rank and $\im (\cQ^*)\subseteq \im (\cP^*)$. Then, for $\alpha>1$,
\begin{align}
   \sD_{\alpha}(\cP \Vert \cQ) &\leq \max_k \left( \log \sum_l 2^{ \widetilde{D}_{\alpha}(k,l)} \right), \label{eq:Dkl-1} \\
   \sD_{\alpha}^{\cb}(\cP \Vert \cQ) &\leq \max_k \left( \log \sum_l 2^{\sD_{\alpha}^{\cb}(k,l)} \right), \label{eq:Dkl-2}
\end{align}
where $\sD_{\alpha}(k,l), \sD_{\alpha}^{\cb}(k,l)$ are defined in Eqs.~\eqref{eq:Dkl-def}, \eqref{eq:Dklcb-def}. Consequently, for all $n\in \N$:
\begin{align}
   \sD_{\alpha}^{\cb}(\cP^{\otimes n} \Vert \cQ^{\otimes n}) \leq n\max_k \left( \log \sum_l 2^{\sD_{\alpha}^{\cb}(k,l)} \right). \label{eq:Dkl-3}
\end{align}
Moreover, equality holds in Eqs.~\eqref{eq:Dkl-1}-\eqref{eq:Dkl-3} when the replacer states $\omega_l\in \cD (E_l)=\cD (\oplus_k B_{k,l}\otimes C_k)$ in Eq.~\eqref{eq:Q:l-block} are of the block-diagonal form
\begin{equation}\label{eq:omega-block}
    \omega_l = \bigoplus_k p_{k,l} \, \tau_{k,l} \otimes \nu_k,
\end{equation}
for some probability distributions $(p_{k,l})_k$ and full-rank states $\tau_{k,l}\in \cD(B_{k,l}), \nu_k\in \cD (C_k)$.
\end{theorem}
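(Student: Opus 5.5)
The plan is to reduce to pure inputs in a single $k$-block, factor the sandwiched quantity through a Gram-type identity in which the $l$-blocks decouple, and recombine the blocks with a triangle inequality followed by Hölder. I treat \eqref{eq:Dkl-1} first; \eqref{eq:Dkl-2} follows from the same argument with a reference system attached. The map $\rho\mapsto(\cP(\rho),\cQ(\rho))$ is linear, and $\sD_\alpha$ is jointly quasi-convex for $\alpha>1$. So we may restrict to pure inputs $\psi$. Decompose $\psi=\sum_k\sqrt{t_k}\,\psi^{(k)}$ along $\Hil=\oplus_k D_k\otimes C_k$. Then $\cP(\psi)=\sum_k t_k\cP(\psi^{(k)})$ because $\cP$ kills cross $k$-terms. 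Also $\cQ(\psi)=\cQ\circ\cP(\psi)=\sum_k t_k\cQ(\psi^{(k)})$ by \eqref{eq:QP=Q}. Quasi-convexity applied to this decomposition reduces the problem to a single block: $\psi$ supported on $D_k\otimes C_k$. The $k$-block coherences in $\omega_l$ play no role at this stage.

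Inside block $k$, write $\psi=\sum_l\sqrt{q_l}\,\psi_l$ with $\psi_l$ normalized on $A_l\otimes B_{k,l}\otimes C_k$. Set $\sigma:=\cQ(\psi)$. This equals $\bigoplus_l q_l\,\cQ(\psi_l)$, and it is block diagonal in $l$, since $\cQ$ only sees $W_l^\dagger\psi W_l$. The state $\cP(\psi)$ is not block diagonal in $l$, but it can be written as $\cP(\psi)=\sum_{j}M_jM_j^\dagger$ with $M_j=\sum_l\sqrt{q_l}\,M_{j,l}$, where $M_{j,l}$ is obtained from a Kraus/purification decomposition of $\cP(\psi_l)=\sum_j M_{j,l}M_{j,l}^\dagger$ that is uniform in $l$ (possible because $\cP=\id_{D_k}\otimes\cR_{\delta_k}$ on this block). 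Put $\gamma=\frac{\alpha-1}{2\alpha}$ and form the column operator $X=\sum_l\sqrt{q_l}\,\sigma^{-\gamma}M_l$, so that $\sigma^{-\gamma}\cP(\psi)\sigma^{-\gamma}=XX^\dagger$. Then:
\begin{itemize}
\item $\norm{XX^\dagger}_\alpha=\norm{X^\dagger X}_\alpha$.
\item Since $\sigma^{-\gamma}$ maps the $l$-th block into the $l$-th output block, the cross terms in $X^\dagger X$ vanish, giving $X^\dagger X=\sum_l q_l\,Y_l^\dagger Y_l$ with $Y_l=(q_l\cQ(\psi_l))^{-\gamma}M_l$.
\item By the triangle inequality, $\norm{X^\dagger X}_\alpha\le\sum_l q_l^{1-2\gamma}\,Q_\alpha(\cP(\psi_l)\|\cQ(\psi_l))^{1/\alpha}\le \sum_l q_l^{1/\alpha}\,2^{\frac{\alpha-1}{\alpha}\sD_\alpha(k,l)}$.
\item Hölder with exponents $(\alpha,\alpha/(\alpha-1))$ and $\sum_l q_l=1$ gives $\norm{X^\dagger X}_\alpha\le(\sum_l 2^{\sD_\alpha(k,l)})^{(\alpha-1)/\alpha}$, i.e.\ $\sD_\alpha\le\log\sum_l2^{\sD_\alpha(k,l)}$.
\end{itemize}
This is the same mechanism as the Cauchy--Schwarz/pinching step in Theorem~\ref{theorem:IDvQ}.

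For \eqref{eq:Dkl-2}, note that $\id_{R}\otimes\cP$ and $\id_R\otimes\cQ$ are again idempotent with the inclusion property. Their decomposition has $A_l$ replaced by $R\otimes A_l$, with the same $k,l$ index sets, so the argument above applies verbatim. For \eqref{eq:Dkl-3}, apply \eqref{eq:Dkl-2} to $\cP^{\otimes n},\cQ^{\otimes n}$, whose blocks are indexed by tuples $\vec k,\vec l$. Each restricted pair is a tensor product of $\cP|_{\Pi_{k_i,l_i}}$ against $\cQ|_{\Pi_{k_i,l_i}}$, and the latter is of the form $\id\otimes$replacer after compression as in \eqref{eq:Qkl}. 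Lemma~\ref{lemma:replacer} then makes $\sD^{\cb}_\alpha(\vec k,\vec l)$ additive. The sum over $\vec l$ factorizes into a product, and the maximum over $\vec k$ becomes $n$ times the single-letter maximum.

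For equality under \eqref{eq:omega-block}, I follow the achievability of Theorem~\ref{theorem:PvsQ:common}. Fix the maximizing $k$, take near-optimal $\psi_l$ for each $\sD_\alpha(k,l)$, and choose the weights $q_l\propto 2^{\sD_\alpha(k,l)}$ that saturate Hölder. The main obstacle is saturating the triangle inequality, which requires the operators $Y_l^\dagger Y_l$ to be aligned. Under the product block form, the $C_k$ factor contributes identically in every $l$-block (both $\cP$ and $\cQ$ output a fixed state on $C_k$ there), so it factors out. What remains is the identity-versus-replacer situation of Theorem~\ref{theorem:IDvQ}, where the pure-state computation gives rank-one $Y_l^\dagger Y_l$ on a common space and the inequality becomes tight. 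I expect the most delicate part of the upper bound to be choosing the factorization $M_l$ consistently across $l$ (for instance by purifying $\delta_k$ once for all $l$) so that the cross-term cancellation is rigorous.
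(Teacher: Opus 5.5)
Your proposal is correct and follows essentially the same route as the paper. It reduces to a single $k$-block via joint quasi-convexity and $\cQ=\cQ\circ\cP$, then rewrites in Gram form ($\Tr[(\sqrt\rho\,\sigma^{\frac{1-\alpha}{\alpha}}\sqrt\rho)^\alpha]$ in the paper). The $l$-block diagonality of $\cQ(\rho)$ splits this into $\sum_l X_{k,l}$. From there you use the Schatten triangle inequality, the per-block bound $p_l\,2^{(\alpha-1)\sD_\alpha(k,l)}$, Hölder, alignment of the $X_{k,l}$ under the product form for equality, a reference system for the cb version, and Lemma~\ref{lemma:replacer} for the $n$-copy bound.

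The only deviation is that you keep a general pure input on $D_k\otimes C_k$ and a Kraus factor $M$ where the paper uses $\sqrt\rho$ with $\rho=\psi\otimes\delta_k$. The factorization issue you flag is harmless: for any $M$ with $MM^\dagger=\cP(\psi)$, the range of $M$ lies in $D_k\otimes C_k$, so $Q_lM=\Pi_{k,l}M$. Hence the cross terms vanish and $\Pi_{k,l}\cP(\psi)\Pi_{k,l}=q_l\,\cP(\psi_l)$ regardless of the choice of $M$.
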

\begin{proof}
We begin with the first claim. Note that
    \begin{align}
    \widetilde{D}_{\alpha}(\cP \Vert \cQ) &:= \sup_{\rho\in \State{\Hil}} \widetilde{D}_{\alpha}(\cP(\rho) \Vert \cQ(\rho) )  \\
    &= \sup_{\rho\in \State{\Hil}} \widetilde{D}_{\alpha}(\cP(\rho) \Vert \cQ \circ \cP (\rho) ) \\
    &= \sup_{\rho\in \im(\cP)} \widetilde{D}_{\alpha}( \rho \Vert \cQ (\rho) ) \\
    &= \sup_{\rho\in \im(\cP) \text{ pure} } \widetilde{D}_{\alpha} \left( \rho \Vert \cQ(\rho) \right),
\end{align}
where $\cQ= \cQ \circ \cP$ follows from $\im (\cQ^*)\subseteq \im (\cP^*)$ and the final equality follows from the joint quasi-convexity of $\widetilde{D}_{\alpha}$ (see e.g. \cite[Chapter 7]{khatriPrinciplesQuantumCommunication2024}). 

Note that pure states in $\im (\cP)$ are of the form $\rho = V_k (\psi \otimes \delta_k) V_k^{\dagger}=0\oplus \ldots\oplus (\psi \otimes \delta_k) \oplus \ldots \oplus 0$, supported in $D_k \otimes C_k$ for some $k$ and pure state $\psi\in \B{D_k}$. Let $\sigma=\cQ(\rho)=\sum_l W_l \cQ_l (W_l^{\dagger} \rho W_l) W_l^{\dagger}:= \sum_l \sigma_l$. Recall that each $D_k=\oplus_l A_l \otimes B_{k,l}$. Let $V_{l|k}:A_l\otimes B_{k,l}\to D_k$ be the canonical inclusion isometries. A straightforward calculation shows
\begin{align}
    \sigma_l &= W_l \cQ_l \left(W_l^{\dagger} V_k (\psi \otimes \delta_k) V_k^{\dagger} W_l \right) W_l^{\dagger} \\
    &= p_l W_l \left( \Tr_{B_{k,l}} (\psi_l) \otimes \omega_l \right) W_l^{\dagger} \label{eq:sigma_l},
\end{align} 
where $\psi_l:= \frac{1}{p_l}V_{l|k}^{\dagger} \psi V_{l|k}\in \cD (A_l \otimes B_{k,l})$ and $p_l=\Tr (V_{l|k}^{\dagger} \psi V_{l|k})$. Then,
\begin{align}
    \sD_{\alpha}(\rho \Vert \sigma) &= \frac{1}{\alpha-1} \log \Tr \left[ \left(\sigma^{\frac{1-\alpha}{2\alpha}}  \rho\sigma^{\frac{1-\alpha}{2\alpha}} \right)^{\alpha} \right]  \\
    &= \frac{1}{\alpha-1} \log \Tr \left[ \left( \sqrt{\rho} \sigma^{\frac{1-\alpha}{\alpha}} \sqrt{\rho} \right)^{\alpha} \right] \\
    &= \frac{1}{\alpha-1} \log \Tr \left[ \left( V_k(\psi \otimes \sqrt{\delta_k} ) V_k^{\dagger}\sigma^{\frac{1-\alpha}{\alpha}} V_k (\psi \otimes \sqrt{\delta_k} )V_k^{\dagger} \right)^{\alpha} \right] \\
    &= \frac{1}{\alpha-1} \log \Tr \bigg[ \bigg( \sqrt{\delta_k}\Tr_{D_k} \left( (\psi \otimes \iden_{C_k}) V_k^{\dagger} \sigma^{\frac{1-\alpha}{\alpha}}V_k \right) \sqrt{\delta_k}   \bigg)^{\alpha}  \bigg ] \\
    &= \frac{1}{\alpha-1} \log \Tr \bigg[ \bigg( \sum_l \sqrt{\delta_k}\Tr_{D_k} \left( (\psi \otimes \iden_{C_k}) V_k^{\dagger} \sigma_l^{\frac{1-\alpha}{\alpha}}V_k \right) \sqrt{\delta_k}   \bigg)^{\alpha}  \bigg ].
\end{align}
Define positive operators 
\begin{equation}\label{eq:Xkl}
    X_{k,l}:= \sqrt{\delta_k}\Tr_{D_k} \left( (\psi \otimes \iden_{C_k}) V_k^{\dagger} \sigma_l^{\frac{1-\alpha}{\alpha}}V_k \right) \sqrt{\delta_k} \in \B{C_k},
\end{equation}
so that 
\begin{align}     
    \sD_{\alpha}(\rho \Vert \sigma) &= \frac{1}{\alpha-1} \log \Tr \left[ \bigg(\sum_l X_{k,l} \bigg)^{\alpha} \right] \label{eq:Xkl-sum} \\
    &\leq \frac{1}{\alpha-1} \log \left[ \sum_l (\Tr X_{k,l}^{\alpha})^{1/\alpha} \right]^{\alpha}, \label{eq:schatten-convex}
\end{align}
where the inequality follows from the convexity of the Schatten norm $\norm{\cdot}_{\alpha}$. Recall that $E_l=\oplus_k B_{k,l}\otimes C_k$ and let $W_{k|l}:B_{k,l}\otimes C_k\to E_l$ are the canonical inclusion isometries. Then, by using the form of $\sigma_l$ from Eq.~\eqref{eq:sigma_l}, we see that 
\begin{equation}
    V_k^{\dagger}\sigma_l V_k = p_l\,(V_{l|k}\otimes \iden_{C_k})
\Bigl(
\Tr_{B_{k,l}}(\psi_l)\otimes W_{k|l}^\dagger \omega_l W_{k|l}
\Bigr)
(V_{l|k}^\dagger\otimes \iden_{C_k})
\end{equation}
is only supported on a single $k,l$ block $A_l\otimes B_{k,l}\otimes C_k$ inside $D_k \otimes C_k$. Hence, by retracing our steps, working exclusively with operators in $\B{A_l\otimes B_{k,l}\otimes C_k}$, we can check that
\begin{align}
\Tr X_{k,l}^{\alpha} &= p_l \Tr \bigg( (\psi_l \otimes \sqrt{\delta_k}) \big((\Tr_{B_{k,l}} \psi_l)^{\frac{1-\alpha}{\alpha}} \otimes W_{k|l}^{\dagger}\omega_l^{\frac{1-\alpha}{\alpha}} W_{k|l} \big) (\psi_l \otimes \sqrt{\delta_k}) \bigg)^{\alpha} \\ 
&\leq p_l 2^{(\alpha-1)\sD_{\alpha}(k,l)}, \label{eq:Dkl-bound}
\end{align}
where the inequality follows from choosing a particular input $\psi_l \otimes \delta_k \in \B{A_k \otimes B_{k,l} \otimes C_k}$ in Eq.~\eqref{eq:Dkl-def}, \eqref{eq:Dkl-def2}. Thus, for the input $\rho = V_k (\psi \otimes \delta_k) V_k^{\dagger}$ supported in a single $k-$block, we obtain
\begin{align}
    \sD_{\alpha}(\rho \Vert \cQ(\rho))
    &\leq \frac{\alpha}{\alpha-1} \log \left[ \sum_l p_l^{1/\alpha} 2^{\frac{\alpha-1}{\alpha}\sD_{\alpha} (k,l)} \right] \\
    &\leq \log \left[ \sum_l 2^{\sD_{\alpha} (k,l)} \right],
\end{align}
where the final equality follows from the fact that for any set of non-negative numbers $\{c_l \}_l$
\begin{equation}\label{eq:max-p_l}
    \max_{p_l\geq 0, \sum_l p_l=1} \frac{\alpha}{\alpha-1} \log \sum_l p_l^{\frac{1}{\alpha}} c_l^{\frac{\alpha-1}{\alpha}} = \log \sum_l c_l,
\end{equation}
see Appendix~\ref{appen:aux}. Finally, Eq.~\eqref{eq:Dkl-1} follows by taking a maximum over the $k-$blocks:
\begin{align}
    \widetilde{D}_{\alpha}(\cP \Vert \cQ) &= \sup_{\rho\in \im(\cP) \text{ pure} } \widetilde{D}_{\alpha} \left( \rho \Vert \cQ(\rho) \right) \\
    &\leq \max_k \log \left[ \sum_l 2^{\sD_{\alpha} (k,l)} \right]. \label{eq:max-kl} 
\end{align}

For the achievability claim, note that in Eq.~\eqref{eq:schatten-convex}, the Schatten norm convexity bound collapses into an equality when the hypothesis of the theorem holds: if $\omega_l = \oplus_k p_{k,l} \, \tau_{k,l} \otimes \nu_k$, then for a fixed $k$, all $X_{k,l}$ operators are proportional to the same positive operator 
\begin{equation}
   X_{k,l}\propto \sqrt{\delta_k} \nu_k^{\frac{1-\alpha}{\alpha}} \sqrt{\delta_k}
\end{equation}
independent of $l$. Therefore, we can select the block $k^*$ that achieves the maximum in Eq.~\eqref{eq:max-kl} and let $\rho^*=0\oplus \ldots\oplus (\psi^* \otimes \delta_{k^*}) \oplus \ldots \oplus 0$ be supported on $D_{k*} \otimes C_{k^*}$, with $\psi^*\in \cD(D_{k^*})$ chosen in such a way that $\psi^*_l=V^{\dagger}_{l|k} \psi^* V_{l|k}/p_l^*$ achieves\footnote{To see that this bound is always attained, see e.g.\ \cite{mosonyi_continuity_2024}} the bound in Eq.~\eqref{eq:Dkl-bound}, and $p^*_l = \Tr (V^{\dagger}_{l|k} \psi^* V_{l|k})$ achieves the bound in Eq.~\eqref{eq:max-p_l}. In the proof's notation, this shows
\begin{align}
    \sD_{\alpha} ( \rho^* \Vert \cQ(\rho^*) ) &= \frac{1}{\alpha-1} \log \Tr \left[ \bigg(\sum_l X_{k^*,l} \bigg)^{\alpha} \right] \\
    &= \frac{1}{\alpha-1} \log \left[ \sum_l (\Tr X_{k^*,l}^{\alpha})^{1/\alpha} \right]^{\alpha} \\
    &= \frac{\alpha}{\alpha-1} \log \left[ \sum_l (p_l^*)^{1/\alpha} 2^{\frac{\alpha-1}{\alpha}\sD_{\alpha} (k^*,l)} \right] \\
    &= \log \left[ \sum_l 2^{\sD_{\alpha} (k^*,l)} \right] \\
    &= \max_k \log \left[ \sum_l 2^{\sD_{\alpha} (k,l)} \right].
\end{align}
The bound for the $\cb-$divergence (Eq.~\eqref{eq:Dkl-2}) follows by doing exactly the same calculation with an extra reference system attached (Eq.~\eqref{eq:Dcb-def}). Finally, Eq.~\eqref{eq:Dkl-3} follows because Lemma~\ref{lemma:replacer} shows that the RHS of Eq.~\eqref{eq:Dkl-2} is additive, see Appendix~\ref{appen:add}. 
\end{proof}

We emphasize that the upper bound in Theorem~\ref{theorem:PvsQ} can be strictly loose in general, since the Schatten norm convexity bound used in the above proof (Eq.~\eqref{eq:schatten-convex}) can be strict. This can
happen even in the simplest case with a single $k$-block, see Appendix~\ref{appen:example}. On the other hand, the stronger structural condition on the replacer states
\begin{equation}
\omega_l=\bigoplus_k p_{k,l}\,\tau_{k,l}\otimes \nu_k
\end{equation}
is sufficient to make the Schatten step an equality, since for each fixed $k$
it implies that all $X_{k,l}$ in Eq.~\eqref{eq:Xkl-sum} are proportional to each other. This condition, however,
might not be necessary, since equality in the Schatten step only requires proportionality of
the resulting operators $X_{k,l}$, and this may happen in more general
situations.

\begin{remark}
    When $\cP, \cQ$ share a common full rank-invariant state, the block-diagonal (Eq.~\eqref{eq:omega-block}) assumption on $\omega_l$ required in Theorem~\ref{theorem:PvsQ} is automatically satisfied (Eq.~\eqref{eq:omega-block-prod}), and the RHS of Eqs.~\eqref{eq:Dkl-1}-\eqref{eq:Dkl-3} match the explicit expressions derived in Theorem~\ref{theorem:PvsQ:common}. Hence, in this setting, Theorem~\ref{theorem:PvsQ} recovers the result of Theorem~\ref{theorem:PvsQ:common}.
\end{remark}

\begin{corollary} \label{corollary:PvsQ}
Let $\cP, \cQ : \B{\Hil}\to \B{\Hil} $ be idempotent channels such that both $\cP(\iden_{\Hil}),\cQ(\iden_{\Hil})$ have full rank and $\im (\cQ^*)\subseteq \im (\cP^*)$. Then, for any $\epsilon\in (0,1)$: 
\begin{align}
     D^{\cb}(\cP \| \cQ)\leq D^{\cb, \reg}(\cP \| \cQ) &\leq \liminf_{n\to \infty} e_P(\cP,\cQ,n,\epsilon) \label{eq:PQ-error} \\ 
        &\leq \limsup_{n\to \infty} e_A(\cP,\cQ,n,\epsilon) \leq \max_k \left( \log \sum_l 2^{D^{\cb}(k,l)} \right), \label{eq:PQ-error-3}
    \end{align}
    where $D^{\cb}(k,l)$ is defined as in Eq.~\eqref{eq:Dklcb-def}.
    Moreover, for $r> \max_k \left( \log \sum_l 2^{D^{\cb}(k,l)} \right)$, 
    \begin{align}
     e^{\operatorname{sc}}_P(\cP,\cQ, r) = e^{\operatorname{sc}}_A (\cP,\cQ,r) \geq \sup_{\alpha>1} \frac{\alpha-1}{\alpha}\left( r - \max_k \left( \log \sum_l 2^{\sD_{\alpha}^{\cb}(k,l)} \right) \right) \label{eq:PQ-error2} .
    \end{align} 
    Furthermore, equality holds in Eqs.~\eqref{eq:PQ-error}-\eqref{eq:PQ-error2} when the replacer states $\omega_l\in \cD (E_l)=\cD (\oplus_k B_{k,l}\otimes C_k)$ in Eq.~\eqref{eq:Q:l-block} are of the block-diagonal form in Eq.~\eqref{eq:omega-block}.
\end{corollary}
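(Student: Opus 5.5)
The corollary follows by chaining the channel Stein's lemma and the strong-converse machinery of the Preliminaries with the single-letter bound of Theorem~\ref{theorem:PvsQ}.

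\textbf{Lower chain.} The first inequality, $D^{\cb}(\cP\|\cQ)\le D^{\cb,\reg}(\cP\|\cQ)$, holds because the cb-divergence is superadditive under tensor products: use product inputs $\nu^{\otimes n}$. The second inequality, $D^{\cb,\reg}\le\liminf_n e_P(\cP,\cQ,n,\epsilon)$ for every fixed $\epsilon\in(0,1)$, follows from the lower bound in \eqref{eq:channel_discrimination_n_shot_exponents_renyi} applied to the blocked channels $\cP^{\otimes m}$, $\cQ^{\otimes m}$. This gives
\[
\liminf_n e_P\ \ge\ \tfrac1m D^{\cb}_\alpha(\cP^{\otimes m}\|\cQ^{\otimes m})
\]
for every $\alpha\in(0,1)$, after handling the floor/remainder as in the proof of Lemma~\ref{lemma:perr>=}. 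We then let $\alpha\to1$ at fixed $m$; for fixed finite-dimensional channels, continuity of $D_\alpha$ at $\alpha=1$ is harmless. Finally we take the supremum over $m$. The inequality $e_P\le e_A$ is trivial.

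\textbf{Upper bound.} By the last line of \eqref{eq:channel_discrimination_n_shot_exponents_renyi},
\[
e_A(\cP,\cQ,n,\epsilon)\ \le\ \sD^{\cb,\reg}_{\alpha}(\cP\|\cQ)+\frac{\alpha}{n(\alpha-1)}\log\frac{1}{1-\epsilon}.
\]
By \eqref{eq:Dkl-3}, $\sD^{\cb,\reg}_\alpha(\cP\|\cQ)\le \max_k\log\sum_l 2^{\sD^{\cb}_\alpha(k,l)}$. Taking $\limsup_n$ kills the $\epsilon$-term. We then let $\alpha\to1^+$, and here lies the main point: we need $\sD^{\cb}_\alpha(k,l)\to D^{\cb}(k,l)$. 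This is a single-copy, unregularized channel divergence with a fixed finite reference system ($A'\cong$ input), so the limit follows from monotonicity in $\alpha$ together with compactness of the input state space. Concretely, $\sD_\alpha$ is jointly continuous in $(\alpha,\nu)$ on $[1,\alpha_0]$, or one can invoke the standard fact that $\inf_{\alpha>1}\sup_\nu=\sup_\nu\inf_{\alpha>1}$ by Dini/minimax for monotone continuous families on a compact set. Finiteness of $\sD_\alpha(k,l)$ requires support inclusion, which holds since $\omega_l$ is full rank. The $\max_k\log\sum_l$ expression is continuous in its arguments, so the bound passes to the limit.

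\textbf{Strong-converse exponent.} The equality $e_P^{\rm sc}=e_A^{\rm sc}$ and its formula come from \eqref{eq:fawzi-sc}. Since $r-\sD^{\cb,\reg}_\alpha\ge r-\max_k\log\sum_l2^{\sD^{\cb}_\alpha(k,l)}$ by \eqref{eq:Dkl-3}, we get \eqref{eq:PQ-error2} termwise in the supremum. The condition on $r$ only ensures nontriviality of this bound, via the limit $\alpha\to1^+$ above.

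\textbf{Equality case.} Under \eqref{eq:omega-block}, Theorem~\ref{theorem:PvsQ} gives equality in \eqref{eq:Dkl-2}, so that
\[
\sD^{\cb,\reg}_\alpha=\sD^{\cb}_\alpha=\max_k\log\sum_l2^{\sD^{\cb}_\alpha(k,l)},
\]
where the first equality uses \eqref{eq:Dkl-3} together with superadditivity. The same proof run for the Umegaki divergence gives $D^{\cb}(\cP\|\cQ)=\max_k\log\sum_l2^{D^{\cb}(k,l)}$, alternatively obtained as the $\alpha\to1^+$ limit of the sandwiched identity using the continuity just discussed. Thus the whole chain \eqref{eq:PQ-error}--\eqref{eq:PQ-error-3} collapses, and \eqref{eq:PQ-error2} becomes the exact formula \eqref{eq:fawzi-sc}.

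The main obstacle is the interchange of $\alpha\to1^+$ with the supremum over inputs in $\sD^{\cb}_\alpha(k,l)$. Because no regularization is involved, I expect compactness plus monotonicity (Dini's theorem) to suffice.
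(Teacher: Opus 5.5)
Your proposal follows the paper's proof essentially step for step: the $n$-shot bounds \eqref{eq:channel_discrimination_n_shot_exponents_renyi}, the single-letter bound \eqref{eq:Dkl-3}, the formula \eqref{eq:fawzi-sc}, and a limit $\alpha\to1^+$. You deviate in two places. First, you re-derive $D^{\cb,\reg}\le\liminf_n e_P$ by blocking instead of citing \cite{wang_resource_2019}; that is fine, since for $\alpha\to1^-$ the two suprema simply commute. Second, you try to prove $\sD^{\cb}_\alpha(k,l)\to D^{\cb}(k,l)$ instead of citing \cite[Lemma 33]{Ding2023forward}, and this step has a gap as written.

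Monotonicity in $\alpha$ plus compactness of the input set does not give $\inf_{\alpha>1}\sup_\nu=\sup_\nu\inf_{\alpha>1}$. Both Dini and the nested-compact-sets argument need each $f_\alpha(\nu)=\sD_\alpha\bigl((\id\otimes\cP|_{\Pi_{k,l}})(\nu)\,\|\,(\id\otimes\cQ|_{\Pi_{k,l}})(\nu)\bigr)$, $\alpha>1$, to be upper semicontinuous. For monotone families of merely lower semicontinuous functions the interchange can fail, and $\sD_\alpha$ is in general only lower semicontinuous. Here the second argument $\nu_{RA_l}\otimes\omega_l$ really does lose rank on the boundary of the input set, and pointwise support inclusion does not prevent jumps. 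So your asserted ``joint continuity in $(\alpha,\nu)$'' is precisely what needs proof.

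It can be supplied from the uniform domination $(\id\otimes\cP|_{\Pi_{k,l}})(\nu)\le c\,(\id\otimes\cQ|_{\Pi_{k,l}})(\nu)$ with $c=d_{B_{k,l}}/\lambda_{\min}(\omega_l)$. This follows from $\nu_{RA_lB_{k,l}}\le d_{B_{k,l}}\,\nu_{RA_l}\otimes\iden_{B_{k,l}}$. If $\rho\le c\sigma$, the contribution to $\sigma^{-\gamma}\rho\sigma^{-\gamma}$ ($\gamma=\frac{\alpha-1}{2\alpha}$) from eigenvalues of $\sigma$ below $\eta$ is $O(\eta^{1/(2\alpha)})$ uniformly in $\nu$, which gives continuity of $f_\alpha$. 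Alternatively, just cite the lemma. The same interchange is needed for $(\cP,\cQ)$ themselves if you obtain the Umegaki equality case as an $\alpha\to1^+$ limit, as the paper does.

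Your other option for the equality case, computing directly with $D$, does work and avoids that second interchange. It is a different computation from the Schatten-norm argument of Theorem~\ref{theorem:PvsQ}. After reducing to pure $\rho=V_k(\psi\otimes\delta_k)V_k^\dagger\in\im(\cP)$ as in the paper, $H(\rho)=H(\delta_k)$ and $\log\cQ(\rho)$ is block diagonal in $l$. So, with $p_l,\psi_l$ as in that proof,
\[
D\bigl(\rho\,\|\,\cQ(\rho)\bigr)=H(\{p_l\}_l)+\sum_l p_l\,D\bigl((\id_{A_l}\otimes\iota_{k|l})(\psi_l\otimes\delta_k)\,\big\|\,\Tr_{B_{k,l}}(\psi_l)\otimes\omega_l\bigr)
\]
holds exactly, with no analogue of the Schatten-convexity loss. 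Optimizing over $\{p_l\}_l$ and the $\psi_l$ (with a reference system attached) gives $D^{\cb}(\cP\|\cQ)=\max_k\log\sum_l 2^{D^{\cb}(k,l)}$. This never uses Eq.~\eqref{eq:omega-block}, so together with your upper bound it in fact collapses \eqref{eq:PQ-error}--\eqref{eq:PQ-error-3} without that hypothesis.
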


\begin{proof}
    The bound in Eq.~\eqref{eq:PQ-error} always holds \cite{wang_resource_2019}. To prove Eq.~\eqref{eq:PQ-error-3}, we use the bound from Eq.~\eqref{eq:channel_discrimination_n_shot_exponents_renyi}, which holds for all $\alpha>1, \epsilon\in (0,1)$:
     \begin{align}
     e_P(\cP,\cQ,n,\epsilon)
     &\leq e_A(\cP, \cQ,n,\epsilon) \\
     &\leq \sD_{\alpha}^{\cb,\reg}(\cP \Vert \cQ) + \frac{\alpha}{n(\alpha-1)} \log\left( \frac{1}{1-\epsilon} \right) \\
     &\leq \max_k \left( \log \sum_l 2^{\sD_{\alpha}^{\cb}(k,l)} \right) + \frac{\alpha}{n(\alpha-1)} \log\left( \frac{1}{1-\epsilon} \right), 
\end{align}
where the last inequality follows from Theorem~\ref{theorem:PvsQ}. Hence,
\begin{equation}
    \limsup_{n\to \infty} e_A(\cP,\cQ,n,\epsilon) \leq \max_k \left( \log \sum_l 2^{\sD_{\alpha}^{\cb}(k,l)} \right),
\end{equation}
and the asserted claim follows by taking the limit $\alpha\to 1^+$, since $\sD^{\cb}_{\alpha}(k,l)\to D^{\cb}(k,l)$ as $\alpha\to 1^+$ \cite[Lemma 33]{Ding2023forward}.

The second claim follows from Eq.~\eqref{eq:fawzi-sc}, since 
\begin{align}
    e_P^{\mathrm{sc}}(\Phi, \Psi, r) = e_A^{\mathrm{sc}}(\Phi, \Psi, r) &= \sup_{α > 1} {α - 1 \over α}(r - \widetilde{D}_α^{\cb, \reg}\big(\Phi\|\Psi)\big)  \\
    &\geq \sup_{α > 1} {α - 1 \over α} \left( r - \max_k \left( \log \sum_l 2^{\sD_{\alpha}^{\cb}(k,l)} \right) \right) ,
\end{align}
where the last inequality again follows from Theorem~\ref{theorem:PvsQ}.

For the final assertion, note that when $\omega_l$ are of the form in Eq.~\eqref{eq:omega-block}, then Theorem~\ref{theorem:PvsQ} shows that the sandwiched R\'enyi cb-divergences are additive for all $\alpha>1$:
\begin{align}
    \sD^{\cb}_{\alpha}(\cP \| \cQ) = \sD^{\cb,\reg}_{\alpha} (\cP \| \cQ) = \max_k \left( \log \sum_l 2^{\sD_{\alpha}^{\cb}(k,l)} \right),
\end{align}
which implies the same for the Umegaki divergence by taking $\alpha\to 1^+$ limit:
\begin{align}
        D^{\cb}(\cP \| \cQ) = D^{\cb,\reg} (\cP \| \cQ) = \max_k \left( \log \sum_l 2^{D^{\cb}(k,l)} \right).
\end{align}
This collapses all inequalities in Eqs.~\eqref{eq:PQ-error}-\eqref{eq:PQ-error2} into equalities.
\end{proof}

\subsection{Pimsner-Popa Indices}
For a unital $*-$subalgebra $\scA = \oplus_l \B{A_l} \otimes \iden_{B_l} \subseteq \B{\Hil}$ with conditional expectation (not necessarily trace-preserving) $E_{\scA}=\oplus_l \id_{A_l}\otimes \cR_{\omega_l}$, the \emph{Pimsper-Popa index} is defined as \cite{Pimsner1986index, Gao2020subalgebra-entropy}:
\begin{align}
    C(E_{\scA}) &:= \inf \{ c>0 \, | \, \forall \rho\in \cD (\Hil) : \rho \leq cE_{\scA}(\rho) \}, \\
    C_{\cb}(E_{\scA}) &:= \sup_{n\in \N} C(E_{\scA} \otimes \id_{\M{n}}).
\end{align}
A straightforward calculation shows that 
\begin{align}
    \log C(E_{\scA}) &= D_{\max}(\id \Vert E_{\scA}) \\
    \log C_{\cb}(E_{\scA}) &= D^{\cb}_{\max}(\id \Vert E_{\scA}) .  
\end{align}
When $E_{\scA,\Tr}$ is the unique trace-preserving conditional expectation onto $\scA$, i.e., $\omega_l=\iden_{B_l}/d_{B_l}$ for all $l$, explicit formulae for these indices were derived in \cite[Theorem 6.1]{Pimsner1986index}:
\begin{equation}
    C(E_{\scA, \Tr}) = \sum_l \min(d_{A_l},d_{B_l}) d_{B_l} \quad\text{and}\quad C_{\cb}(E_{\scA, \Tr}) = \sum_l d^2_{B_l},
\end{equation}
which arise as a special case of Theorem~\ref{theorem:IDvQ}. More generally, for an arbitrary conditional expectation $E_{\scA}=\oplus_l \id_{A_l}\otimes \cR_{\omega_l}$, Theorem~\ref{theorem:IDvQ} provides explicit expressions for the indices:
\begin{equation}
    C(E_{\scA}) = \sum_l \Tr_{\min(d_{A_l},d_{B_l})} (\omega_l^{-1}) \quad\text{and}\quad C_{\cb}(E_{\scA}) = \sum_l \Tr (\omega_l^{-1}).
\end{equation}

Similarly, for two unital $*-$subalgebras $\scA \subseteq \scB \subseteq \B{\Hil}$ with unique trace-preserving conditional expectations $E_{\scA,\Tr}$ and $E_{\scB, \Tr}$, the \emph{Pimsper-Popa index} is defined as \cite{Pimsner1986index}
\begin{align}
     C(E_{\scA,\Tr}, E_{\scB,\Tr}) &:= \inf \{ c>0 \, | \, \forall \rho\in \cD (\Hil) : E_{\scB,\Tr}(\rho) \leq cE_{\scA, \Tr}(\rho) \}, \label{eq:pimsner-popa} \\
    C_{\cb}(E_{\scA,\Tr}, E_{\scB,\Tr}) &:= \sup_{n\in \N} C(E_{\scA,\Tr} \otimes \id_{\M{n}}, E_{\scB,\Tr} \otimes \id_{\M{n}}). \label{eq:pimsner-popa2}
\end{align}
As before, it is easy to see that 
\begin{align}
    \log C(E_{\scA,\Tr}, E_{\scB,\Tr}) &= D_{\max}(E_{\scB,\Tr} \Vert E_{\scA, \Tr}) \\
    \log C_{\cb}(E_{\scA,\Tr}, E_{\scB,\Tr}) &= D^{\cb}_{\max}(E_{\scB,\Tr} \Vert E_{\scA, \Tr}).  
\end{align}

    Clearly, both $E_{\scA,\Tr}$ and $E_{\scB,\Tr}$ share a common full-rank invariant state, namely the maximally mixed state in $\cD(\Hil)$. Hence, according to Theorem~\ref{theorem:PvsQ:common} (see also Appendix~\ref{appen:three-layer}), we obtain a three-layer decomposition of $\Hil$:
    \begin{align}
    \Hil &= \oplus_k \oplus_l A_l \otimes B_{k,l} \otimes C_k \\
    &= \oplus_k \underbrace{(\oplus_l A_l \otimes B_{k,l})}_{:=D_k} \otimes C_k  \\
    &= \oplus_l A_l \otimes \underbrace{(\oplus_k B_{k,l} \otimes C_k)}_{:=E_l}, 
    \end{align}  
    such that
    \begin{align}
        \scB &= \bigoplus_k \B{D_k} \otimes \iden_{C_k}, \\
        \scA &= \bigoplus_l \B{A_l} \otimes \iden_{E_l}.
    \end{align}
    Moreover, $E_{\scB,\Tr}=\oplus_k \id_{D_k} \otimes \cR_{\delta_k}$ and $E_{\scA, \Tr} = \oplus_l \id_{A_l} \otimes \cR_{\omega_l}$ with 
    \begin{align}
    \delta_{k}&= \frac{\iden_{C_k}}{d_{C_k}}, \quad \text{and} \\
    \omega_l &= \frac{\iden_{E_l}}{d_{E_l}} =\oplus_k \frac{d_{B_{k,l}} d_{C_k}}{\sum_{k'} d_{B_{k',l}} d_{C_k'}} \frac{\iden_{B_{k,l}}}{d_{B_{k,l}}} \otimes \frac{\iden_{C_k}}{d_{C_k}}.
    \end{align}
Theorem~\ref{theorem:PvsQ:common} recovers the formula derived in \cite[Theorem 6.1]{Pimsner1986index} for the indices
\begin{align}
    \log C(E_{\scB, \Tr}\Vert E_{\scA,\Tr}) 
    &= \log \max_k  \left(  \sum_l \frac{\min(d_{A_l}, d_{B_{k,l}}) \sum_{k'} d_{B_{k',l}} d_{C_{k'}} }{d_{C_k}} \right) \\
     \log C_{\cb}(E_{\scB, \Tr}\Vert E_{\scA,\Tr}) 
    &= \log \max_k  \left(  \sum_l \frac{d_{B_{k,l}} \sum_{k'} d_{B_{k',l}} d_{C_{k'}} }{d_{C_k}} \right). 
\end{align}
It also generalizes the result in \cite[Theorem 3.1]{Gao2020subalgebra-entropy}, showing that 
\begin{equation}
    \log C(E_{\scB, \Tr}\Vert E_{\scA,\Tr}) = D_{\max}(E_{\scB, \Tr}\Vert E_{\scA,\Tr}) = D(E_{\scB, \Tr}\Vert E_{\scA,\Tr}) = D_{\min}(E_{\scB, \Tr}\Vert E_{\scA,\Tr}).
\end{equation}
In addition, Theorem~\ref{theorem:PvsQ:common} obtains explicit expressions for more general subalgebra inclusion indices that can be defined via arbitrary conditional expectations (not necessarily trace-preserving) in exactly the same way as in Eq.~\eqref{eq:pimsner-popa}-\eqref{eq:pimsner-popa2}:
\begin{align}
    C(E_{\scA}, E_{\scB}) &:= \inf \{ c>0 \, | \, \forall \rho\in \cD (\Hil) : E_{\scB}(\rho) \leq cE_{\scA}(\rho) \} \\
    C_{\cb}(E_{\scA}, E_{\scB}) &:= \sup_{n\in \N} C(E_{\scA} \otimes \id_{\M{n}}, E_{\scB} \otimes \id_{\M{n}}).
\end{align}

\section{Application}\label{sec:application_gns}
In this section, we apply our main results to GNS-symmetric Markovian dynamics. We prove that discrimination rates for large iterates of a GNS-symmetric channel converge exponentially fast to those of the corresponding idempotent peripheral projections. We begin with some preliminaries on the asymptotic properties of quantum channels. 

Let $\Phi:\B{\Hil}\to \B{\Hil}$ be a quantum channel. Then, $\Phi$ admits a Jordan decomposition 
\begin{equation}
    \Phi = \sum_{i} \lambda_i \mathcal{P}_i + \mathcal{N}_i \quad \text{with} \quad \mathcal{N}_i \mathcal{P}_i = \mathcal{P}_i \mathcal{N}_i = \mathcal{N}_i \,\,\, \text{and} \,\,\, \mathcal{P}_i \mathcal{P}_j = \delta_{ij}\mathcal{P}_i,
\end{equation}
where the sum runs over the distinct eigenvalues $\lambda_i$ of $\Phi$, $\mathcal{P}_i$ are projectors whose rank equals the algebraic multiplicity of $\lambda_i$, and $\mathcal{N}_i$ denote the corresponding nilpotent operators \cite[Chapter 6]{wolf_quantum_2012}. All the eigenvalues $\lambda_i$ of $\Phi$ satisfy $\abs{\lambda_i}\leq 1$ and they are either real or come in complex conjugate pairs. Since $\Phi$ always admits a fixed point, $\lambda=1$ is always an eigenvalue of $\Phi$. Moreover, all $\lambda_i$ with $\abs{\lambda_i}=1$ have equal algebraic and geometric multiplicities, so that $\mathcal{N}_i=0$ for all such eigenvalues. As $l\to \infty$, the image of 
\begin{equation}
    \Phi^l := \underbrace{\Phi \circ \Phi \circ \ldots \circ \Phi}_{l\,  \text{times}}
\end{equation}
converges to the \emph{peripheral} space $\mathscr{X} (\Phi):= \text{span}\{X\in \B{\Hil} : \exists \,\theta\in \mathbb{R} \text{ s.t. } \Phi(X)= e^{i\theta} X\}$. 
\begin{definition}\label{def:peripheral-proj}
   Let $\Phi:\B{\Hil}\to \B{\Hil}$ be a quantum channel. The asymptotic part of $\Phi$ and the projector onto the peripheral space $\mathscr{X} (\Phi)$, are respectively defined as follows: 
   \begin{equation}\label{eq:phiinf-proj}
\Phi_{\infty}:= \sum_{i:\, |\lambda_i|=1}\lambda_i \mathcal{P}_i \quad \text{and} \quad  \mathcal{P}_{\Phi} = \sum_{i: \, |\lambda_i|=1} \mathcal{P}_i .
\end{equation}
Note that $\Phi_{\infty}=\Phi\circ \cP_{\Phi}=\cP_{\Phi}\circ \Phi$ and $\im (\cP_{\Phi})=\mathscr{X}(\Phi)$. We denote the largest magnitude of a non-peripheral eigenvalue of $\Phi$ by $\mu_{\Phi}:=\operatorname{spr}(\Phi-\Phi_{\infty})$, so that $1-\mu_{\Phi}$ is the spectral gap.
\end{definition}

Given arbitrary channels $\Phi, \Psi : \B{\Hil}\to \B{\Hil}$, the divergence sequence $D(\Phi^l \Vert \Psi^l)$ can dramatically oscillate. For example, let $\Phi=\id_{\C^2}$ be the qubit identity channel and $\Psi (\cdot)=X(\cdot)X$ be the Pauli-$X$ channel. Then, for all $k\in \mathbb{N}$:
\begin{itemize}
    \item $D(\Phi^{2k}\Vert \Psi^{2k}) = 0$, and
    \item $D(\Phi^{2k+1} \Vert \Psi^{2k+1})= +\infty$.
\end{itemize}

\subsection{GNS-symmetric channels}

Consider now a channel $\Phi:\B{\Hil}\to \B{\Hil}$ that satisfies GNS-detailed balance with respect to a full-rank invariant state $\delta=\Phi(\delta)$:
\begin{align}\label{eq:GNS}
  \forall X,Y\in \B{\Hil} : \quad 
   \Tr (\Phi^{\dagger}(X)Y\delta) &= \Tr (X \Phi^{\dagger}(Y)\delta).
\end{align}
Then, $\Phi$ satisfies (complete) strong data-processing inequality (CSDPI) with respect to its peripheral part, i.e. for all $\rho\in \cD(\Hil \otimes \Hil)$ and $l\in \mathbb{N}$ \cite{Gao2022sobolev, Gao2025sobolev},
\begin{align}
    D \left((\id_{\Hil} \otimes\Phi^l) (\rho) \Vert (\id_{\Hil} \otimes \Phi^l\circ\cP_{\Phi}) (\rho) \right) &\leq s_{\Phi}^l D( \rho \Vert (\id_{\Hil} \otimes \cP_{\Phi}) (\rho)) 
\end{align}
for some constant $s_{\Phi}<1$.
Hence, Pinsker inequality shows that for all $l\in \mathbb{N}$:
\begin{align}
    \norm{\Phi^l-\Phi^l_{\infty}}_{\diamond} & 
    \leq \sqrt{2 s_{\Phi}^l D^{\cb}(\id_{\Hil} \Vert \cP_{\Phi}) } \xrightarrow[t\to \infty]{} 0 
\end{align}

Moreover, GNS-symmetry forces $\spec (\Phi) \subseteq [-1, 1 ]$, so there is only a possibility of 2-cycles in the periphery. More precisely, one can show \cite{Gao2022sobolev}:
\begin{align}
    \Phi^{2k} \circ \cP_{\Phi} = \cP_{\Phi} \circ \Phi^{2k} = \cP_{\Phi}  \quad\text{and}\quad \norm{\Phi^{2k} - \cP_{\Phi}}_{\diamond} \leq \sqrt{2 s_{\Phi}^{2k} D^{\cb}(\id_{\Hil} \Vert \cP_{\Phi})} \xrightarrow[k\to\infty]{} 0.
\end{align}

\begin{theorem}\label{theorem:GNS}
    Suppose $\Phi,\Psi:\B{\Hil}\to \B{\Hil}$ are GNS-symmetric with respect to a common full-rank invariant state $\tau=\Phi(\tau)=\Psi(\tau)$ and $\im( \cP_{\Psi}^*) \subseteq \im (\cP_{\Phi}^*)$. Then, for 
    \begin{equation}
        2k > \max \left( \frac{\log D^{\cb}(\id_{\Hil} \Vert \cP_{\Phi})}{\log 1/\mu_{\Phi}} , \frac{\log D^{\cb}(\id_{\Hil} \Vert \cP_{\Psi})}{\log 1/\mu_{\Psi}} \right),
    \end{equation}
    we have 
    \begin{align}
     D^{\cb}(\cP_{\Phi} \Vert \cP_{\Psi} ) + \log \left( \frac{1}{1 + \epsilon_{\Psi}^{2k}} \right) &\leq D^{\cb}_{\min}(\Phi^{2k} \Vert \Psi^{2k}) \\
     &\leq D^{\cb}_{\max}(\Phi^{2k} \Vert \Psi^{2k}) \\
     &\leq  D^{\cb}(\cP_{\Phi} \Vert \cP_{\Psi})+ \log \big(1 + \epsilon_{\Phi}^{2k} \big)  + \log \left( \frac{1}{1 - \epsilon_{\Psi}^{2k}} \right),
    \end{align}
    where $\epsilon_{\star}^{2k}:= \mu^{2k}_{\star}D^{\cb}(\id_{\Hil} \Vert \cP_{\star}) \xrightarrow[k\to \infty]{} 0$ for $\star\in \{\Phi,\Psi \}$.
\end{theorem}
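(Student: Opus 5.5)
Our plan is to sandwich $\Phi^{2k}$ and $\Psi^{2k}$ between multiples of the peripheral projections in the positive-semidefinite order, and then invoke Theorem~\ref{theorem:PvsQ:common} for the pair $\cP_\Phi,\cP_\Psi$.

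\emph{Step 1: reduction to idempotents.} GNS-symmetry with respect to $\tau$ implies that $\cP_\Phi$ and $\cP_\Psi$ are idempotent channels. They are the spectral projections onto the $\pm1$ eigenspaces, and the relation $\Phi^{2k}\circ\cP_\Phi=\cP_\Phi$ shows that they are CPTP as limits of $\Phi^{2k}$. Both fix $\tau$, since $\Phi^{2k}(\tau)=\tau$. Hence Theorem~\ref{theorem:PvsQ:common} applies under the inclusion $\im(\cP_\Psi^*)\subseteq\im(\cP_\Phi^*)$. It gives $D^{\cb}_{\min}(\cP_\Phi\Vert\cP_\Psi)=D^{\cb}(\cP_\Phi\Vert\cP_\Psi)=D^{\cb}_{\max}(\cP_\Phi\Vert\cP_\Psi)=:D$.

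\emph{Step 2: operator sandwich.} Write $\Phi^{2k}=\cP_\Phi+(\Phi^{2k}-\cP_\Phi)$, where $\Phi^{2k}-\cP_\Phi=\Phi^{2k}\circ(\id-\cP_\Phi)$ has spectral radius $\mu_\Phi^{2k}$. We aim to prove, for every $\rho$ (with reference system),
\begin{equation}
(1-\epsilon_\Phi^{2k})\,(\id\otimes\cP_\Phi)(\rho)\le(\id\otimes\Phi^{2k})(\rho)\le(1+\epsilon_\Phi^{2k})\,(\id\otimes\cP_\Phi)(\rho),
\end{equation}
and the analogous statement for $\Psi$. The natural route is to use GNS-symmetry: $\Phi^{2k}-\cP_\Phi$ is self-adjoint in the KMS inner product with norm $\mu_\Phi^{2k}$ there. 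One then bounds the output relative to $\cP_\Phi(\rho)$ via Theorem~\ref{theorem:IDvQ}, which gives $\rho\le C\,\cP_\Phi(\rho)$ with $\log C=D^{\cb}_{\max}(\id\Vert\cP_\Phi)=D^{\cb}(\id\Vert\cP_\Phi)$. Concretely, conjugate by $\cP_\Phi(\rho)^{-1/2}$. The deviation $(\Phi^{2k}-\cP_\Phi)(\rho)=(\Phi^{2k}-\cP_\Phi)(\rho-\cP_\Phi(\rho))$ should be controlled in the weighted operator norm by $\mu_\Phi^{2k}$ times a quantity governed by the Pimsner--Popa index. The statement's constant $\mu^{2k}D^{\cb}(\id\Vert\cP)$, and the threshold on $k$ ensuring $\epsilon^{2k}<1$, suggest exactly this form of estimate. \textbf{This is the main obstacle}: obtaining a relative (operator-order) rather than diamond-norm bound with the precise constant. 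I would first try an interpolation or Riesz--Thorin argument between the KMS $L^2$ spectral bound and the $L^\infty$ index bound. If that is not sharp, I would use the $\cb$ Pimsner--Popa inequality together with the Jordan decomposition directly.

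\emph{Step 3: conclude.} From Step 2 and the lower sandwich for $\Psi^{2k}$ we get $(\id\otimes\Psi^{2k})(\rho)\ge(1-\epsilon_\Psi^{2k})(\id\otimes\cP_\Psi)(\rho)$. Together with the upper sandwich for $\Phi^{2k}$, this gives
\begin{equation}
D_{\max}\bigl(\Phi^{2k}(\rho)\Vert\Psi^{2k}(\rho)\bigr)\le\log(1+\epsilon_\Phi^{2k})+\log\tfrac{1}{1-\epsilon_\Psi^{2k}}+D_{\max}\bigl(\cP_\Phi(\rho)\Vert\cP_\Psi(\rho)\bigr).
\end{equation}
Taking the supremum over $\rho$ and using Step 1 yields the upper bound. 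The middle inequality is the ordering $D_{\min}\le D_{\max}$. For the lower bound, choose the optimizer $\rho^*$ of $D^{\cb}_{\min}(\cP_\Phi\Vert\cP_\Psi)$ from the achievability part of Theorem~\ref{theorem:PvsQ:common}; it lies in $\im(\cP_\Phi)$, say $\rho^*=\cP_\Phi(\rho^*)$. Then $\Phi^{2k}(\rho^*)=\rho^*$ exactly, and its support projection is unchanged. Using $\Psi^{2k}(\rho^*)\le(1+\epsilon_\Psi^{2k})\cP_\Psi(\rho^*)$ in $D_{\min}=-\log\Tr(\Pi_{\rho}\sigma)$, we get
\begin{equation}
D^{\cb}_{\min}(\Phi^{2k}\Vert\Psi^{2k})\ge D-\log(1+\epsilon_\Psi^{2k}),
\end{equation}
which is the claimed lower bound.
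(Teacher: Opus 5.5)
Your architecture is the paper's. It also sandwiches $\Phi^{2k}$ and $\Psi^{2k}$ between $(1\pm\epsilon)\cP_\Phi$ and $(1\pm\epsilon)\cP_\Psi$ in the cp order, and chains $D^{\cb}_{\max}$ through $\cP_\Phi,\cP_\Psi$; your pointwise chaining is the chain rule it cites. For the lower bound it restricts to inputs $(\id\otimes\cP_\Phi)(\rho)$, using $\Phi^{2k}\circ\cP_\Phi=\cP_\Phi$, $\cP_\Psi\circ\cP_\Phi=\cP_\Psi$ and Theorem~\ref{theorem:PvsQ:common}. The step you leave open, the cp sandwich, is not proved in the paper either; it is imported from \cite[Lemma A.1]{Gao2022sobolev}.

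The genuine gap is that no argument can give that sandwich with the stated constant $\epsilon^{2k}_\star=\mu^{2k}_\star D^{\cb}(\id_{\Hil}\Vert\cP_\star)$, because it is false. Take the qubit depolarizing channel $\Phi(\rho)=\lambda\rho+(1-\lambda)\Tr(\rho)\iden/2$. It is GNS-symmetric for $\tau=\iden/2$, with $\cP_\Phi=\cR_{\iden/2}$, $\mu_\Phi=\lambda$, and $D^{\cb}(\id\Vert\cP_\Phi)=\log 4$ by Theorem~\ref{theorem:IDvQ}. With $a=\lambda^{2k}$ one has $\Phi^{2k}=a\,\id+(1-a)\cR_{\iden/2}$. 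The maximally entangled input gives $D^{\cb}_{\max}(\Phi^{2k}\Vert\cP_\Phi)\geq\log(1+3a)$. Your upper sandwich would force this to be at most $\log(1+a\log 4)$, and $\log 4<3$ in bits or nats.

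This refutes the theorem itself. Take $\Psi=\cR_{\iden/2}$ (or a depolarizing channel with small enough parameter), $\lambda=1/2$ and $k=1$. The threshold holds and $D^{\cb}(\cP_\Phi\Vert\cP_\Psi)=0$, yet $D^{\cb}_{\max}(\Phi^{2}\Vert\Psi^{2})\geq\log\tfrac{7}{4}>\log(1+\tfrac{1}{4}\log 4)$. The lower bound fails too. Let $\Phi=\id_{\C^3}\otimes\cR_{\iden/2}$ and let $\Psi$ be the depolarizing channel on $\C^6$ with $\lambda=1/4$, $k=1$. Then $D^{\cb}(\cP_\Phi\Vert\cP_\Psi)=\log 9$ and $D^{\cb}_{\min}(\Phi^2\Vert\Psi^2)\leq\log 9-\log\tfrac{3}{2}$, below the claimed $\log 9-\log(1+\tfrac{1}{16}\log 36)$.

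So what is missing is a corrected constant, not a sharper estimate. It appears the Pimsner--Popa index $C_{\cb}(\cP_\star)=2^{D^{\cb}(\id\Vert\cP_\star)}$ was replaced by its logarithm. Your interpolation idea does work once the index is used, roughly as follows.

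In the Heisenberg picture, $\Theta=(\Phi^*)^{2k}-\cP_\Phi^*$ is a bimodule map over $N=\im(\cP_\Phi^*)$, the fixed-point algebra of $(\Phi^*)^2$. By GNS symmetry it is self-adjoint for the $N$-valued inner product $\cP_\Phi^*(X^\dagger Y)$, so its module norm equals its spectral radius $\mu_\Phi^{2k}$. Combine this with two further ingredients:
\begin{itemize}
\item the Pimsner--Popa inequality $X^\dagger X\leq C_{\cb}\,\cP^*_\Phi(X^\dagger X)$ on the output side;
\item the bound $Y^2\leq C_{\cb}Y$ for inputs normalized so that $\cP^*_\Phi(Y)$ is a projection.
\end{itemize}
Together these give $\pm\Theta\leq_{\cp}C_{\cb}\,\mu_\Phi^{2k}\,\cP^*_\Phi$. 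In the depolarizing example the optimal constant is $(C_{\cb}-1)\mu^{2k}$, so this is essentially sharp.

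With $\epsilon^{2k}_\star:=\mu_\star^{2k}\,2^{D^{\cb}(\id\Vert\cP_\star)}$ and threshold $2k>D^{\cb}(\id\Vert\cP_\star)/\log(1/\mu_\star)$, your Steps 1 and 3 go through unchanged.
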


\begin{proof}
    For the upper bound, we use the chain rule (see \cite[Theorem III.1]{Christandl2017max}):
    \begin{align}
        D^{\cb}_{\max}(\Phi^{2k} \Vert \Psi^{2k}) &\leq D^{\cb}_{\max}(\Phi^{2k} \Vert \cP_{\Phi}) + D^{\cb}_{\max}(\cP_{\Phi} \Vert \cP_{\Psi}) + D^{\cb}_{\max}(\cP_{\Psi} \Vert \Psi^{2k}) \\ 
        &\leq \log \big(1 + \epsilon_{\Phi}^{2k} \big) + \log \left( \frac{1}{1 - \epsilon_{\Psi}^{2k}} \right)  + D^{\cb}(\cP_{\Phi} \Vert \cP_{\Psi}),
    \end{align}
    where the second inequality follows from Theorem~\ref{theorem:PvsQ:common} and the cp-order mixing estimates \cite[Lemma A.1]{Gao2022sobolev}:
    \begin{align}
       (1 - \epsilon_{\Phi}^{2k})  \cP_{\Phi} \leq_{\cp} \Phi^{2k} &\leq_{\cp} (1 + \epsilon_{\Phi}^{2k})  \cP_{\Phi} \label{eq:cpmixing-Phi} \\
       (1 - \epsilon_{\Psi}^{2k}) \cP_{\Psi} \leq_{\cp} \Psi^{2k} &\leq_{\cp} (1 + \epsilon_{\Psi}^{2k})  \cP_{\Psi}.\label{eq:cpmixing-Psi}
    \end{align}
    For the lower bound, we note that,
    \begin{align}
        D^{\cb}_{\min}(\Phi^{2k} \Vert \Psi^{2k}) &\geq D^{\cb}_{\min}(\Phi^{2k} \Vert \cP_{\Psi} ) + \log \left( \frac{1}{1 + \epsilon_{\Psi}^{2k}} \right) \\ 
        &\geq D^{\cb}_{\min}(\cP_{\Phi} \Vert \cP_{\Psi} )  + \log \left( \frac{1}{1 + \epsilon_{\Psi}^{2k}} \right), \\
        &= D^{\cb}(\cP_{\Phi} \Vert \cP_{\Psi} )  + \log \left( \frac{1}{1 + \epsilon_{\Psi}^{2k}} \right),
    \end{align}
    where the first inequality follows from the mixing estimate Eq.~\eqref{eq:cpmixing-Psi} because $\sigma \leq \sigma'$ implies $D_{\min}(\rho \Vert \sigma)\geq D_{\min}(\rho \Vert \sigma')$, the second inequality follows from data-processing: 
     \begin{align}
        D^{\cb}_{\min}(\Phi^{2k} \Vert \cP_{\Psi} )  \geq D^{\cb}_{\min}(\Phi^{2k} \circ \cP_{\Phi} \Vert \cP_{\Psi} \circ \cP_{\Phi} ) = D^{\cb}_{\min}(\cP_{\Phi} \Vert \cP_{\Psi} ),
    \end{align}
    where $\cP_{\Psi}\circ \cP_{\Phi}=\cP_{\Psi}$ follows from the image inclusion $\im(\cP_{\Psi}^*)\subseteq \im (\cP_{\Phi}^*)$, and the final equality $D^{\cb}_{\min}(\cP_{\Phi} \Vert \cP_{\Psi} )=D^{\cb}(\cP_{\Phi} \Vert \cP_{\Psi} )$ follows from Theorem~\ref{theorem:PvsQ:common}.
\end{proof}

\begin{corollary}
     Suppose $\Phi,\Psi:\B{\Hil}\to \B{\Hil}$ are GNS-symmetric with respect to a common full-rank invariant state $\tau=\Phi(\tau)=\Psi(\tau)$ and $\im( \cP_{\Psi}^*) \subseteq \im (\cP_{\Phi}^*)$. Then, for 
    \begin{equation}
        2k > \max \left( \frac{\log D^{\cb}(\id_{\Hil} \Vert \cP_{\Phi})}{\log 1/\mu_{\Phi}} , \frac{\log D^{\cb}(\id_{\Hil} \Vert \cP_{\Psi})}{\log 1/\mu_{\Psi}} \right),
    \end{equation}
    and any $\epsilon\in (0,1)$, the following bounds hold: 
    \begin{align}
     D^{\cb}(\cP_{\Phi} \Vert \cP_{\Psi} ) + \log \left( \frac{1}{1 + \epsilon_{\Psi}^{2k}} \right) &\leq  \liminf_{n\to \infty} e_P (\Phi^{2k},\Psi^{2k},n,\epsilon) \\ 
     &\leq \limsup_{n\to \infty} e_A (\Phi^{2k},\Psi^{2k},n,\epsilon)  \\
    &\leq  D^{\cb}(\cP_{\Phi} \Vert \cP_{\Psi})+ \log \big(1 + \epsilon_{\Phi}^{2k} \big)  + \log \left( \frac{1}{1 - \epsilon_{\Psi}^{2k}} \right),
    \end{align}
    where $\epsilon_{\star}^{2k}:= \mu^{2k}_{\star}D^{\cb}(\id_{\Hil} \Vert \cP_{\star}) \xrightarrow[k\to \infty]{} 0$ for $\star\in \{\Phi,\Psi \}$. 
    
    Furthermore, if $r > D^{\cb}(\cP_{\Phi} \| \cP_{\Psi}) + \log \left( \frac{1}{1+ \epsilon_{\Psi}^{2k}} \right),$ then 
    \begin{equation}
    e^{\mathrm{sc}}_P(\Phi^{2k}, \Psi^{2k}, r)
    = e^{\mathrm{sc}}_A(\Phi^{2k}, \Psi^{2k}, r) \leq r - D^{\cb}(\cP_{\Phi} \| \cP_{\Psi}) - \log \left( \frac{1}{1 + \epsilon_{\Psi}^{2k}} \right),
    \end{equation}
    and if $r > D^{\cb}(\cP_{\Phi} \Vert \cP_{\Psi})+ \log \big(1 + \epsilon_{\Phi}^{2k} \big)  + \log \left( \frac{1}{1 - \epsilon_{\Psi}^{2k}} \right),$ then  
    \begin{equation}
    e^{\mathrm{sc}}_P(\Phi^{2k}, \Psi^{2k}, r)
    = e^{\mathrm{sc}}_A(\Phi^{2k}, \Psi^{2k}, r) \geq r - D^{\cb}(\cP_{\Phi} \Vert \cP_{\Psi}) - \log \big(1 + \epsilon_{\Phi}^{2k} \big)  - \log \bigg( \frac{1}{1 - \epsilon_{\Psi}^{2k}} \bigg).
    \end{equation}
\end{corollary}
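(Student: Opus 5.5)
The plan is to apply the general discrimination bounds of Section~\ref{sec:preliminaries} to the channels $\Phi^{2k},\Psi^{2k}$, feeding in the two-sided estimates of Theorem~\ref{theorem:GNS}. Write $L_k := D^{\cb}(\cP_{\Phi}\Vert\cP_{\Psi}) + \log\frac{1}{1+\epsilon_{\Psi}^{2k}}$ and $U_k := D^{\cb}(\cP_{\Phi}\Vert\cP_{\Psi}) + \log(1+\epsilon_{\Phi}^{2k}) + \log\frac{1}{1-\epsilon_{\Psi}^{2k}}$. Theorem~\ref{theorem:GNS} gives $L_k\le D^{\cb}_{\min}(\Phi^{2k}\Vert\Psi^{2k})\le D^{\cb}_{\max}(\Phi^{2k}\Vert\Psi^{2k})\le U_k$.

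\textbf{Stein bounds.} The key observation is that $D_{\min}$ and $D_{\max}$ sandwich all the Rényi quantities, and both are super- respectively sub-additive under tensor powers. The cp-order mixing estimates \eqref{eq:cpmixing-Phi}, \eqref{eq:cpmixing-Psi} tensorize, e.g. $(\Psi^{2k})^{\otimes n}\le_{\cp}(1+\epsilon_\Psi^{2k})^n\cP_\Psi^{\otimes n}$. Moreover, $\cP_\Phi^{\otimes n},\cP_\Psi^{\otimes n}$ are idempotent with a common full-rank invariant state $\tau^{\otimes n}$ and satisfy the image inclusion, so Theorem~\ref{theorem:PvsQ:common} gives $D^{\cb}_{\min}$ and $D^{\cb}_{\max}$ of them equal to $nD^{\cb}(\cP_\Phi\Vert\cP_\Psi)$. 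Repeating the proof of Theorem~\ref{theorem:GNS} at the level of $n$ copies then yields
\[
nL_k\le D^{\cb}_{\min}\bigl((\Phi^{2k})^{\otimes n}\Vert(\Psi^{2k})^{\otimes n}\bigr),\qquad \sD^{\cb}_{\alpha'}\bigl((\Phi^{2k})^{\otimes n}\Vert(\Psi^{2k})^{\otimes n}\bigr)\le nU_k
\]
for all $\alpha'>1$; the second uses sub-additivity of $D^{\cb}_{\max}$ under the chain rule. Hence $\sD^{\cb,\reg}_{\alpha'}\le U_k$. Since $D_\alpha\ge D_{\min}$, plugging into the $n$-shot bounds \eqref{eq:channel_discrimination_n_shot_exponents_renyi} (applied at $n=1$ with the $n$-fold channel for the lower bound) gives
\[
L_k+\frac{\alpha}{n(\alpha-1)}\log\frac1\epsilon\le e_P\le e_A\le U_k+\frac{\alpha'}{n(\alpha'-1)}\log\frac{1}{1-\epsilon}.
\]
Taking $\liminf$ and $\limsup$ in $n$ yields the first chain of inequalities.

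\textbf{Strong converse exponents.} Use \eqref{eq:fawzi-sc}: $e^{\rm sc}=\sup_{\alpha>1}\frac{\alpha-1}{\alpha}\bigl(r-\sD^{\cb,\reg}_\alpha\bigr)$.
\begin{itemize}
\item For the lower bound, insert $\sD^{\cb,\reg}_\alpha\le U_k$ for all $\alpha$. The supremum of $\frac{\alpha-1}{\alpha}(r-U_k)$ as $\alpha\to\infty$ is $r-U_k$, which is positive when $r>U_k$.
\item For the upper bound, use $\sD^{\cb,\reg}_\alpha\ge \sD^{\cb}_\alpha(\Phi^{2k}\Vert\Psi^{2k})\ge D^{\cb}_{\min}(\Phi^{2k}\Vert\Psi^{2k})\ge L_k$. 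Then $\frac{\alpha-1}{\alpha}(r-\sD_\alpha^{\cb,\reg})\le\frac{\alpha-1}{\alpha}(r-L_k)\le r-L_k$ whenever $r>L_k$.
\end{itemize}

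\textbf{Main obstacle.} The delicate point is the $n$-copy version of Theorem~\ref{theorem:GNS}. Two things must be checked there. First, the chain rule and the mixing constants multiply correctly under tensor powers. Second, the lower bound on $D_{\min}^{\cb}$ of the $n$-fold channels holds; for this I will use additivity of the formula in Theorem~\ref{theorem:PvsQ:common} together with data processing through $\cP_\Phi^{\otimes n}$, exactly as in the single-copy proof. The mixing estimates need $2k$ above the stated threshold so that $\epsilon^{2k}_\Psi<1$, which is the assumption given.
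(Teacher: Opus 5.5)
Your proposal is correct and follows essentially the paper's route: you feed the two-sided estimates of Theorem~\ref{theorem:GNS} into the one-shot bounds \eqref{eq:channel_discrimination_n_shot_exponents_renyi} and the strong-converse formula \eqref{eq:fawzi-sc}, using $D_{\min}\le D_\alpha$, $\widetilde D_{\alpha'}\le D_{\max}$, and your bounds $L_k\le \widetilde D^{\cb,\reg}_\alpha(\Phi^{2k}\Vert\Psi^{2k})\le U_k$. The only difference is that you re-derive $n$-copy estimates by tensorizing the cp-order mixing bounds, whereas the paper works with the single-copy Theorem~\ref{theorem:GNS} and relies, implicitly, on superadditivity of $D^{\cb}_\alpha$ under product inputs (already built into \eqref{eq:channel_discrimination_n_shot_exponents_renyi}) and on additivity of $D^{\cb}_{\max}$, so the lower-bound half of your $n$-copy repetition is unnecessary.
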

\begin{proof}
    For the first claim, we establish the following slightly stronger $n$-shot bounds. From Eq.~\eqref{eq:channel_discrimination_n_shot_exponents_renyi}, the following holds for all $\epsilon\in (0,1)$ and $n,k\in \mathbb{N}$:
    \begin{align}
    e_P(\Phi^{2k},\Phi^{2k},n,\epsilon) \geq D_{\min}^{\cb}(\Phi^{2k} \| \Psi^{2k}) \geq D^{\cb}(\cP_{\Phi} \Vert \cP_{\Psi} ) &+ \log \left( \frac{1}{1 + \epsilon_{\Psi}^{2k}} \right),
\end{align}
where the last bound follows from Theorem~\ref{theorem:GNS}. Similarly, using Eq.~\eqref{eq:channel_discrimination_n_shot_exponents_renyi} again, 
\begin{align}   
    e_A(\Phi^{2k}, \Psi^{2k},n,\epsilon)
     &\leq D_{\max}^{\cb}(\Phi^{2k} \Vert \Psi^{2k}) + \frac{1}{n} \log\left( \frac{1}{1-\epsilon} \right) \\
    &\leq  D^{\cb}(\cP_{\Phi} \Vert \cP_{\Psi})+ \log \big(1 + \epsilon_{\Phi}^{2k} \big)  + \log \left( \frac{1}{1 - \epsilon_{\Psi}^{2k}} \right) +\frac{1}{n} \log\left( \frac{1}{1-\epsilon} \right),
\end{align}
where the last bound again follows from Theorem~\ref{theorem:GNS}. The desired claim then follows by taking the limit $n\to \infty$.

The bounds on the strong converse exponents follow from Eq.~\eqref{eq:fawzi-sc}:
\begin{align}
    e_P^{\mathrm{sc}}(\Phi^{2k}, \Psi^{2k}, r) &= e_A^{\mathrm{sc}}(\Phi^{2k}, \Psi^{2k}, r) \\ 
    &= \sup_{α > 1} {α - 1 \over α}(r - \widetilde{D}_α^{\cb, \reg}\big(\Phi^{2k} \|\Psi^{2k})\big)  \\
    &\geq r - D^{\cb}(\cP_{\Phi} \Vert \cP_{\Psi}) - \log \big(1 + \epsilon_{\Phi}^{2k} \big)  - \log \left( \frac{1}{1 - \epsilon_{\Psi}^{2k}} \right),
\end{align}
where we used the fact that $\widetilde{D}_α^{\cb, \reg}\big(\Phi^{2k} \|\Psi^{2k}) \leq D^{\cb}_{\max}(\Phi^{2k} \| \Psi^{2k})$, after which Theorem~\ref{theorem:GNS} is used to bound $D^{\cb}_{\max}$ as before. Similarly, 
\begin{align}
    e_P^{\mathrm{sc}}(\Phi^{2k}, \Psi^{2k}, r) &= e_A^{\mathrm{sc}}(\Phi^{2k}, \Psi^{2k}, r) \\ 
    &= \sup_{α > 1} {α - 1 \over α}(r - \widetilde{D}_α^{\cb, \reg}\big(\Phi^{2k} \|\Psi^{2k})\big) \\
    &\leq r - D^{\cb}(\cP_{\Phi} \Vert \cP_{\Psi}) - \log \left( \frac{1}{1 + \epsilon_{\Psi}^{2k}} \right),
\end{align}
where we used the fact that $\widetilde{D}_α^{\cb, \reg}\big(\Phi^{2k} \|\Psi^{2k}) \geq D^{\cb}_{\min}(\Phi^{2k} \| \Psi^{2k})$, after which Theorem~\ref{theorem:GNS} is used to bound $D^{\cb}_{\min}$ as before.
\end{proof}

\begin{corollary}
  Suppose $\Phi,\Psi:\B{\Hil}\to \B{\Hil}$ are GNS-symmetric with respect to a common full-rank invariant state $\tau=\Phi(\tau)=\Psi(\tau)$ and $\im( \cP_{\Psi}^*) \subseteq \im (\cP_{\Phi}^*)$. Then, for 
    \begin{equation}
        2k > \max \left( \frac{\log D^{\cb}(\id_{\Hil} \Vert \cP_{\Phi})}{\log 1/\mu_{\Phi}} , \frac{\log D^{\cb}(\id_{\Hil} \Vert \cP_{\Psi})}{\log 1/\mu_{\Psi}} \right),
    \end{equation}
    the following holds true:
    \begin{align}
     D^{\cb}(\cP_{\Phi} \Vert \cP_{\Psi} ) + \log \left( \frac{1}{1 + \epsilon_{\Psi}^{2k}} \right) 
     &\leq \liminf_{n\to \infty} -\frac{1}{n} \log (\perr^{P}(\Phi^{2k}, \Psi^{2k}, n)) \\ 
        &\leq \limsup_{n\to \infty} -\frac{1}{n} \log (\perr^{A}(\Phi^{2k}, \Psi^{2k}, n)) \\ 
        &\leq D^{\cb}(\cP_{\Phi} \Vert \cP_{\Psi})+ \log \big(1 + \epsilon_{\Phi}^{2k} \big)  + \log \left( \frac{1}{1 - \epsilon_{\Psi}^{2k}} \right),
    \end{align}
    where $\epsilon_{\star}^{2k}:= \mu^{2k}_{\star}D^{\cb}(\id_{\Hil} \Vert \cP_{\star}) \xrightarrow[k\to \infty]{} 0$ for $\star\in \{\Phi,\Psi \}$. Similarly, if $\im (\cP_{\Phi}^*)\subseteq \im (\cP_{\Psi}^*)$, then
 \begin{align}
     D^{\cb}(\cP_{\Psi} \Vert \cP_{\Phi} ) + \log \left( \frac{1}{1 + \epsilon_{\Phi}^{2k}} \right)  &\leq \liminf_{n\to \infty} -\frac{1}{n} \log (\perr^{P}(\Phi^{2k}, \Psi^{2k}, n)) \\ 
        &\leq \limsup_{n\to \infty} -\frac{1}{n} \log (\perr^{A}(\Phi^{2k}, \Psi^{2k}, n)) \\ 
        &\leq D^{\cb}(\cP_{\Psi} \Vert \cP_{\Phi})+ \log \big(1 + \epsilon_{\Psi}^{2k} \big)  + \log \left( \frac{1}{1 - \epsilon_{\Phi}^{2k}} \right).
    \end{align}
\end{corollary}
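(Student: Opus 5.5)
The plan is to combine Theorem~\ref{theorem:GNS} with the two general symmetric-exponent bounds, Lemma~\ref{lemma:perr<=} and Lemma~\ref{lemma:perr>=}, applied to the pair of channels $\Phi^{2k},\Psi^{2k}$. The middle inequality is free: every parallel strategy is an adaptive strategy, so $\perr^{A}\leq \perr^{P}$ for each $n$. Hence it suffices to lower bound the parallel exponent and upper bound the adaptive exponent.

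\emph{Upper bound.} Lemma~\ref{lemma:perr<=} applied to $\Phi^{2k},\Psi^{2k}$ gives
\begin{equation}
\limsup_{n\to\infty} -\frac{1}{n}\log \perr^{A}(\Phi^{2k},\Psi^{2k},n)\leq D^{\cb}_{\max}(\Phi^{2k}\Vert \Psi^{2k}).
\end{equation}
The right-hand side is then bounded by the upper estimate of Theorem~\ref{theorem:GNS}.

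\emph{Lower bound.} Lemma~\ref{lemma:perr>=} gives
\begin{equation}
\liminf_{n\to\infty} -\frac{1}{n}\log \perr^{P}(\Phi^{2k},\Psi^{2k},n)\geq \sup_{\alpha\in(0,1)}(1-\alpha)D^{\cb,\reg}_{\alpha}(\Phi^{2k}\Vert\Psi^{2k}).
\end{equation}
Here lies the only subtlety: the prefactor $(1-\alpha)$ prevents a naive bound $(1-\alpha)D_\alpha\geq (1-\alpha)D_{\min}$ from giving the full value. Instead I would pass through the idempotent limits directly, using the same cp-order sandwiching as in the proof of Theorem~\ref{theorem:GNS}. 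The inequalities \eqref{eq:cpmixing-Phi}--\eqref{eq:cpmixing-Psi} hold for $\Phi^{2k}$ and $\Psi^{2k}$, and $\Psi^{2k}\leq_{\cp}(1+\epsilon_\Psi^{2k})\cP_\Psi$; since $D_\alpha$ is antimonotone in its second argument, this costs at most an additive $\log(1+\epsilon^{2k}_\Psi)$. For the first argument I would use data processing with $\cP_\Phi$, since $\Phi^{2k}\circ\cP_\Phi=\cP_\Phi$ and $\cP_\Psi\circ\cP_\Phi=\cP_\Psi$. On inputs of the form $\cP_\Phi^{\otimes n}(\nu)$ this gives
\begin{equation}
D_\alpha^{\cb,\reg}(\Phi^{2k}\Vert\Psi^{2k})\geq D_\alpha^{\cb,\reg}(\cP_\Phi\Vert\cP_\Psi)-\log(1+\epsilon_\Psi^{2k}).
\end{equation}
By Theorem~\ref{theorem:PvsQ:common}, $D^{\cb}_{\min}=D^{\cb}_{\max}$ for the idempotent pair, so all $D^{\cb,\reg}_\alpha(\cP_\Phi\Vert\cP_\Psi)$ equal $D^{\cb}(\cP_\Phi\Vert\cP_\Psi)$ by additivity.

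The remaining issue is the factor $(1-\alpha)$ in the Chernoff expression. To avoid it, I would mimic the argument behind Corollary~\ref{corollary:PvQ:common-2}, which implicitly uses that an optimal input $\rho^*$ for the idempotent pair produces outputs with $\rho^*$ pure, or at least with $D_{\min}=D_{\max}$. In that regime $\cP_\Phi(\rho^*)\leq c\,\cP_\Psi(\rho^*)$ holds with $c$ tight on the support. Then $\xi\big(\cP_\Phi(\rho^*)\Vert\cP_\Psi(\rho^*)\big)$ is governed by $-\log\Tr\big(\Pi\,\cP_\Psi(\rho^*)\big)$, because the error of the test ``project onto the support of $\cP_\Phi(\rho^*)^{\otimes n}$'' is one-sided: the type-I error vanishes. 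I would therefore build the lower bound from this explicit one-sided parallel strategy. Feed $\rho^{*\otimes n}$ into $(\Phi^{2k})^{\otimes n}$ and measure the support projection $\Pi$ of $(\id\otimes\cP_\Phi)(\rho^*)^{\otimes n}$. The second inequality in \eqref{eq:cpmixing-Phi} shows that $\Phi^{2k}(\rho^*)$ has support inside $\Pi$, so the type-I error is zero. The type-II error is at most
\begin{equation}
\big((1+\epsilon^{2k}_\Psi)\Tr\Pi\,\cP_\Psi(\rho^*)\big)^{n}=2^{-n\left(D_{\min}^{\cb}(\cP_\Phi\Vert\cP_\Psi)-\log(1+\epsilon_\Psi^{2k})\right)}.
\end{equation}
Combined with Theorem~\ref{theorem:PvsQ:common}, this gives the stated lower bound.

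The main obstacle is this last step: verifying that the support of $\Phi^{2k}(\rho^*)$ lies in that of $\cP_\Phi(\rho^*)$ in the cb setting. This follows from $\Phi^{2k}\leq_{\cp}(1+\epsilon)\cP_\Phi$, but only once $\epsilon_\Phi^{2k}<1$, which is what the hypothesis on $k$ ensures. The second statement, under $\im(\cP_\Phi^*)\subseteq\im(\cP_\Psi^*)$, follows by swapping the roles of $\Phi$ and $\Psi$, since $\perr$ is symmetric in its two arguments.
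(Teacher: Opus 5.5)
Your proposal is correct, but your lower bound takes a different route from the paper's. The upper bound, the middle inequality, and the reduction of the second statement via symmetry of $\perr$ match the paper: Lemma~\ref{lemma:perr<=} followed by the upper estimate of Theorem~\ref{theorem:GNS}.

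\emph{How the paper does the lower bound.} It stays with Lemma~\ref{lemma:perr>=} and observes that $\sup_{\alpha\in(0,1)}(1-\alpha)D_\alpha^{\cb,\reg}(\Phi^{2k}\Vert\Psi^{2k})\geq D^{\cb}_{\min}(\Phi^{2k}\Vert\Psi^{2k})$. This holds because $D_\alpha^{\cb,\reg}\geq D_\alpha^{\cb}\geq D^{\cb}_{\min}$ for every $\alpha$, and $(1-\alpha)\to 1$ as $\alpha\to 0$. The lower estimate of Theorem~\ref{theorem:GNS} then finishes.

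\emph{The $(1-\alpha)$ prefactor is not an obstacle.} You call it the ``only subtlety'', but it costs nothing. Your own intermediate bound $D_\alpha^{\cb,\reg}(\Phi^{2k}\Vert\Psi^{2k})\geq D^{\cb}(\cP_\Phi\Vert\cP_\Psi)-\log(1+\epsilon_\Psi^{2k})$ holds uniformly in $\alpha\in(0,1)$. Letting $\alpha\to0$ in it already closes the proof. Likewise, Corollary~\ref{corollary:PvQ:common-2} uses nothing about purity of outputs. It only uses $D^{\cb,\reg}_\alpha(\cP\Vert\cQ)=D^{\cb}(\cP\Vert\cQ)$ together with $\alpha\to0$.

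\emph{Your explicit test is nonetheless valid.} Take $\rho^*$ (near-)optimal for $D^{\cb}_{\min}(\cP_\Phi\Vert\cP_\Psi)$ and project onto the support of $\big((\id\otimes\cP_\Phi)(\rho^*)\big)^{\otimes n}$. The type-I error is zero. The type-II error is at most $\big((1+\epsilon_\Psi^{2k})\,2^{-D^{\cb}_{\min}(\cP_\Phi\Vert\cP_\Psi)}\big)^n$. Hence $\perr^{P}(\Phi^{2k},\Psi^{2k},n)$ is at most half of this, and Theorem~\ref{theorem:PvsQ:common} converts $D^{\cb}_{\min}$ into $D^{\cb}$.

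\emph{Comparison.} Your route avoids the quantum Chernoff bound and gives an explicit bound at every finite $n$. The price is that you need an explicit optimizer, which Theorem~\ref{theorem:PvsQ:common} supplies, or a $\delta$-optimal input. The paper's route is shorter because it reuses Lemma~\ref{lemma:perr>=} and Theorem~\ref{theorem:GNS} as black boxes.

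\emph{One correction.} The support inclusion does not depend on $\epsilon_\Phi^{2k}<1$. The upper bound $\Phi^{2k}\leq_{\cp}(1+\epsilon_\Phi^{2k})\cP_\Phi$ gives it for any finite constant. The condition on $k$ is what keeps the $\log\frac{1}{1-\epsilon_\Psi^{2k}}$ term in the upper bound finite. Even more simply, you may take $\rho^*$ in the image of $\id\otimes\cP_\Phi$; there is no loss since $\cP_\Psi=\cP_\Psi\circ\cP_\Phi$. Then $(\id\otimes\Phi^{2k})(\rho^*)=\rho^*$ exactly, because $\Phi^{2k}\circ\cP_\Phi=\cP_\Phi$.
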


\begin{proof}
    It suffices to prove the case $\im( \cP_{\Psi}^*) \subseteq \im (\cP_{\Phi}^*)$. From Lemma~\ref{lemma:perr<=}, we can write
    \begin{align}
          \limsup_{n \to \infty} - {1 \over n}\log (\perr^{A}(\Phi^{2k}, \Psi^{2k}, n)) &\leq D^{\cb}_{\max}(\Phi^{2k} \|\Phi^{2k} ) \\ 
          &\leq D^{\cb}(\cP_{\Phi} \Vert \cP_{\Psi})+ \log \big(1 + \epsilon_{\Phi}^{2k} \big)  + \log \left( \frac{1}{1 - \epsilon_{\Psi}^{2k}} \right),
     \end{align}     
     where the last bound follows from Theorem~\ref{theorem:GNS}. Similarly, using Lemma~\ref{lemma:perr>=}, we get
     \begin{align}
          \liminf_{n \to \infty} - {1 \over n} \log(\perr^{P}(\Phi^{2k}, \Psi^{2k}, n)) &\geq \sup_{α \in (0, 1)} (1 - α) D_α^{\cb, \reg}(\Phi^{2k}\|\Psi^{2k}) \\
          &\geq D^{\cb}_{\min}(\Phi^{2k} \| \Psi^{2k}) \\
          &\geq D^{\cb}(\cP_{\Phi} \Vert \cP_{\Psi} ) + \log \left( \frac{1}{1 + \epsilon_{\Psi}^{2k}} \right) 
    \end{align}
    where the last bound again follows from Theorem~\ref{theorem:GNS}.
\end{proof}

\section*{Acknowledgements}
SS acknowledges support from the Deutsche Forschungsgemeinschaft (DFG, German Research Foundation) via TRR 352 – Project-ID 470903074. BB acknowledges support from EPSRC UK under grant number EP/Y028732/1.

\section*{Appendices}
\addappheadtotoc
\begin{subappendices}
\renewcommand{\setthesubsection}{\Alph{subsection}}
\subsection{Three-layer decomposition} \label{appen:three-layer}

\begin{lemma} \label{lemma:three-layer}
Let $\mathcal \scA\subseteq \mathcal \scB\subseteq \B{\Hil}$ be finite-dimensional unital $*$-subalgebras. Then, there exist finite-dimensional Hilbert
spaces $A_l$, $B_{k,l}$, $C_k$, and a unitary
\begin{equation}
U:\mathcal H\to \bigoplus_{k,l} A_l\otimes B_{k,l}\otimes C_k
\end{equation}
such that, with $D_k:=\bigoplus_l A_l\otimes B_{k,l}$ and $
E_l:=\bigoplus_k B_{k,l}\otimes C_k,$
one has
\begin{equation}
U \scB U^\dagger
=
\bigoplus_k \B{D_k}\otimes \iden_{C_k},
\qquad
U \scA  U^\dagger
=
\bigoplus_l \B{A_l}\otimes \iden_{E_l}.
\end{equation}
\end{lemma}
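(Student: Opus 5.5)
We construct the decomposition in two passes: first the structure theorem applied to $\scB$, then the structure theorem applied to $\scA$ viewed inside each matrix block of $\scB$. The whole point is to check that the second pass glues across the different $k$-blocks.

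\textbf{Step 1 (decompose $\scB$).} By \cite[Theorem 11.9]{Takesaki1979algebra}, since $\scB$ is a unital $*$-subalgebra of $\B{\Hil}$, there is a unitary
\begin{equation}
U_1:\Hil\to\bigoplus_k D_k\otimes C_k
\end{equation}
with $U_1\scB U_1^\dagger=\bigoplus_k \B{D_k}\otimes\iden_{C_k}$. Let $P_k$ be the minimal central projections of $\scB$.

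\textbf{Step 2 (decompose $\scA$ inside each block).} Since $\scA\subseteq\scB$, every $X\in\scA$ has the form $U_1XU_1^\dagger=\bigoplus_k \pi_k(X)\otimes\iden_{C_k}$, where each $\pi_k:\scA\to\B{D_k}$ is a unital $*$-representation. Write $\scA\cong\bigoplus_l \M{a_l}$ with minimal central projections $Q_l$ and simple summands $\scA_l$. Any unital representation of $\bigoplus_l\M{a_l}$ is unitarily equivalent to $\bigoplus_l \mathrm{id}_{\M{a_l}}\otimes\iden_{B_{k,l}}$ for multiplicity spaces $B_{k,l}$, some of which may be zero-dimensional. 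This holds because a representation of a full matrix algebra is an amplification of the identity, applied to $\pi_k(Q_l)\,D_k$. Hence there are unitaries $u_k: D_k\to\bigoplus_l A_l\otimes B_{k,l}$, with $\dim A_l=a_l$ and $A_l$ independent of $k$, satisfying
\begin{equation}
u_k\pi_k(X)u_k^\dagger=\bigoplus_l x_l\otimes\iden_{B_{k,l}},
\end{equation}
where $X\leftrightarrow(x_l)_l$. Set $U=(\bigoplus_k u_k\otimes\iden_{C_k})U_1$. This unitary preserves the form $\bigoplus_k\B{D_k}\otimes\iden_{C_k}$ of $\scB$ and sends $\scA$ to
\begin{equation}
\Big\{\bigoplus_{k,l} x_l\otimes\iden_{B_{k,l}}\otimes\iden_{C_k}\Big\}.
\end{equation}

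\textbf{Step 3 (regroup).} Reorder the summands as $\bigoplus_l A_l\otimes(\bigoplus_k B_{k,l}\otimes C_k)=\bigoplus_l A_l\otimes E_l$. This is a canonical unitary identification: distribute the tensor product over the direct sum and permute the summands. Under it, the image of $\scA$ becomes exactly $\bigoplus_l\{x_l\otimes\iden_{E_l}\}=\bigoplus_l\B{A_l}\otimes\iden_{E_l}$. Surjectivity onto the full $\B{A_l}$ holds because $x_l$ ranges over all of $\M{a_l}$. We also need to rule out an $l$ for which $B_{k,l}=0$ for every $k$: if that happened, $Q_l$ would map to $0$, contradicting injectivity of $\scA\subseteq\B{\Hil}$. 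So every $E_l\neq0$, and likewise every $C_k\neq0$.

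\textbf{Main obstacle.} The delicate point is Step 2. The identification $\scA\cong\bigoplus_l\M{a_l}$, and hence the spaces $A_l$, must be fixed \emph{once} globally, not chosen separately for each $k$. Only then does the same $x_l$ act on every $k$-block, which is what makes the regrouping in Step 3 yield $\B{A_l}\otimes\iden_{E_l}$ rather than a twisted diagonal subalgebra. We handle this by fixing a system of matrix units $\{e^{(l)}_{ij}\}$ for $\scA$ at the start. Then $B_{k,l}$ is defined as the range of $\pi_k(e^{(l)}_{11})$, and $u_k$ is defined explicitly by $\pi_k(e^{(l)}_{i1})\xi\mapsto \ket{i}\otimes\xi$ for $\xi\in B_{k,l}$. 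The matrix-unit relations then give that $u_k$ is unitary and intertwines $\pi_k$ with $x\mapsto\bigoplus_l x_l\otimes\iden$, uniformly in $k$.
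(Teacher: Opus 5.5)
Your proposal is correct and follows the same two-pass route as the paper: Takesaki's structure theorem for $\scB$, then the structure theorem for the restriction of $\scA$ to each $D_k$, then regrouping the summands into $\bigoplus_l A_l\otimes E_l$. Your global choice of matrix units $\{e^{(l)}_{ij}\}$ also makes explicit the consistency of labels and identifications across the $k$-blocks, which the paper leaves implicit when it applies the theorem separately for each $k$ and says the forms ``follow immediately.''
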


\begin{proof}
Since we are in finite dimensions, both $\mathscr{A}$ and $\mathscr{B}$ are $*$-isomorphic to direct sums of matrix algebras \cite[Theorem 11.2]{Takesaki1979algebra}: $\scB \cong \oplus_{k=1}^K \M{d_k}$ and $\scA \cong \oplus_{l=1}^L \M{a_l}$, both canonically represented on $\Hil$. Hence, there exist Hilbert spaces $D_k,C_k$ with $\dim D_k=d_k$ and a
unitary \cite[Theorem 11.9]{Takesaki1979algebra}
\begin{equation}
U^{(1)}:\mathcal H\to \bigoplus_k D_k\otimes C_k
\end{equation}
such that
\begin{equation}
U^{(1)}\scB (U^{(1)})^\dagger
=
\bigoplus_k \B{D_k}\otimes \iden_{C_k}.
\end{equation}
Since $\scA\subseteq \scB$, the algebra $U^{(1)}\scA(U^{(1)})^\dagger$ preserves each summand $D_k\otimes C_k$ and acts trivially on $C_k$, so for each $k$ it
determines a representation of $\scA$ on $D_k$. Applying \cite[Theorem 11.9]{Takesaki1979algebra} again, for
each $k$ there exist Hilbert spaces $A_l,B_{k,l}$ with $\dim A_l = a_l$ and a unitary
\begin{equation}
U_k^{(2)}:D_k\to \bigoplus_l A_l\otimes B_{k,l}
\end{equation}
such that
\begin{equation}
U_k^{(2)}\bigl(U^{(1)}\mathcal \scA(U^{(1)})^\dagger|_{D_k}\bigr)(U_k^{(2)})^\dagger
=
\bigoplus_l \B{A_l}\otimes 1_{B_{k,l}}.
\end{equation}
Now, set
\begin{equation}
U:=\Bigl(\bigoplus_k (U_k^{(2)}\otimes \iden_{C_k})\Bigr)U^{(1)}.
\end{equation}
Then
\begin{equation}
U\mathcal H
=
\bigoplus_k \Bigl(\bigoplus_l A_l\otimes B_{k,l}\Bigr)\otimes C_k
=
\bigoplus_{k,l} A_l\otimes B_{k,l}\otimes C_k,
\end{equation}
and the asserted forms of $U\mathcal \scA U^\dagger$ and $U\mathcal \scB U^\dagger$ follow
immediately. The equivalent descriptions via $D_k$ and $E_l$ are obtained by regrouping the
same direct sum.
\end{proof}

\begin{remark}
For idempotent channels $\cP,\cQ:\B{\Hil}\to \B{\Hil}$ with $\cP(\iden_{\Hil}), \cQ(\iden_{\Hil})$ of full rank, since both $\im(\cQ^*)$ and $ \im (\cP^*)$ are unital $*$-subalgebras in $\B{\Hil}$ \cite{Lindblad1999fixed}, the inclusion $\im(\cQ^*)\subseteq \im (\cP^*)$ induces a three-layer decomposition of $\Hil$ as obtained in Lemma~\ref{lemma:three-layer}. After fixing a unitary $U$ as in the lemma, define
\begin{equation}
\widetilde \cP:=\operatorname{Ad}_U\circ \cP\circ \operatorname{Ad}_{U^\dagger},
\qquad
\widetilde \cQ:=\operatorname{Ad}_U\circ \cQ\circ \operatorname{Ad}_{U^\dagger},
\end{equation}
where $\operatorname{Ad}_U (\cdot)=U (\cdot) U^{\dagger}$. Since all channel divergences considered in this paper are invariant under unitary
conjugation, one may replace $(\cP,\cQ)$ by $(\widetilde \cP,\widetilde \cQ)$ without changing any of
the quantities of interest. Thus, after this identification, we will always write
\begin{equation}
\Hil =\bigoplus_{k,l} A_l\otimes B_{k,l}\otimes C_k
=\bigoplus_k D_k\otimes C_k
=\bigoplus_l A_l\otimes E_l
\end{equation}
as literal equalities, and
\begin{equation}
\cP=\bigoplus_k \id_{D_k}\otimes \cR_{\delta_k},
\qquad
\cQ=\bigoplus_l \id_{A_l}\otimes \cR_{\omega_l}.
\end{equation}
In this representation, all partial traces are ordinary tensor-factor partial traces, and
\begin{alignat}{2}
    V_k : D_k \otimes C_k &\to \Hil \qquad\quad \,\,\,\, W_l : A_l \otimes E_l &&\to \Hil \\
    V_{l|k}: A_l \otimes B_{k,l} &\to D_k \qquad W_{k|l}:B_{k,l}\otimes C_k &&\to E_l
\end{alignat}
are simply the canonical inclusion isometries associated with the direct-sum decompositions.
\end{remark}

\subsection{Proof of Eq.~\eqref{eq:Dkl-3}} \label{appen:add}

For $\alpha>1$, define
\begin{equation}
    U_\alpha(\cP,\cQ) := \max_k \log\!\Bigl(\sum_l 2^{\widetilde D^{\cb}_\alpha(k,l)}\Bigr).
\end{equation}
For multi-indices $\mathbf k=(k_1,\dots,k_n)$ and $
\mathbf l=(l_1,\dots,l_n),$ set
\begin{equation}   
A_{\mathbf l}:=\bigotimes_{i=1}^n A_{l_i},
\qquad
B_{\mathbf k,\mathbf l}:=\bigotimes_{i=1}^n B_{k_i,l_i},
\qquad
C_{\mathbf k}:=\bigotimes_{i=1}^n C_{k_i},
\qquad
E_{\mathbf l}:=\bigotimes_{i=1}^n E_{l_i},
\end{equation}
so that
\begin{equation}
\mathcal H^{\otimes n}
=
\bigoplus_{\mathbf k,\mathbf l}
A_{\mathbf l}\otimes B_{\mathbf k,\mathbf l}\otimes C_{\mathbf k}.
\end{equation}
Let
\begin{align}
W_{\mathbf k|\mathbf l} &:=
\bigotimes_{i=1}^n W_{k_i|l_i} :
B_{\mathbf k,\mathbf l}\otimes C_{\mathbf k}\to E_{\mathbf l}, \\
W_{\mathbf l} &:= \bigotimes_{i=1}^n W_{l_i} : A_{\mathbf{l}} \otimes E_{\mathbf{l}} \to \Hil^{\otimes n}
\end{align}
and define the corresponding embedding channel $
\iota_{\mathbf k|\mathbf l}(\cdot) :=
W_{\mathbf k|\mathbf l} (\cdot) W_{\mathbf k|\mathbf l}^{\dagger}.$ Then, the restricted tensor-power channels on the subspace $\Pi_{\mathbf{k},\mathbf{l}}\Hil^{\otimes n}=A_{\mathbf l}\otimes B_{\mathbf k,\mathbf l}\otimes C_{\mathbf k}$ are
\begin{align}   
\cW_{\mathbf{l}}^* \circ \cP^{\otimes n}\big|_{\Pi_{\mathbf k,\mathbf l}} &=
(\id_{A_{\mathbf l}}\otimes \iota_{\mathbf k|\mathbf l}) \circ \bigotimes_{i=1}^n
\bigl(\id_{A_{l_i}}\otimes \id_{B_{k_i,l_i}}\otimes \cR_{\delta_{k_i}}\bigr), \\
\cW_{\mathbf{l}}^* \circ \cQ^{\otimes n}\big|_{\Pi_{\mathbf k,\mathbf l}} &=
\bigotimes_{i=1}^n
\bigl(\id_{A_{l_i}}\otimes \cR_{\omega_{l_i}}\bigr),
\end{align}
both viewed as channels into $\B{A_{\mathbf l}\otimes E_{\mathbf l}}$. Denote
\begin{equation}
\widetilde D^{\cb}_\alpha(\mathbf k,\mathbf l)
:=
\widetilde D^{\cb}_\alpha\!\Bigl(
P^{\otimes n}\big|_{\Pi_{\mathbf k,\mathbf l}}
\,\Big\|\,
Q^{\otimes n}\big|_{\Pi_{\mathbf k,\mathbf l}}
\Bigr).
\end{equation}

Since tensoring both channels with the same identity channel does not change
the cb divergence, we may remove the factors
$\id_{A_{l_i}}$ from both sides and obtain
\begin{equation}
\widetilde D^{\cb}_\alpha(\mathbf k,\mathbf l)
=
\widetilde D^{\cb}_\alpha\!\Bigl(
\iota_{\mathbf k|\mathbf l}\circ
\bigotimes_{i=1}^n (\id_{B_{k_i,l_i}}\otimes \cR_{\delta_{k_i}})
\,\Big\|\,
\bigotimes_{i=1}^n \cR_{\omega_{l_i}}
\Bigr).
\end{equation}
Now, Lemma~\ref{lemma:replacer} shows
\begin{equation}
\widetilde D^{\cb}_\alpha(\mathbf k,\mathbf l)
=
\sum_{i=1}^n
\widetilde D^{\cb}_\alpha\!\bigl(
\iota_{k_i|l_i}\circ(\id_{B_{k_i,l_i}}\otimes \cR_{\delta_{k_i}})
\,\big\|\,
\cR_{\omega_{l_i}}
\bigr) =
\sum_{i=1}^n \widetilde D^{\cb}_\alpha(k_i,l_i).
\tag{$*$}
\end{equation}
Therefore,
\begin{align*}
U_\alpha(\cP^{\otimes n},\cQ^{\otimes n}) =
\max_{\mathbf k}
\log\!\Bigl(
\sum_{\mathbf l}
2^{\widetilde D^{\cb}_\alpha(\mathbf k,\mathbf l)}
\Bigr) &=
\max_{\mathbf k}
\log\!\Bigl(
\sum_{l_1,\dots,l_n}
2^{\sum_{i=1}^n \widetilde D^{\cb}_\alpha(k_i,l_i)}
\Bigr)
\qquad\text{by }(*) \\
&=
\max_{\mathbf k}
\log\!\Bigl(
\prod_{i=1}^n
\sum_l 2^{\widetilde D^{\cb}_\alpha(k_i,l)}
\Bigr) \\
&=
\max_{\mathbf k}
\sum_{i=1}^n
\log\!\Bigl(
\sum_l 2^{\widetilde D^{\cb}_\alpha(k_i,l)}
\Bigr) \\
&=
\sum_{i=1}^n
\max_{k_i}
\log\!\Bigl(
\sum_l 2^{\widetilde D^{\cb}_\alpha(k_i,l)}
\Bigr) \\
&=
n\max_k \log\!\Bigl(\sum_l 2^{\widetilde D^{\cb}_\alpha(k,l)}\Bigr) \\
&=
n\,U_\alpha(\cP,\cQ).
\end{align*}

\subsection{Example} \label{appen:example}

Recall that for idempotent channels $\cP,\cQ:\B{\Hil}\to \B{\Hil}$ with $\cP(\iden_{\Hil}), \cQ(\iden_{\Hil})$ of full rank, the inclusion $\im(\cQ^*)\subseteq \im (\cP^*)$ induces a three-layer decomposition of $\Hil$:
\begin{align}
    \Hil &= \oplus_{k=1}^K \oplus_{l=1}^L A_l \otimes B_{k,l} \otimes C_k  \\
    &= \oplus_k \underbrace{(\oplus_l A_l \otimes B_{k,l})}_{:=D_k} \otimes C_k  \\
    &= \oplus_l A_l \otimes \underbrace{(\oplus_k B_{k,l} \otimes C_k)}_{:=E_l},
\end{align}
such that
\begin{align}
    \cP &= \bigoplus_k \id_{D_k} \otimes \cR_{\delta_k}  \\
    \cQ &= \bigoplus_l \id_{A_l}\otimes \cR_{\omega_l} .
\end{align}

Now, let $K=1, L=2$ and 
\begin{equation}
A_1=A_2=\mathbb C,\qquad
B_{1,1}=B_{1,2}=\mathbb C^2,\qquad
C_1=\mathbb C^2,\qquad
\delta_1=\frac12\,1_{C_1}.
\end{equation}
Thus, 
\begin{align}
    \Hil &= \oplus_{l=1}^2  B_{1,l} \otimes C_1  \\
    &= \underbrace{(B_{1,1} \oplus B_{1,2})}_{:=D_1} \otimes C_1  \\
    &= \underbrace{ (B_{1,1} \otimes C_1)}_{:=E_1} \oplus  \underbrace{ (B_{1,2} \otimes C_1)}_{:=E_2}.
\end{align}
Define states $\omega_l\in \cD(E_l)$ in the basis $\{ \ket{0}_B\ket{0}_C, \ket{0}_B\ket{1}_C, \ket{1}_B\ket{0}_C, \ket{1}_B\ket{1}_C \}$ as follows:
\begin{align}
    \omega_1 := \frac18\,\operatorname{diag}(1,2,3,2), \qquad
    \omega_2 :=\frac18\,\operatorname{diag}(2,1,2,3).
\end{align}

Neither $\omega_1$ nor $\omega_2$ is of product form across the
$B_{1,l}$--$C_1$ cut, since for a diagonal product state
$\operatorname{diag}(a,b,c,d)$ one necessarily has $ad=bc$. Take $\alpha=2$, so that $
\beta=\frac{1-\alpha}{\alpha}=-\frac12$. Then,
\begin{equation}
\omega_1^{-1/2}
=
\operatorname{diag}\!\left(2\sqrt2,\ 2,\ 2\sqrt{\frac23},\ 2\right),
\qquad
\omega_2^{-1/2}
=
\operatorname{diag}\!\left(2,\ 2\sqrt2,\ 2,\ 2\sqrt{\frac23}\right).
\end{equation}

\smallskip
\noindent
\emph{Restricted block divergences.} \\
Since $A_l=\mathbb C$, the restricted channels (Eqs.~\eqref{eq:Pkl}, \eqref{eq:Qkl})
are
\begin{align}
    \cW_l^* \circ \cP|_{\Pi_{1,l}}&=\id_{B_{1,l}}\otimes \cR_{\delta_1}, \\
\cW_l^* \circ \cQ|_{\Pi_{1,l}}&=\cR_{\omega_l}.
\end{align}
Let $\ket{v_l}\in B_{1,l}$ be a unit vector with $q_l:= |\langle 0|v_l\rangle|^2$. A direct calculation gives
\begin{align}
\hat{X}_{1,1}^{(v_1)} := \sqrt{\delta_1} \Tr_{B_{1,1}}\!\bigl[(v_1\otimes 1_{C_1})\,\omega_1^{-1/2}\bigr]\sqrt{\delta_1} &=
\begin{pmatrix}
\sqrt2\Bigl(q_1+\frac{1-q_1}{\sqrt3}\Bigr) & 0\\
0 & 1
\end{pmatrix}, \\
\hat{X}_{1,2}^{(v_2)}:=\sqrt{\delta_1}\Tr_{B_{1,2}}\!\bigl[(v_2\otimes 1_{C_1})\,\omega_2^{-1/2}\bigr]\sqrt{\delta_1}
&=
\begin{pmatrix}
1 & 0\\
0 & \sqrt2\Bigl(q_2+\frac{1-q_2}{\sqrt3}\Bigr)
\end{pmatrix}.
\end{align}
Hence, for $\sD_2 (1,l)$ defined in Eq.~\eqref{eq:Dkl-def}, we get
\begin{equation}
2^{\widetilde D_2(1,1)}
= \sup_{v_1 \in \cD(B_{1,1})}
\Tr\!\bigl[(\hat{X}^{(v_1)}_{1,1})^2\bigr]
= \max_{q_1\in [0,1]} 
2\Bigl(q_1+\frac{1-q_1}{\sqrt3}\Bigr)^2+1 =3,
\end{equation}
and similarly
\begin{equation}
2^{\widetilde D_2(1,2)}
= \max_{q_2\in [0,1]}
2\Bigl(q_2+\frac{1-q_2}{\sqrt3}\Bigr)^2+1 = 3,
\end{equation}
with equality at $q_l=1$. Therefore, the right-hand side of Theorem~\ref{theorem:PvsQ} equals
\begin{equation}
\max_k \log\!\Bigl(\sum_l 2^{\widetilde D_2(k,l)}\Bigr)
=
\log(3+3)
=
\log 6.
\end{equation}

\smallskip
\noindent
\emph{The full channel divergence.} \\
Now, let
\begin{equation}
|\psi\rangle=\sqrt p\,|\psi_1\rangle\oplus \sqrt{1-p}\,|\psi_2\rangle\in B_{1,1} \oplus B_{1,2} = D_1,
\end{equation}
where $ 0\le p\le 1$ and $\ket{\psi_l}\in B_{1,l}$ are unit vectors. Let $q_l:= |\langle 0|\psi_l\rangle|^2$. Then, the corresponding operators $X^{(\psi)}_{1,1},X^{(\psi)}_{1,2}$ from Eq.~\eqref{eq:Xkl} are
\begin{align}
    X_{1,1}^{(\psi)} &=
\sqrt{p} \sqrt{\delta_1}
\Tr_{B_{1,1}}\!\bigl[(\psi_1\otimes 1_{C_1})\,\omega_1^{-1/2}\bigr] \sqrt{\delta_1}
=
\begin{pmatrix}
\sqrt{2p} \left(q_1+\frac{1-q_1}{\sqrt3} \right) & 0\\
0 & \sqrt p
\end{pmatrix}, \\
X^{(\psi)}_{1,2} &=
\sqrt{1-p}\sqrt{\delta_1}
\Tr_{B_{1,2}}\!\bigl[(\psi_2\otimes 1_{C_1})\,\omega_2^{-1/2}\bigr] \sqrt{\delta_1}  =
\begin{pmatrix}
\sqrt{1-p} & 0 \\
0 & \sqrt{2(1-p)}\left(q_2+\frac{1-q_2}{\sqrt3}\right)
\end{pmatrix}. \nonumber
\end{align}
Denoting $t(q):= q + (1-q)/\sqrt{3}$, we get (see Eq.~\eqref{eq:Xkl-sum}):
\begin{align*}
2^{\widetilde D_2(\cP\|\cQ)}
&= \sup_{\psi \in \cD (D_1)}
\Tr(X^{(\psi)}_{1,1}+X^{(\psi)}_{1,2})^2 \\
&= \max_{p\in [0,1]} \max_{q_l\in [0,1]}
\bigl(\sqrt{2p}\,t(q_1)+\sqrt{1-p}\bigr)^2
+
\bigl(\sqrt p+\sqrt{2(1-p)}\,t(q_2)\bigr)^2 \\
&= \max_{p\in [0,1]}
\bigl(\sqrt{2p}+\sqrt{1-p}\bigr)^2
+
\bigl(\sqrt p+\sqrt{2(1-p)}\bigr)^2 \\
&=
3+4 \max_{p\in [0,1]}\sqrt{2p(1-p)}
= 3+2\sqrt2.
\end{align*}
Both maxima are attained for $p=\frac12$ and $q_1=q_2=1$, i.e. for
\begin{equation}
|\psi\rangle=\frac{1}{\sqrt2}\Bigl(|0\rangle_{B_{1,1}}\oplus |0\rangle_{B_{1,2}}\Bigr).
\end{equation}
Hence, combining the two computations, we obtain
\begin{equation}
\sD_2(\cP\|\cQ)=\log(3+2\sqrt2)<\log 6
=
\max_k \log\!\Bigl(\sum_l 2^{\sD_2(k,l)}\Bigr).
\end{equation}

\subsection{Auxilliary lemmas} \label{appen:aux}

\begin{lemma}
For any set of real numbers $\{c_i\}_i$, the following holds true:
\begin{equation}
    \max_{\mu_i\geq 0, \sum_i \mu_i =1 } \left( H(\{\mu_i \}_i) + \sum_i \mu_i c_i \right) = \log \left( \sum_i 2^{c_i} \right).
\end{equation}
\end{lemma}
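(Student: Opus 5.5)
The plan is to prove the identity by establishing the upper bound and the lower bound separately. Throughout, $H(\{\mu_i\}_i) = -\sum_i \mu_i \log \mu_i$ with base-$2$ logarithms, consistent with the rest of the paper. Set $Z := \sum_i 2^{c_i}$ and define the Gibbs distribution $g_i := 2^{c_i}/Z$. This is a genuine probability distribution with strictly positive entries, since the $c_i$ are real.

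For the upper bound, fix any probability vector $\{\mu_i\}_i$. I will rewrite the objective in terms of the classical relative entropy to $g$. Since $c_i = \log g_i + \log Z$, we get
\begin{equation*}
H(\{\mu_i\}_i) + \sum_i \mu_i c_i = -\sum_i \mu_i \log \mu_i + \sum_i \mu_i \log g_i + \log Z = \log Z - D(\mu \Vert g).
\end{equation*}
The sum $\sum_i \mu_i \log Z$ collapses to $\log Z$ because $\sum_i \mu_i = 1$. Terms with $\mu_i = 0$ are handled with the convention $0\log 0 = 0$, which is harmless since every $g_i > 0$. Klein's (Gibbs') inequality gives $D(\mu\Vert g) \geq 0$, so the objective is at most $\log Z$ for every $\mu$.

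For the lower bound, i.e.\ attainment, take $\mu = g$. Then $D(g\Vert g) = 0$ and the objective equals exactly $\log Z = \log \sum_i 2^{c_i}$. So the supremum is attained and is a maximum.

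No step here is a genuine obstacle. The only care needed is to keep the logarithm base consistent, so that $2^{c_i}$ pairs correctly with $\log$, and to note that the convention for zero entries of $\mu$ does not break the identity above. An alternative route would be Lagrange multipliers plus concavity of the objective, but the relative-entropy rewriting avoids any boundary analysis on the simplex, so I would use it.
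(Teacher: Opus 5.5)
Your proposal is correct and matches the paper's proof: both set $Z=\sum_i 2^{c_i}$ and $q_i = 2^{c_i}/Z$, rewrite the objective as $\log Z - D(\mu\Vert q)$, and conclude by non-negativity of the classical relative entropy, with equality at $\mu = q$. Your remarks on the logarithm base and the $0\log 0$ convention are appropriate additions but do not change the argument.
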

\begin{proof}
    Let $Z=\sum_i 2^{c_i}$ and $q_i = 2^{c_i}/Z$. Then, for any probability distribution $\{\mu_i\}_i$:
    \begin{align}
        H(\{\mu_i \}_i) + \sum_i \mu_i c_i &= \sum_i -\mu_i \log \mu_i + \sum_i \mu_i \log (q_iZ) \\
        &= \log Z - \sum_i \mu_i \log (\mu_i/q_i) \\
        &= \log Z - D(\mu \Vert q),
    \end{align}
    where $D(\mu \Vert q)\geq 0$ denotes the classical KL-divergence between $\mu$ and $q$. Hence,
    \begin{align}
        H(\{\mu_i \}_i) + \sum_i \mu_i c_i \leq \log Z,
    \end{align}
    and equality holds when $\mu=q$.
\end{proof}

\begin{lemma}
    For any set of non-negative numbers $\{c_i \}_i$, the following holds true:
\end{lemma}
\begin{equation}
    \min_{\mu_i\geq0, \sum_i \mu_i=1}\sum_i \mu_i^2 c_i = \frac{1}{\sum_i 1/c_i}.
\end{equation}
\begin{proof}
    Using Cauchy-Schwarz inequality, we get for any probability distribution $\{\mu_i\}_i$:
\begin{equation}
    1 = \left(\sum_i \mu_i \right)^2 = \left( \sum_i \frac{\mu_i \sqrt{c_i}}{\sqrt{c_i}}  \right)^2 \leq \sum_i \mu_i^2 c_i \sum_i\frac{1}{c_i},
\end{equation}
and the minimum is achieved by choosing $\mu_i \propto 1/c_i$.
\end{proof}

\begin{lemma}
    For any set of non-negative numbers $\{c_l \}_l$ and $\alpha>1$, the following holds true:
\begin{equation}
    \max_{p_l\geq 0, \sum_l p_l=1} \frac{\alpha}{\alpha-1} \log \left( \sum_l p_l^{\frac{1}{\alpha}} c_l^{\frac{\alpha-1}{\alpha}} \right) = \log \left( \sum_l c_l \right).
\end{equation}
\end{lemma}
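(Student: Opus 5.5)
The plan is to prove the identity through the exponentiated objective
\begin{equation}
f(p) := \sum_l p_l^{1/\alpha} c_l^{\frac{\alpha-1}{\alpha}}
\end{equation}
on the probability simplex. Since $\alpha>1$, the map $x\mapsto \frac{\alpha}{\alpha-1}\log x$ is increasing, so it suffices to show that $\max_p f(p) = \left(\sum_l c_l\right)^{\frac{\alpha-1}{\alpha}}$. Raising this to the power $\frac{\alpha}{\alpha-1}$ and taking logarithms then gives the claim. I would dispose of trivial cases at the outset. If all $c_l=0$, both sides equal $-\infty$, with the convention $\log 0=-\infty$. Any indices with $c_l=0$ contribute nothing to either side, so we may restrict to $c_l>0$ (and then $p_l=0$ on the others).

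\emph{Upper bound.} I would apply H\"older's inequality with the conjugate exponents $\alpha$ and $\alpha' = \frac{\alpha}{\alpha-1}$ to the vectors $(p_l^{1/\alpha})_l$ and $(c_l^{(\alpha-1)/\alpha})_l$:
\begin{equation}
\sum_l p_l^{1/\alpha} c_l^{\frac{\alpha-1}{\alpha}} \leq \Bigl(\sum_l p_l\Bigr)^{1/\alpha}\Bigl(\sum_l c_l\Bigr)^{\frac{\alpha-1}{\alpha}} = \Bigl(\sum_l c_l\Bigr)^{\frac{\alpha-1}{\alpha}}.
\end{equation}
Equivalently, one can use concavity of $x\mapsto x^{1/\alpha}$ via Jensen's inequality with weights $c_l/\sum_{l'} c_{l'}$, applied to the points $p_l/c_l$.

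\emph{Achievability.} For the matching lower bound, I would use the equality case of H\"older's inequality, which holds when $p_l \propto c_l$. So I choose $p_l = c_l/\sum_{l'} c_{l'}$. Each term then becomes $c_l \bigl(\sum_{l'} c_{l'}\bigr)^{-1/\alpha}$, and summing over $l$ gives exactly $\bigl(\sum_l c_l\bigr)^{1-1/\alpha}$, as required.

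There is no real obstacle here. The only points needing care are getting the exponent bookkeeping right in the H\"older step and handling the zero entries of $c$ correctly.
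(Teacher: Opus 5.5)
Your proposal is correct and follows the paper's proof essentially verbatim: H\"older's inequality with conjugate exponents $\alpha$ and $\alpha/(\alpha-1)$ gives the upper bound, and the equality case $p_l \propto c_l$ gives achievability. Your treatment of the degenerate case where all $c_l=0$ (there $p_l\propto c_l$ is undefined and both sides equal $-\infty$) is a small point that the paper leaves implicit.
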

\begin{proof}
    Using H\"older's inequality, we get for any probability distribution $\{p_l\}_l$:
    \begin{align}
        \sum_l p_l^{\frac{1}{\alpha}} c_l^{\frac{\alpha-1}{\alpha}} &\leq \left( \sum_l p_l \right)^{\frac{1}{\alpha}} \left(\sum_l c_l \right)^{\frac{\alpha-1}{\alpha}} \\
        &= \left(\sum_l c_l \right)^{\frac{\alpha-1}{\alpha}}.
    \end{align}
    Hence, 
    \begin{equation}
        \frac{\alpha}{\alpha-1} \log \left( \sum_l p_l^{\frac{1}{\alpha}} c_l^{\frac{\alpha-1}{\alpha}} \right) \leq \log \left( \sum_l c_l \right).
    \end{equation}
    Moreover, equality is achieved when $p_l \propto c_l$.
\end{proof}

\end{subappendices}

\bibliographystyle{plainurl}
\bibliography{references, references_zotero}

\end{document}